\documentclass[a4paper,11pt]{article}

\usepackage{graphicx}

\usepackage{fullpage}
\usepackage{libertine}
\usepackage{color}

\usepackage{multirow}

\usepackage{algorithm}
\usepackage[noend]{algpseudocode} 

\usepackage{amsmath,amsfonts,amssymb,amsthm}

\usepackage[breaklinks=true]{hyperref}
\usepackage[svgnames]{xcolor}
\usepackage[capitalise,nameinlink]{cleveref}
\hypersetup{colorlinks={true},linkcolor={DarkBlue},citecolor=[named]{DarkGreen}}

\crefname{figure}{Figure}{Figures}
\crefname{equation}{Equation}{Equations}

\newcommand{\crefthm}[1]{%
  \begingroup
  \crefname{theorem}{Thm.}{Thms.}%
  \Crefname{theorem}{Thm.}{Thms.}%
  \crefname{corollary}{Cor.}{Cors.}%
  \Crefname{corollary}{Cor.}{Cors.}%
  \cref{#1}%
  \endgroup
}

\usepackage{natbib}

\usepackage{subcaption}

\usepackage{tikz}  
\usetikzlibrary{arrows, positioning}
\usetikzlibrary{patterns,snakes}
\usetikzlibrary{decorations.shapes}
\tikzstyle{overbrace text style}=[font=\tiny, above, pos=.5, yshift=5pt]
\tikzstyle{overbrace style}=[decorate,decoration={brace,raise=5pt,amplitude=3pt}]
\usetikzlibrary{shapes.geometric}

\newtheorem{theorem}{Theorem}[section]
\newtheorem{corollary}[theorem]{Corollary}

\newtheorem{lemma}[theorem]{Lemma}

\theoremstyle{definition}
\newtheorem{definition}[theorem]{Definition}
\newtheorem*{comment*}{Comment}

\newcommand{\SC}{\text{SC}}

\newcommand{\bo}{\mathbf{o}}

\newcommand{\bx}{\mathbf{x}}
\newcommand{\favorite}{\text{top}}
\newcommand{\DIST}{\mathtt{DIST}}
\newcommand{\APP}{\mathtt{APP}}
\newcommand{\ORD}{\mathtt{ORD}}
\newcommand{\info}{\mathtt{info}}
\newcommand{\Agents}{N}
\newcommand{\NumberAgents}{n}

\newcommand{\Facilities}{F}
\newcommand{\NumberFacilities}{f}
\newcommand{\capacity}{c}
\newcommand{\calA}{\mathcal{A}}
\newcommand{\matching}{\boldsymbol{\mu}}
\newcommand{\IndividualMatching}{\mu}
\newcommand{\OptMatching}{\mathbf{o}}
\newcommand{\IndividualOpt}{o}

\newcommand{\capOf}[1]{S(#1)}
\newcommand{\agentsOf}[1]{D(#1)} 

\newcommand{\loc}{\ell} 
\newcommand{\App}{A} 

\newcommand{\subsetOfAgents}{B}
\newcommand{\ratio}{r}
\newcommand{\GenAlgorithm}{\textsc{GeneralMatch}}
\newcommand{\ADMAlg}{\textsc{VirtualMinMatch}} 
\newcommand{\AOLAlgTwo}{\textsc{TwoPriorityMatch}} 
\newcommand{\AOLAlgThreeSF}{\textsc{ThreeSameFavorite}} 
\newcommand{\AOLAlgPolarized}{\textsc{MonotoneMatch}} 

\title{\bf Metric Facility Assignment with Partial Information}

\usepackage{authblk}

\author[1]{Vasilis Gkatzelis}
\author[2]{Hasti Karimi}
\author[1]{Emma Rewinski}
\author[3]{\\Maziar Shamsipour}
\author[4,5]{Alexandros A. Voudouris}

\affil[1]{Drexel University, USA}
\affil[2]{University of British Columbia, Canada}
\affil[3]{University of Tehran, Iran}
\affil[4]{University of Essex, UK}
\affil[5]{University of Southern Denmark, Denmark}

\date{}

\begin{document}

\allowdisplaybreaks

\maketitle

\begin{abstract}
   We study an assignment problem where a set of agents and a set of facilities lie on a line metric. The goal is to compute an assignment of agents to facilities to approximately minimize the social cost (the total distance of agents from their assigned facilities) given only partial information regarding the metric. Unlike previous work which focused solely on algorithms with access to the ordinal preferences of the agents over the facilities ($\ORD$), we also consider the value of information regarding approval preferences ($\APP$), and inter-facility distances ($\DIST$). For different combinations of these three information types, we establish tight bounds on the distortion of deterministic algorithms, showing that it is possible to improve over the optimal bound of $3$ that can be achieved using only $\ORD$ information. Among other results, we show a tight bound of $1+\sqrt{2}$ for $\APP+\DIST$ which holds even for general metrics, and a tight bound of $2$ for $\ORD+\APP+\DIST$.
\end{abstract}

\section{Introduction}\label{sec:intro}
Metric facility assignment and matching problems are among the most fundamental ones in economics and computer science due to their important applications in domains such as school choice, task allocation, and resource distribution. In such problems, a set of agents must be assigned to a set of facilities located in an underlying metric space, and the standard objective is to minimize the social cost, that is, the total distance between agents and their assigned facilities. 

A central challenge is that the underlying metric information is often unavailable or prohibitively expensive to elicit. Consequently, the vast majority of studies within the social choice literature has focused on the design of algorithms that operate using only partial information about the metric, predominantly by assuming access to the {\em ordinal preferences} of the agents ($\ORD$). The performance of such ordinal algorithms is measured by the notion of {\em distortion}~\citep{distortion-survey} with respect to an objective function: the worst-case ratio between the objective value of the outcome computed by the algorithm and that of an optimal assignment with full knowledge about the metric. Recent work on the metric one-sided matching problem, where $n$ agents must be matched to $n$ facilities, has established that the best possible distortion of purely ordinal algorithms in terms of the social cost is between $\Omega(\log{n})$ (which holds even for tree metrics) and $O(n^2)$~\citep{anari2023matching}. For the special case of the line metric, the optimal distortion has been shown to be $3$~\citep{Filos-RatsikasG25}.

Although ordinal preferences are easy to elicit, they only capture relative information and ignore potentially useful cardinal structure. In practice, some additional information might be available. For example, the agents may be able to indicate the facilities they find acceptable according to a (relative) threshold, thus yielding {\em approval} information ($\APP$). In addition, the {\em inter-facility distances} ($\DIST$) may be publicly known, or easier to estimate than agent-to-facility distances; this is true in scenarios where the facilities correspond to physical locations. Using $\ORD$ and $\DIST$ information, \citet{anshelevich2021given} showed a best possible distortion bound of $3$ that holds for a quite extensive class of problems (that includes one-sided matching) and for arbitrary metrics. Combinations of $\ORD$, $\DIST$, and $\APP$ were recently considered by \citet{Anshelevich2025approval} for metric single-winner voting, where agents have preferences over a set of candidates, and the goal is to choose one of them as the winner.

In this work, we study a metric facility assignment problem that generalizes one-sided matching and investigate the power of combining these three information types: $\ORD$, $\APP$, and $\DIST$. We focus on deterministic algorithms and primarily the line metric. Our main research question is: 
\begin{center}
  \em  How does the distortion of matching algorithms change\\ under different combinations of these information types?
\end{center}
Our results reveal a nearly complete characterization of the optimal distortion achievable under these informational assumptions, showing that, in some cases, substantial improvements can be achieved relative to the best possible bounds using solely $\ORD$ (or in combination to $\DIST$) information. 

\subsection{Our Contribution}
To be more specific, we study a metric facility assignment problem in which each facility has a fixed capacity such that the total capacity across all facilities equals the number $n$ of agents. Given partial information about the underlying (line) metric, the goal is to compute an assignment of agents to facilities that satisfies their capacity constraints, and approximately minimizes the total social cost (the sum of the distances between agents and their assigned facilities). We consider algorithms that have access to combinations of the three types of partial information mentioned above. The first one is ordinal information ($\ORD$), where each agent reports a ranking of the facilities from the closest to the farthest one. 
The second is approval information ($\APP$), where, for a parameter $\alpha \geq 1$ that can be fine-tuned by the algorithm designer, each agent reports an approval set consisting of every facility whose distance is at most $\alpha$ times the distance to her closest facility. 
The third is inter-facility distance information ($\DIST$), namely the pairwise distances between facilities. 

\renewcommand{\arraystretch}{1.4}
\begin{table}[t]
    \centering
    \begin{tabular}{l|l|l}
                                & $f=2$ & $f \geq 3$  \\\hline
    $(\DIST +) \ORD$            & \multicolumn{2}{|c}{$3$ \citep{Filos-RatsikasG25}}         \\\hline
    $\DIST$, $\APP$             & \multicolumn{2}{|c}{$+\infty$ (\crefthm{thm:distances-or-approvals:unbounded})}  \\\hline
    $\DIST + \APP$              & $2$ (\crefthm{thm:distances+approvals:two-locations:lower,thm:distances+approvals:two-locations:upper}) & $1+\sqrt{2}$$\,^\star$ (\crefthm{thm:distances+approvals:lower,cor:distances+approvals:upper-optimized}) \\\hline
    $\DIST + \APP + \ORD$     &  $2\sqrt{2}-1$ (\crefthm{cor:distances+approvals+ordinal:two-locations:optimized-upper,thm:distances+approvals+ordinal:two-locations:lower}) &  $2$ (\crefthm{thm:distances+approvals+ordinal:three-locations:lower,thm:distances+approvals+ordinal:general-upper})   \\\hline
    \end{tabular}
    \caption{An overview of our tight distortion bounds for the various combinations of information types we consider (rows) in terms of the number $f$ of facilities (columns). All results hold for the line metric with the notable exception of the $1+\sqrt{2}$ bound for $\DIST + \APP$ marked with $^\star$, which holds for arbitrary metrics. In the first row, shown as $(\DIST +) \ORD$, the upper bound follows by an algorithm that uses $\ORD$ information only, whereas the lower bound holds for all algorithms that use $\DIST+\ORD$ information. In the last row, the upper bound of $2\sqrt{2}-1$ for $f=2$ facilities does not require $\DIST$ and is tight even over algorithms that use $\DIST$. The upper bound of $2$ for $f\geq 3$ facilities requires the ordering of the facilities on the line rather than full $\DIST$ information.}
    \label{tab:overview}
\end{table}

Our main contribution is a characterization of the optimal distortion achievable by deterministic algorithms for combinations of these information types, with a sharp dependence on the number of facilities. An overview of our results is given in \cref{tab:overview}. Since a tight distortion bound of $3$ is achievable by algorithms with access to only $\ORD$ information~\citep{Filos-RatsikasG25}, we first consider either $\APP$ or $\DIST$ in isolation, and show a strong impossibility: the distortion is unbounded. We then consider algorithms with access to $\APP+\DIST$ (but not $\ORD$). By carefully adapting to our setting and optimizing (over $\alpha$) the algorithm of \citet{Anshelevich2025approval} for single-winner voting under the same information assumptions, we show an upper bound of $1+\sqrt{2}\approx 2.41$ that holds for arbitrary metrics, and is best possible over all algorithms that use such information, even for the line metric. For instances with two facilities, the same algorithm (but with a different value for the parameter $\alpha$) yields a distortion of $2$, which is optimal within the class of algorithms with the same type of information. 

We next consider the richest informational model $\APP+\DIST+\ORD$ for line metrics. For instances with two facilities, we establish a tight bound of $2\sqrt{2}-1\approx 1.83$. The upper bound follows by optimizing an algorithm that prioritizes the over-demanded facility, and assigns to it the agents that prefer it with priority given to those with smaller approval sets; interestingly, this algorithm does not require knowing the exact distance between the two facilities, yet it is optimal over all algorithms that might use this extra information. Our most technically demanding result is for instances of three or more facilities, where we establish a tight bound of $2$. The upper bound follows by an algorithm which, by appropriately using the ordering of the facilities on the line (induced by the $\DIST$ information), recursively partitions any given instance into either demand-monotone sub-instances that can be solved using an extension of our two-facility algorithm, or three-facility sub-instances where all agents rank the same facility first. We finally observe that using $\DIST$ in combination to $\APP+\ORD$ is necessary to achieve a bound better than $3$ for at least four facilities; otherwise, using only $\APP+\ORD$, there is a lower bound of $3$ showing that this informational model is not stronger than solely using $\ORD$.

\subsection{Related Work}
Our work belongs to the broader literature on distortion in social choice problems, where, as already discussed, the goal is to evaluate the loss incurred by algorithms that make decisions using only partial information, typically of ordinal nature. 
Starting with the work of \citet{procaccia2006distortion}, the distortion framework has been studied extensively for several prominent problems, including 
utilitarian voting settings~\citep{boutilier2015optimal,caragiannis2011voting,caragiannis2017subset,ebadian2024optimized}, 
voting with metric preferences~\citep{anshelevich2018approximating,gkatzelis2020resolving,kempe2022veto,caragiannis2022multiwinner,charikar2022randomized,charikar24breaking}, 
participatory budgeting~\citep{benade2021preference,Bedaywi2025public-participatory}, 
and clustering~\citep{burkhardt2024low}.

The line of work most closely related to ours is that on distortion of metric minimum-cost one-sided matching. The first paper to consider this direction was by~\citet{caragiannis2024augmentation} who focused on strategyproof mechanisms; among other results, they showed that Serial Dictatorship (which is an ordinal algorithm) achieves an exponential distortion in the number $n$ of agents, while its randomized counterpart, Randomized Serial Dictatorship, achieves an expected distortion between $n^{0.29}$ and $n$. Later on, \citet{anshelevich2021given} proved a tight distortion bound of $3$ for ordinal algorithms that have additional access to inter-facility distances. For only ordinal information, \citet{anari2023matching} designed a clustering-based deterministic algorithm with distortion $O(n^2)$ and proved a lower bound of $\Omega(\log{n})$ on the distortion of any ordinal algorithm (including randomized ones) that holds even for tree metrics. More recently, \citet{Filos-RatsikasG25} focused on the line metric and showed that a best possible constant distortion of $3$ can be achieved with only ordinal information for the more general $k$-centrum cost (the sum of the $k$ largest distances) which includes the social cost for $k=n$. \citet{hastings2025fairmetricdistortionmatching} adapted the algorithm of \citep{anari2023matching} and showed distortion bounds for fairness notions (including the $k$-centrum cost) for arbitrary metrics. 

Matching problems with metric utilities rather than costs, where the goal is to match agents to facilities to maximize the total utility (possibly under restrictions such as truthfulness), have also been considered~\citep{Anshelevich2019bipartitematching,Anshelevich2016truthful,anshelevich2016blind}. There is also a large related literature on matching problems without an underlying metric structure, in which agents have (normalized) valuation functions. This includes work on one-sided, two-sided, and several other matching problems using algorithms that have access to ordinal, limited cardinal (obtained via e.g. value queries), randomized, or approval-based preference information~\citep{filos2014RSD,amanatidis2022matching,amanatidis2024dice,ma2021matching,latifian2024approval,ebadian2025bit,Filos-Ratsikas2026stable}. Moving beyond worst-case analysis, recent work has also analyzed the distortion of matching algorithms in stochastically generated instances~\citep{DENG2022104920,GaoZhang19,caragiannis2026distortionpriorindependent}

\section{Preliminaries}\label{sec:prelim}
We consider a {\em facility assignment} problem with a set $\Agents$ of $\NumberAgents \geq 2$ {\em agents} and a set $\Facilities$ of $\NumberFacilities \geq 2$ {\em facilities}. Each facility $x \in \Facilities$ has a {\em capacity} $\capacity_x \in \mathbb{N}$ such that $\sum_{x \in \Facilities} \capacity_x = \NumberAgents$. Agents and facilities are represented by points in an underlying {\em unknown} metric space; that is, the distances between them satisfy the triangle inequality: $d(x,y) \leq d(x,z) + d(z,y)$ for any $x,y,z \in \Agents \cup \Facilities$. While some of our results hold for arbitrary metrics, we primarily focus on the case where the metric is a {\em line}; then, each facility has a location $\loc_x\in \mathbb{R}$. 
An {\em algorithm} takes as input some information $\info(d)$ about the metric space $d$ and computes a feasible {\em assignment}  
$\matching = (\IndividualMatching_i)_i$ of agents to facilities such that, for any facility $x \in \Facilities$, the total number of agents assigned to it equals its capacity, i.e., $|\{i \in \Agents: \IndividualMatching_i = x\}| = c_x$. Observe that each capacity unit of a facility can be viewed as a different resource that is allocated to a unique agent. Hence, an assignment is equivalent to a one-to-one matching between agents and resources, and our facility assignment problem generalizes the {\em one-sided matching} problem. We will use the terms {\em assignment} and {\em matching} interchangeably throughout the remaining of the paper. 

We consider algorithms with access to different combinations of the following types of information: 
\begin{itemize}
    \item $\DIST$: The distances between the facilities in the metric space.
    \item $\APP$: Each agent $i$ reports an {\em $\alpha$-threshold approval set} $A_i$, which consists of any facility $x$ such that $d(i,x) \leq \alpha \cdot d(i,\favorite_i)$, where $\favorite_i$ is the facility closest to agent $i$. We will interchangeably refer to $A_i$ as the approval set or the {\em approval preference} of $i$. 
    \item $\ORD$: Each agent $i$ reports an {\em ordering} $\succ_i$ over the facilities, ranked from the closest to the farthest one. We will sometimes refer to $\succ_i$ as the {\em ordinal preference} of $i$. 
\end{itemize}
The goal is to compute a matching $\matching$ so as to minimize the {\em social cost}, defined as the {\em total} distance of the agents from their matched facilities: 
\begin{align*}
    \SC(\matching) = \sum_{i\in \Agents} d(i,\IndividualMatching_i).
\end{align*} 
Due to the limited information about the metric space that our algorithms have access to, it is impossible to minimize the social cost exactly, and some loss of efficiency is inevitable. The {\em distortion} of an algorithm $\calA$ that uses information $\info(d)$ is the worst-case (over all metric spaces) ratio between the social cost of the computed matching $\calA(\info(d))$ and the minimum social cost over all possible matchings:
\begin{align*}
    \sup_{d} \frac{\SC(\calA(\info(d)))}{\min_{\matching} \SC(\matching)}.
\end{align*}
By definition, the distortion is at least $1$, and the goal is to design algorithms with as small distortion as possible. 

When the metric space is a line, \citet{Filos-RatsikasG25} focused on an optimal matching (i.e. a matching that minimizes the social cost) with a very particular and useful structure: the agents are matched to facilities from left to right according to their true orderings on the line. We state this in the following lemma as it will come in handy in several of our results. 

\begin{lemma}[\citep{Filos-RatsikasG25}]\label{lem:line:optimal}
Given the true ordering of all agents and facilities on the line, there is an optimal matching that greedily matches the leftmost agent to the leftmost facility.
\end{lemma}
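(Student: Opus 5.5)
The plan is an exchange (uncrossing) argument, using the reformulation from the preliminaries that each capacity unit of a facility is a separate resource. So I would view the problem as a perfect matching between $n$ agent points and $n$ resource points on the line, where the $c_x$ resources of facility $x$ all sit at $\loc_x$. Ties in location (several resources at one facility, or agents sharing a point) are broken by a fixed order consistent with the true left-to-right ordering. With this convention, ``greedily matching the leftmost agent to the leftmost facility'' means matching the $k$-th leftmost agent to the $k$-th leftmost resource for every $k$: facility $x$ receives exactly the next $c_x$ agents in left-to-right order.

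The key step is a two-pair exchange inequality. Take agents $i \le j$ (positions on the line) and resources at positions $p \le q$. I would show
\[
|i-p| + |j-q| \;\le\; |i-q| + |j-p|,
\]
so that replacing a crossing pair $\{(i,q),(j,p)\}$ with the non-crossing pair $\{(i,p),(j,q)\}$ never increases the cost. To prove it, write $|a-b| = \int \mathbf{1}[t \text{ lies between } a \text{ and } b]\,dt$ and compare the two sides pointwise in $t$. For a point $t$, let $L_A(t)$ be the number of agents among $i,j$ to the left of $t$, and $L_R(t)$ the number of resources among $p,q$ to the left of $t$. Both sides count the crossings of $t$ by the matched pairs. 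The sorted matching achieves the minimum possible count $|L_A(t)-L_R(t)|$, while any matching must have at least that many crossings. Alternatively, one can just check the handful of relative orderings of the four points directly.

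With this inequality in hand, I would start from an arbitrary optimal matching $\mathbf{o}$. If it is not the sorted one, there exist agents $i<j$ with $o_i > o_j$ as resources. Swapping them does not increase the cost, so the result is still optimal, and it strictly decreases the number of inversions. Repeating terminates in the sorted matching, which is therefore optimal. Because resources of the same facility are interchangeable, this matching is exactly the greedy assignment described in the lemma, and it respects all capacities.

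I expect the main obstacle to be handling the degenerate cases of the exchange inequality cleanly, namely coincident points and ties among resources of one facility. The integral (counting) formulation avoids an explicit case split. Beyond that, the only subtlety is that ``true ordering'' must be fixed with a consistent tie-break; otherwise the argument is routine.
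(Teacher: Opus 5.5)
The paper does not prove this lemma; it imports it from \citet{Filos-RatsikasG25}, so there is no in-paper argument to compare against. Your uncrossing argument is the standard one, and it is correct and complete.

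The exchange inequality is the Monge property of $|a-b|$. The differences $i-p$ and $j-q$ lie between $i-q$ and $j-p$ and have the same sum, so convexity of $|\cdot|$ gives it. Swapping any inverted pair, adjacent or not, strictly lowers the inversion count. The inversion-free matching is exactly the capacity-respecting greedy assignment.

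In the counting version, one step is asserted without justification: that the sorted pairs attain exactly $|L_A(t)-L_R(t)|$ crossings. This holds because two sorted pairs cannot cross $t$ in opposite directions. For example, $i<t\le p$ together with $q<t\le j$ would give $p\le q<t\le p$; the other case is symmetric.

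One point in your favour: your argument shows the sorted matching is optimal for every tie-breaking of co-located points. That is slightly more than the statement says. It is also the form the paper's later counting arguments (such as \cref{eq:same_distribution}) implicitly rely on when co-located agents fall into different groups.
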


\section{Combining Distance and Approval Information}\label{sec:distances+approvals}

In this section we focus on algorithms that can use information about the distances between the facilities ($\DIST$) and approval information about the preferences of the agents over the facilities ($\APP$). We start by showing that using only one of these types of information is not sufficient to achieve bounded distortion. 

\begin{theorem} \label{thm:distances-or-approvals:unbounded}
    The distortion of any algorithm that uses only $\DIST$ information or only $\APP$ information is unbounded. 
\end{theorem}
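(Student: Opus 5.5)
We prove each impossibility separately. In both cases we build two line instances that give the algorithm exactly the same information but whose optimal matchings disagree. Whatever matching the deterministic algorithm outputs is then badly wrong on one of the two instances, with cost ratio $+\infty$ (optimum $0$, algorithm positive), or arbitrarily large if we scale the instance by a parameter $\varepsilon\to 0$. We use two agents $1,2$ and two facilities $x,y$, each of capacity $1$.

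\emph{Only $\DIST$.} Place $x$ at $0$ and $y$ at $1$, so that $d(x,y)=1$ is the only input. An algorithm that sees only $\DIST$ cannot tell the agents apart beyond their labels, so it outputs some fixed matching, say agent $1\mapsto x$ and agent $2\mapsto y$ (the other case is symmetric). We then choose the instance where agent $1$ sits at $1$ and agent $2$ sits at $0$. The optimum matches each agent to its co-located facility and has cost $0$. The algorithm's matching costs $2$, so the ratio is unbounded.

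\emph{Only $\APP$.} Fix the algorithm's $\alpha\ge 1$. We want two instances in which both agents report the same approval set, so the reports are identical. Take $x$ at $0$ and $y$ at $D$ with $D$ large, and put both agents near $x$; then both agents report $A_i=\{x\}$. The algorithm must send one of them, say agent $1$ (by symmetry), to $y$.
\begin{itemize}
\item Instance I: agent $1$ at $0$ and agent $2$ at $\varepsilon>0$. Any feasible matching sends one agent to $y$.
\item Instance II: swap the positions of the two agents.
\end{itemize}
This swap alone only gives a constant ratio, because the difference between the two matchings is $O(\varepsilon)$. So we use a better design in which both agents share the same favourite and an identical approval set, yet the optimum sends a \emph{specific} agent to $y$ at cost $0$.

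Put $x$ at $0$ and $y$ at $1$. Agent $a$ sits at $0$; its distances to the facilities are $0$ and $1$, so its approval set is $\{x\}$ (the threshold is $\alpha\cdot 0=0$). Agent $b$ sits at $t$ with $0<t<1/2$, chosen so that $1-t>\alpha t$, i.e.\ $t<1/(1+\alpha)$; its approval set is also $\{x\}$. The optimum sends $a\mapsto x$ and $b\mapsto y$ at cost $1-t$. The wrong matching costs $t+1$, a ratio of $(1+t)/(1-t)$, which is bounded. So approvals alone need a sharper construction.

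For that construction we use three facilities with capacities $c_x=2$ and $c_y=c_z=1$, hence four agents, and take $\alpha$-approval sets that are all singletons and all equal to $\{x\}$. The target is: two instances, each with a zero-cost optimum, whose optima send different agents elsewhere. A cost of $0$ forces every agent to coincide with its assigned facility. But then an agent at $y$ would approve $y$, not $x$, which breaks the identical-reports requirement. So a literal zero optimum is impossible here, and we instead aim for ratio $\to\infty$ via scaling.

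The idea is to put agents at distance $\varepsilon$ from $y$ and $z$ while their favourite is still $x$. That requires $x$ to sit between them in a general metric; on the line, $x$ would lie between $y$ and $z$ with agents near $x$ on either side. Two agents sit exactly at $x$, agent $p$ is at $-\delta$, and agent $q$ is at $+\delta$, with $y$ at $-1$ and $z$ at $+1$. Every agent approves only $\{x\}$ provided $1-\delta>\alpha\delta$. The optimum sends $p\mapsto y$ and $q\mapsto z$, with cost $2-2\delta$. Swapping $p$ and $q$ costs $2+2\delta$, again a bounded ratio. The expected obstacle is exactly this: with $\APP$ alone, making the reports identical forces every agent to be close to its favourite relative to the other facilities, which tends to keep the ratio bounded. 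The plan is to let the location of $y$ itself differ between the two instances, since $\APP$ reveals nothing about facility positions. In Instance I, $y$ is close to agent $p$'s side; in Instance II, $y$ is on $q$'s side at the mirror position. The reports are unchanged, but the optimum flips. Any fixed labelled matching then pays about $2\cdot d$ where the optimum pays about $\varepsilon$, and taking $\varepsilon\to0$ makes the distortion unbounded.
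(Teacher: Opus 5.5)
Your $\DIST$-only argument is correct and matches the paper's. The $\APP$-only part has a genuine gap. After discarding several attempts yourself, you leave the final construction as a sketch. Worse, the family you commit to, where all agents report the same singleton $A_i=\{x\}$, can never yield unbounded distortion, wherever $y$ and $z$ are placed and however many agents there are.

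To see this, note that $\favorite_i=x$ for all $i$, and write $r_i=d(i,x)$.
\begin{itemize}
\item Any matching $\matching$ has $\SC(\matching)\le \sum_i r_i+\sum_v c_v\,d(x,v)$.
\item An agent that the optimum sends to $v\neq x$ has $v\notin A_i$. So it pays $d(i,v)>\alpha r_i\ge r_i$, and also $d(i,v)\ge\max\{\alpha r_i,\,d(x,v)-r_i\}\ge\frac{\alpha}{1+\alpha}\,d(x,v)$.
\item Hence $\SC(\OptMatching)\ge\max\{\sum_i r_i,\ \frac{\alpha}{1+\alpha}\sum_v c_v\,d(x,v)\}$.
\end{itemize}
Combining these, every matching costs at most $2(1+1/\alpha)\le 4$ times the optimum. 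The scaling $\varepsilon\to 0$ cannot help, because this bound is scale-invariant. Concretely, your closing claim that the optimum pays about $\varepsilon$ while the algorithm pays about $2d$ is self-contradictory. If agent $p$ is within $\varepsilon$ of $y$ and still approves only $x$, then $d(p,x)<\varepsilon/\alpha$, so $d(x,y)<(1+1/\alpha)\varepsilon$. Thus $d$ shrinks together with $\varepsilon$. Moving $y$ between two instances changes nothing, since identical reports already let the adversary choose the placement after seeing the matching.

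The missing idea is to use two groups of agents with \emph{different} approved facilities that are far apart. Then exploit that $\APP$ reveals nothing about where the unapproved facilities are. The paper uses four agents and four unit-capacity facilities $x,y,z,w$:
\begin{itemize}
\item Agents $1,2$ report $\{x\}$ and agents $3,4$ report $\{y\}$.
\item Some agent is sent to $w$; by symmetry, say agent $1$.
\item The adversary places agents $1,2$ and $x$ at $0$, $z$ at $\varepsilon$, $w$ at $1-\varepsilon$, and agents $3,4$ and $y$ at $1$.
\end{itemize}
Every agent is co-located with its approved facility, so its threshold is $\alpha\cdot 0=0$. Its approval set is therefore that singleton for every $\alpha$. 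The optimum costs $2\varepsilon$, while the algorithm pays at least $1-\varepsilon$. Each group has a spare facility next to it, which keeps the optimum vanishing. Meanwhile, the distance between the two approved facilities forces the algorithm's wrong choice to be expensive. With a single common approval set, as shown above, everything collapses toward $x$ at the scale of the optimum.
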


\begin{proof}
First consider an arbitrary algorithm that has access to DIST information but has no information about the preferences of the agents over the facilities. Consider an instance with two unit-capacity facilities $x$ and $y$, and two agents $1$ and $2$. Suppose the algorithm matches agent $1$ to $x$ and agent $2$ to $y$. Then, it might be the case that agent $1$ is co-located with facility $y$, and agent $2$ is co-located with facility $x$. So, the optimal social cost is $0$, but the social cost of the matching computed by the algorithm is positive, thus leading to unbounded distortion. 

We next focus on algorithms that use only $\APP$ information. Let $\varepsilon > 0$ be an infinitesimal.
Consider an instance with $4$ agents and $4$ unit-capacity facilities $\{x,y,w,z\}$. The first two agents approve only of $x$ and the remaining two agents approve only of $y$. Let $\matching$ be the matching computed by the algorithm and, without loss of generality, suppose that agent $1$ is matched to $w$. We place the agents and the facilities on the line as follows:
\begin{itemize}
    \item Agent $1$, agent $2$, and facility $x$ are at $0$;
    \item facility $z$ is at $\varepsilon$;
    \item facility $w$ is at $1-\varepsilon$;
    \item Agent $3$, agent $4$, and facility $y$ are at $1$.
\end{itemize}
Observe that all agents are co-located with the facility they approve, and thus this line metric is consistent to the revealed approval information. The social cost of the matching computed by the algorithm is at least $d(1,w) = 1-\varepsilon$. In contrast, the optimal social cost is $2\varepsilon$ and is achieved by matching $1$ to $x$, $2$ to $z$, $3$ to $w$, and $4$ to $y$. So, the distortion becomes unbounded when $\varepsilon$ tends to $0$.  
\end{proof}

Despite the strong impossibility when using just $\DIST$ or just $\APP$ information, we next show that we can achieve small constant distortion by combining them. In particular, building on the ideas of \citet{Anshelevich2025approval} for single-winner voting under the same informational assumptions, we prove an upper bound of $1+\sqrt{2}$ on the distortion of matching algorithms that use $\DIST$ and $\APP$ information. In particular, given this type of information, our algorithm creates a modified full-information instance in which each agent $i$ is at a {\em virtual distance} $\tilde{d}(i,x) = \min_{y \in A_i} d(y,x)$ from each facility $x$. It then solves the modified minimum-weight matching instance optimally using, e.g., the Hungarian algorithm, and finally outputs the computed matching as a solution for the original instance. See \cref{alg:virtual-min-matching}. 

\begin{algorithm}
\caption{$\alpha$-\ADMAlg}
\label{alg:virtual-min-matching}
\begin{algorithmic}[1]
\Require $\DIST$ and $\APP$ information for some threshold $\alpha$.
\Ensure Matching $\matching$.
\For{each agent $i \in \Agents$ and facility $x \in \Facilities$}
   \State $y_i(x) \gets \arg\min_{y \in A_i} d(y,x)$.
   \State $\tilde{d}(i,x) \gets d(y_i(x),x)$.
\EndFor
\State $\matching \gets \arg\min_{\bx = (x_i)_{i \in \Agents}} \sum_{i \in \Agents} \tilde{d}(i,x_i)$.
\State \Return $\matching$.
\end{algorithmic} 
\end{algorithm}

\begin{theorem} \label{thm:distances+approvals:upper}
    The distortion of {\sc $\alpha$-\ADMAlg} is at most $\max\left\{\alpha, 2 + 1/\alpha\right\}$.
\end{theorem}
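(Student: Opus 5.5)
The plan is to compare the true cost $d$ with the virtual cost $\tilde d$ through two one-sided inequalities, each holding for every agent $i$ and facility $x$. Then I will chain them through the optimality of $\matching$ for $\tilde d$. Let $\bo$ be an optimal matching for $d$.

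\textbf{Step 1 (upper bound on true cost).} For any $i$ and $x$, let $y=y_i(x)\in A_i$. By the triangle inequality and the approval threshold,
\[
d(i,x)\le d(i,y)+d(y,x)\le \alpha\, d(i,\favorite_i)+\tilde d(i,x).
\]

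\textbf{Step 2 (lower bound on true cost).} For any $i$ and $x$, I want to show $\tilde d(i,x)+\beta\, d(i,\favorite_i)\le \gamma\, d(i,x)$ for suitable constants. There are two cases.
\begin{itemize}
\item If $x\in A_i$, then $\tilde d(i,x)=0$ and $d(i,\favorite_i)\le d(i,x)$.
\item If $x\notin A_i$, then $d(i,x)>\alpha\, d(i,\favorite_i)$. Since $\favorite_i\in A_i$, we get $\tilde d(i,x)\le d(\favorite_i,x)\le d(i,\favorite_i)+d(i,x)$.
\end{itemize}

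\textbf{Step 3 (chaining).} Summing over agents and using that $\matching$ minimizes $\tilde d$,
\[
\SC(\matching)\le \alpha\sum_i d(i,\favorite_i)+\sum_i\tilde d(i,\mu_i)\le \alpha\sum_i d(i,\favorite_i)+\sum_i \tilde d(i,o_i).
\]
It therefore suffices to prove, per agent, that $\alpha\, d(i,\favorite_i)+\tilde d(i,o_i)\le \max\{\alpha,2+1/\alpha\}\, d(i,o_i)$. Write $t=d(i,\favorite_i)$ and $D=d(i,o_i)$.
\begin{itemize}
\item If $o_i\in A_i$, the left side is $\alpha t\le \alpha D$.
\item If $o_i\notin A_i$, the left side is at most $\alpha t+t+D=(\alpha+1)t+D$, with $t<D/\alpha$. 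This gives $(\alpha+1)D/\alpha+D=(2+1/\alpha)D$.
\end{itemize}
Together the two cases give the claimed bound.

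\textbf{Expected obstacle.} Step 3 is the crux. It is where the choice of charging $\alpha\, d(i,\favorite_i)$ to the \emph{optimal} partner $o_i$ of agent $i$, rather than to $\mu_i$, makes the per-agent accounting work. The point is that the term $\alpha\, d(i,\favorite_i)$ is independent of the matching. The case analysis itself is routine once this decomposition is set up. I would double-check the second case, where the bound $\tilde d(i,x)\le d(\favorite_i,x)$ relies on $\favorite_i\in A_i$. This holds since $\alpha\ge1$.
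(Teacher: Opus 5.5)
Your proposal is correct and follows essentially the same argument as the paper. The paper also applies the triangle inequality through $y_i(\mu_i)$, replaces $\tilde d(i,\mu_i)$ by $\tilde d(i,o_i)$ using optimality of the matching under $\tilde d$, bounds $d(i,y_i(\mu_i))\le \alpha\, d(i,\favorite_i)$, and then runs the same two-case analysis on whether $o_i\in A_i$. Your Step 2, with its unspecified constants $\beta,\gamma$, is never used and can be dropped, since Step 3 carries out the needed case analysis directly.
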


\begin{proof}
Denote by $\matching=(\IndividualMatching_i)_i$ the matching that is computed by the algorithm, 
and by $\OptMatching=(\IndividualOpt_i)_i$ an optimal matching. 
Since $\matching$ minimizes the social cost with respect to the virtual distances $\tilde{d}$, we have
\begin{align*}
    \sum_i \tilde{d}(i,\IndividualMatching_i) \leq  \sum_i \tilde{d}(i,\IndividualOpt_i) \Leftrightarrow \sum_i d(y_i(\IndividualMatching_i),\IndividualMatching_i) \leq \sum_i d(y_i(\IndividualOpt_i),\IndividualOpt_i).
\end{align*}
Using the  triangle inequality, we can bound the social cost of the algorithm as follows:
\begin{align}
\SC(\matching) = \sum_i d(i,\IndividualMatching_i)  &\leq \sum_i \bigg( d(i,y_i(\IndividualMatching_i)) + d(y_i(\IndividualMatching_i),\IndividualMatching_i) \bigg) \nonumber \\
&\leq \sum_i \bigg( d(i,y_i(\IndividualMatching_i)) +  d(y_i(\IndividualOpt_i),\IndividualOpt_i) \bigg). \label{eq:distances+approvals:upper:main-inequality}
\end{align}
We now make the following observations for the agents depending on whether their optimal facility is in their approval set or not.
\begin{itemize}
    \item Agent $i$ such that $\IndividualOpt_i \in A_i$. Then, $y_i(\IndividualOpt_i) = \IndividualOpt_i$, and thus $d(y_i(\IndividualOpt_i),\IndividualOpt_i) = 0$. Hence, 
    \begin{align*}
        d(i,y_i(\IndividualMatching_i)) +  d(y_i(\IndividualOpt_i),\IndividualOpt_i) 
        &= d(i,y_i(\IndividualMatching_i)) \\
        &\leq \alpha \cdot d(i,\favorite_i)
        \leq \alpha \cdot d(i,\IndividualOpt_i).
    \end{align*}

    \item Agent $i$ such that $\IndividualOpt_i \not\in A_i$. By definition, $y_i(\IndividualOpt_i)$ is the closest facility in $A_i$ to $\IndividualOpt_i$, and thus $d(y_i(\IndividualOpt_i),\IndividualOpt_i) \leq d(\favorite_i,\IndividualOpt_i)$. In addition, since $\IndividualOpt_i \not\in A_i$, $d(i,\IndividualOpt_i) > \alpha \cdot d(i,\favorite_i)$, and thus $d(i,\favorite_i) < \frac{1}{\alpha} \cdot d(i,\IndividualOpt_i)$. Using these, and the triangle inequality, we have
    \begin{align*}
         d(i,y_i(\IndividualMatching_i)) +  d(y_i(\IndividualOpt_i),\IndividualOpt_i) 
         &\leq \alpha \cdot d(i,\favorite_i) + d(\favorite_i,\IndividualOpt_i) \\
         &\leq (1+\alpha) \cdot d(i,\favorite_i) + d(i,\IndividualOpt_i) \\
         &\leq \left(\frac{1}{\alpha} + 1\right) d(i,\IndividualOpt_i) + d(i,\IndividualOpt_i) \\
         &= \left(2 + \frac{1}{\alpha}\right) \cdot d(i,\IndividualOpt_i).
    \end{align*}
\end{itemize}
Consequently, we have that $\SC(\matching) \leq \max\left\{\alpha, 2 + 1/\alpha\right\} \cdot \SC(\OptMatching)$, and the proof is complete. 
\end{proof}

By optimizing over $\alpha$, \cref{thm:distances+approvals:upper} yields the following.

\begin{corollary}\label{cor:distances+approvals:upper-optimized}
    The distortion of {\sc $(1+\sqrt{2})$-\ADMAlg} is at most $1+\sqrt{2} \approx 2.414$.
\end{corollary}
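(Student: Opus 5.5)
The corollary follows from \cref{thm:distances+approvals:upper} by choosing $\alpha$ to minimize $g(\alpha)=\max\{\alpha,\,2+1/\alpha\}$ over $\alpha\ge 1$.

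First, $\alpha$ is increasing in $\alpha$ while $2+1/\alpha$ is decreasing. Hence $g$ is minimized where the two expressions are equal. Setting $\alpha = 2+1/\alpha$ gives $\alpha^2-2\alpha-1=0$, whose positive root is $\alpha = 1+\sqrt{2}$. This root satisfies the requirement $\alpha\ge 1$.

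Second, verify the value at this point. Since $1/(1+\sqrt{2}) = \sqrt{2}-1$, we get $2+1/\alpha = 2+\sqrt{2}-1 = 1+\sqrt{2}$. So both terms equal $1+\sqrt{2}$, and \cref{thm:distances+approvals:upper} gives a distortion of at most $1+\sqrt{2}\approx 2.414$ for {\sc $(1+\sqrt{2})$-\ADMAlg}.

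There is no real obstacle here. The only thing to confirm is that the bound in \cref{thm:distances+approvals:upper} holds for every $\alpha\ge 1$, which is true since its proof imposes no other restriction on $\alpha$. Monotonicity of the two terms also shows this is the best choice of $\alpha$ for this bound.
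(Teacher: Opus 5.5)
Your proposal is correct and matches the paper, which obtains this corollary by optimizing the bound $\max\{\alpha, 2+1/\alpha\}$ of Theorem~\ref{thm:distances+approvals:upper} over $\alpha$. Your choice $\alpha = 1+\sqrt{2}$, which equates the increasing and decreasing terms, and the check via $1/(1+\sqrt{2}) = \sqrt{2}-1$ are exactly that computation.
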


We next show that the $1+\sqrt{2}$ distortion bound is best possible over all algorithms that use $\DIST+\APP$ information. 


\begin{theorem} \label{thm:distances+approvals:lower}
The distortion of any matching algorithm that uses $\DIST+\APP$ information is at least $1+\sqrt{2}$, even when there are $\NumberFacilities=3$ facilities. 
\end{theorem}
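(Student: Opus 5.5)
The plan is to show that for every threshold $\alpha \geq 1$ the algorithm might choose, some instance with $f=3$ facilities on the line forces distortion at least $\min\{\alpha,\ 2+1/\alpha\}$, up to $\varepsilon$-terms. Since $\min_{\alpha\geq 1}\max\{\alpha, 2+1/\alpha\}$ is attained at $\alpha^2 = 2\alpha+1$, i.e.\ $\alpha = 1+\sqrt{2}$, this gives the claimed bound of $1+\sqrt{2}$.

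\textbf{Regime $\alpha \geq 1+\sqrt{2}$ (target: distortion $\alpha$).} Here I would build an instance in which approvals carry almost no information. Take unit-capacity facilities $x$ at $0$ and $y$ at $2$. Add a third facility $z$ very far away, together with an agent co-located with $z$ who approves only $z$; any algorithm not matching this agent to $z$ has unbounded cost, so we may assume it does. Two further agents both report $\{x,y\}$. An agent at position $t$ approves both exactly when $\max\{t,2-t\} \leq \alpha\min\{t,2-t\}$, i.e.\ $t\in[2/(1+\alpha),\,2\alpha/(1+\alpha)]$. The algorithm cannot distinguish the two agents, and by the symmetry of $\DIST$ we may assume it sends agent $1$ to $x$ and agent $2$ to $y$. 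The adversary then places agent $1$ at $2\alpha/(1+\alpha)$ and agent $2$ at $2/(1+\alpha)$. The algorithm pays $4\alpha/(1+\alpha)$, while the swapped matching pays $4/(1+\alpha)$, giving ratio $\alpha$.

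\textbf{Regime $\alpha < 1+\sqrt{2}$ (target: distortion $2+1/\alpha$).} Here I would mimic the tight case in the proof of \cref{thm:distances+approvals:upper}, where an agent $i$ satisfies $o_i\notin A_i$ and the following three things happen simultaneously:
\begin{itemize}
\item $d(i,\favorite_i) \approx d(i,o_i)/\alpha$;
\item the approved facility nearest to $o_i$ is $\favorite_i$, which lies on the far side of $i$ from $o_i$;
\item the algorithm sends $i$ to a facility at distance $\approx\alpha\cdot d(i,\favorite_i)$.
\end{itemize}
Concretely, I would place three facilities with $\DIST$ symmetric about a middle facility and construct a small agent population (possibly with a co-located ``anchor'' agent on one facility to fix capacities) such that two agents report identical approval sets but must be split across the two outer facilities. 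The adversary then uses $\DIST$-symmetry to decide, after seeing the algorithm's choice, on which side of their top facility each agent actually sits. The calculation I would aim for is: the algorithm pays about $(1+\alpha)\,d(i,\favorite_i)+d(i,o_i)$ on the misplaced agent; the optimum pays about $d(i,o_i)$ in total; and the remaining agents cost zero in both matchings.

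\textbf{Main obstacle.} I expect the second construction to be the hard part. The algorithm's mistake must be forced by capacity, not by a bad choice the algorithm could avoid, and every other agent must be placed so that it costs nothing in either matching. My first attempts with two agents sharing the approval set $\{x\}$ only produced bounds such as $1+2/\alpha$, which fall short of $2+1/\alpha$. So I would first try configurations in which the misplaced agent approves two facilities on opposite sides of itself, which is exactly the equality case of the triangle-inequality chain in \cref{thm:distances+approvals:upper}. I would then tune the positions so that all approval constraints hold with $\varepsilon$ slack.
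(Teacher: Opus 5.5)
Your first regime is fine. Two agents reporting $\{x,y\}$, placed at $2\alpha/(1+\alpha)$ and $2/(1+\alpha)$, give ratio exactly $\alpha$ for every $\alpha$; the far facility $z$ with its co-located agent is just padding to reach $f=3$. Your opening sentence should say $\max\{\alpha,2+1/\alpha\}$ rather than $\min$, since a bound of $\min$ would not suffice.

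The gap is the second regime, and it is the real content of the theorem. For $\alpha<1+\sqrt2$ you give no instance, only a list of properties you would like one to have. You also report that your concrete attempts give $1+2/\alpha$. That is at least $1+\sqrt2$ only when $\alpha\le\sqrt2$, so every $\alpha\in(\sqrt2,1+\sqrt2)$ is left uncovered. That range needs a genuinely three-facility construction: for instance, at $\alpha=2$ no two-facility instance can exceed $2$, by Theorem~\ref{thm:distances+approvals:two-locations:upper}.

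Your target $2+1/\alpha$ is also more than you need and may be unattainable. It is the worst case in the analysis of one particular algorithm, not an impossibility result. The instance below gives strictly less than $2+1/\alpha$ throughout $(1,1+\sqrt2)$, yet it suffices, because you only need $1+\sqrt2$.

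The missing idea is an asymmetric instance with a facility that \emph{no agent approves} but that has large capacity. The algorithm is then forced to send many indistinguishable agents there, and the adversary can make this cheap for the optimum and expensive for the algorithm. Concretely:
\begin{itemize}
\item Put $L$ at $-\alpha-\varepsilon$ with capacity $(n-1)/2$, $x$ at $1$ with capacity $1$, and $R$ at $\alpha$ with capacity $(n-1)/2$.
\item Every agent reports $A_i=\{x,R\}$.
\item Agents the algorithm sends to $R$ are placed at $0$. There their top is $x$, $R$ sits exactly at the threshold, and $L$ lies just outside it. So they pay $\alpha$, versus $\alpha+\varepsilon$ at $L$ in the optimum.
\item Agents sent to $x$ or $L$ are placed at $(\alpha^2+1)/(\alpha+1)$. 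There their top is $R$ at distance $(\alpha-1)/(\alpha+1)$, and $x$ sits exactly at the threshold. Each one sent to $L$ pays $(2\alpha^2+\alpha+1)/(\alpha+1)$ instead of $(\alpha-1)/(\alpha+1)$.
\end{itemize}
Letting $n\to\infty$ and $\varepsilon\to 0$, the ratio tends to $(3\alpha^2+2\alpha+1)/(\alpha^2+2\alpha-1)$. Since $(3\alpha^2+2\alpha+1)-(1+\sqrt2)(\alpha^2+2\alpha-1)=(2-\sqrt2)\bigl(\alpha-(1+\sqrt2)\bigr)^2\ge 0$, this single family gives at least $1+\sqrt2$ for every $\alpha\ge1$ with no case split. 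Your first regime then only adds the stronger bound $\alpha$ when $\alpha\ge 1+\sqrt2$.
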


\begin{proof}
Let $\varepsilon > 0$ be an infinitesimal and consider the following instance: 
\begin{itemize}
    \item There is a facility $L$ with capacity $c_L = (n-1)/2$ at $-\alpha-\varepsilon$;
    \item There is a unit-capacity facility $x$ at $1$;
    \item There is a facility $R$ with capacity $c_R = (n-1)/2$ at $\alpha$.
\end{itemize}
All $n$ agents approve $x$ and $R$, that is, $A_i=\{x, R\}$ for any $i \in [n]$. 

\begin{figure}[t]
    \centering
    \begin{tikzpicture}[scale=1]
        \draw[thick] (-4,0) -- (4,0);

        \draw[thick,fill=black] (-4,0) circle (0.08);
        \node[below] at (-4, -0.1) {$-(\alpha+\varepsilon)$};
        \node[above] at (-4, 0.1) {$L$};

        \draw[thick] (0,0.1) -- (0, -0.1) node[below] {$0$};
        \node[above] at (0, 0.1) {$N_L$};

        \draw[thick,fill=black] (1.6,0) circle (0.08);
        \node[below] at (1.6, -0.1) {$1$};
        \node[above] at (1.6, 0.1) {$x$};

        \draw[thick] (3,0.1) -- (3, -0.1) node[below] {$\frac{\alpha^2+1}{\alpha+1}$};
        \node[above] at (3, 0.1) {$N_R$};

        \draw[thick,fill=black] (4,0) circle (0.08);
        \node[below] at (4, -0.1) {$\alpha$};
        \node[above] at (4, 0.1) {$R$};
    \end{tikzpicture}
\caption{The metric space considered in the proof of \cref{thm:distances+approvals:lower}.}
    \label{fig:distances+approvals:lower}
\end{figure}

Consider any matching $\matching$ that might be computed by an arbitrary algorithm. We place the agents on the line as follows (see \cref{fig:distances+approvals:lower}).
\begin{itemize}
    \item The agents that are matched to $R$ are located at $0$; let $N_L$ be the set of these agents. 
    Observe that for any $i\in N_L$, we have that
    \begin{align*}
    & d(i,L) = \alpha+\varepsilon \\
    & d(i,x) = 1; \\
    & d(i,R) = \alpha.
    \end{align*}
    Hence, this position of $i$ is consistent to her approval set. 
    
    \item The agents that are matched to $x$ and $L$ are located at $\frac{\alpha^2+1}{\alpha+1}$; let $N_R$ be the set of these agents. For any $i \in N_R$, we have that
    \begin{align*}
    & d(i,L) = \frac{2\alpha^2+\alpha+1}{\alpha+1}+\delta, \text{, where $\delta$ is some infinitesimal}; \\
    & d(i,x)= \alpha \cdot \frac{\alpha-1}{\alpha+1}; \\
    & d(i,R)= \frac{\alpha-1}{\alpha+1}.
    \end{align*}
    Hence, this position of $i$ is consistent to her approval set. 
\end{itemize}
The social cost of the computed matching is 
\begin{align*}
    \SC(\matching) 
    =\frac{n-1}{2} \cdot \alpha + \alpha \cdot \frac{\alpha-1}{\alpha+1} + \frac{n-1}{2} \cdot \left( \frac{2\alpha^2+\alpha+1}{\alpha+1}+\delta \right).
\end{align*}
On the other hand, the optimal matching $\OptMatching$ is to match the $(n-1)/2$ the agents in $N_L$ to $L$, one of the agents in $N_R$ to $x$, and the remaining $(n-1)/2$ agents in $N_R$ to $R$. The optimal social cost is 
\begin{align*}
    \SC(\OptMatching) = \frac{n-1}{2} \cdot (\alpha+\varepsilon) + \alpha \cdot \frac{\alpha-1}{\alpha+1} + \frac{n-1}{2} \cdot \frac{\alpha-1}{\alpha+1}.
\end{align*}
As $\varepsilon$ tends to $0$ and $n$ tends to infinity, the distortion tends to 
\begin{align*}
    \frac{\alpha + \frac{2\alpha^2+\alpha+1}{\alpha+1}}{ \alpha + \frac{\alpha-1}{\alpha+1}} = \frac{3\alpha^2 + 2\alpha+1}{\alpha^2 + 2\alpha-1}.
\end{align*}
This function attains its minimum value of $1+\sqrt{2}$ for $\alpha=1+\sqrt{2}$, and the proof is complete. 
\end{proof}

At this point we can make two crucial observations. First, the analysis in the proof of \cref{thm:distances+approvals:upper} does not require the agents and the facilities to be located on a line. Therefore, that proof works even for arbitrary metric spaces, and the algorithm achieves a distortion bound of $1+\sqrt{2}$ more generally. Second, this bound is tight even when the metric is a line, and the example used in the proof of \cref{thm:distances+approvals:lower} to show this requires three facilities. This leaves open the possibility that an improved distortion bound can be achieved for the case of two facilities. We show that this is indeed the case.

\begin{theorem} \label{thm:distances+approvals:two-locations:upper}
For instances with $\NumberFacilities=2$ facilities, {\sc $2$-\ADMAlg}   achieves a distortion of at most $2$.
\end{theorem}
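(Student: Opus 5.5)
The plan is to exploit the very special form that the virtual distances take when $\NumberFacilities=2$ and $\alpha=2$. Let the facilities be $x$ and $y$ with $d(x,y)=D$. Every agent approves either $\{x\}$, $\{y\}$, or $\{x,y\}$. Then $\tilde d(i,z)=0$ if $z\in A_i$ and $\tilde d(i,z)=D$ otherwise. So {\sc $2$-\ADMAlg} simply minimizes the number of agents matched to a facility they do not approve. Any agent $i$ matched to an approved facility satisfies $d(i,\IndividualMatching_i)\le 2\,d(i,\favorite_i)\le 2\,d(i,\IndividualOpt_i)$. Hence only \emph{violating} agents (matched to an unapproved facility) need extra care; in the proof of \cref{thm:distances+approvals:upper} these are exactly the agents that cost the factor $2+1/\alpha=5/2$. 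I would replace the per-agent bound for them by a pairing (charging) argument.

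\textbf{Case analysis.} Let $X$ be the set of agents with $A_i=\{x\}$ and $Y$ the set with $A_i=\{y\}$. Since $c_x+c_y=\NumberAgents$, we cannot have both $|X|>c_x$ and $|Y|>c_y$. If $|X|\le c_x$ and $|Y|\le c_y$, the algorithm has zero virtual cost: every agent gets an approved facility and the distortion is at most $2$. By symmetry it remains to treat $|X|=c_x+k$ with $k\ge1$. The minimum number of violations is then $k$. The algorithm sends all of $Y$ and all doubly-approving agents to $y$, and sends exactly $k$ agents of $X$ (a set $S$, chosen adversarially) to $y$. In any optimal matching $\OptMatching$, facility $x$ receives only $c_x$ agents, so at least $k$ agents of $X$ are matched to $y$; let $T\subseteq X$ be this set, with $|T|\ge k$.

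\textbf{Charging.} I fix an injection $\pi: S\setminus T\to T\setminus S$, which exists since $|S|\le|T|$. For $s\in S\setminus T$ with $t=\pi(s)$, the algorithm pays $d(s,y)+d(t,x)$ and $\OptMatching$ pays $d(s,x)+d(t,y)$. I would show
\begin{align*}
d(s,y)+d(t,x) \le d(s,x)+2\,d(t,y),
\end{align*}
using three facts:
\begin{itemize}
\item the triangle inequality gives $d(s,y)\le d(s,x)+D$;
\item since $t\in X$, we have $d(t,y)>2d(t,x)$, i.e.\ $d(t,x)<\tfrac12 d(t,y)$;
\item combining this with $d(t,x)+d(t,y)\ge D$ gives $D<\tfrac32 d(t,y)$.
\end{itemize}
Summing, the left side is less than $d(s,x)+\tfrac32 d(t,y)+\tfrac12 d(t,y)$, which yields the inequality. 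Notably this step only uses the triangle inequality, not the line.

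\textbf{Conclusion.} Agents in $S\cap T$ are matched to $y$ in both matchings, so they cost the same. Every remaining agent (not in $S$ and not in the image of $\pi$) is matched by the algorithm to an approved facility, so its cost is at most $2\,d(i,\IndividualOpt_i)$. Summing over pairs and singletons gives $\SC(\matching)\le 2\,\SC(\OptMatching)$.

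The main obstacle is making sure the bookkeeping is airtight. The agents $t\in T\setminus S$ are counted inside the pair bound and must not also receive their per-agent bound. The key point is that the injection only needs $|S\setminus T|\le|T\setminus S|$, which follows from $|S|=k\le|T|$. I would also double-check the degenerate situations $D=0$ and agents at distance $0$ from a facility, where the approval sets become trivial.
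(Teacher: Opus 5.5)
Your proof is correct, but it takes a different route from the paper's. The paper never characterizes the algorithm's output. It reuses inequality \eqref{eq:distances+approvals:upper:main-inequality} from the proof of \cref{thm:distances+approvals:upper}, namely $\SC(\matching)\le\sum_i \bigl(d(i,y_i(\IndividualMatching_i))+d(y_i(\IndividualOpt_i),\IndividualOpt_i)\bigr)$, which follows from the optimality of $\matching$ for the virtual costs. It then bounds each summand by $2\,d(i,\IndividualOpt_i)$. The only new observation is that when $\IndividualOpt_i\notin A_i$ and there are two facilities, $A_i=\{\favorite_i\}$. The first term is therefore $d(i,\favorite_i)$ rather than $\alpha\, d(i,\favorite_i)$, which gives $(1+2/\alpha)\,d(i,\IndividualOpt_i)=2\,d(i,\IndividualOpt_i)$.

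Your injection $\pi$ makes explicit what that aggregate inequality does implicitly. The $D$ paid by an agent $s$ whom the algorithm sends to an unapproved facility is shifted onto an agent $t$ whom the optimum sends to an unapproved facility. Summing the paper's per-agent bounds for $s$ and $t$ gives exactly your pair inequality $d(s,y)+d(t,x)\le d(s,x)+2\,d(t,y)$. Relatedly, your remark that in \cref{thm:distances+approvals:upper} the factor $2+1/\alpha$ is paid by agents matched to unapproved facilities is slightly off. It is paid by agents with $\IndividualOpt_i\notin A_i$. This does not affect your argument.

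The paper's route is shorter. It needs no case split on $|X|$ versus $c_x$, no argument that the algorithm sends all of $Y$ and the doubly-approving agents to $y$, and no separate care for $D=0$. Your route gives a concrete description of what {\sc $2$-\ADMAlg} does for $f=2$: it minimizes the number of unapproved assignments. It also localizes the whole excess cost to swapped pairs inside $X$. This explains why the second construction in \cref{thm:distances+approvals:two-locations:lower} (two agents approving only $x$), instantiated at $\alpha=2$, is tight: it is a single such pair, and your pair inequality holds with equality in the limit.
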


\begin{proof}
Let $\alpha = 2$.
Following the proof of \cref{thm:distances+approvals:upper}, we want to bound \cref{eq:distances+approvals:upper:main-inequality}:
\begin{align*}
\SC(\matching) \leq \sum_i \bigg( d(i,y_i(\IndividualMatching_i)) +  d(y_i(\IndividualOpt_i),\IndividualOpt_i) \bigg). 
\end{align*}
We again consider two cases depending on whether the optimal facility of an agent is in her approval set or not. 
\begin{itemize}
    \item Agent $i$ such that $\IndividualOpt_i \in A_i$. 
    Then, $y_i(\IndividualOpt_i) = \IndividualOpt_i$, and thus $d(y_i(\IndividualOpt_i),\IndividualOpt_i) = 0$. Hence, 
    \begin{align*}
        d(i,y_i(\IndividualMatching_i)) +  d(y_i(\IndividualOpt_i),\IndividualOpt_i) 
        &= d(i,y_i(\IndividualMatching_i)) \\
        &\leq \alpha \cdot d(i,\favorite_i)
        \leq \alpha \cdot d(i,\IndividualOpt_i) 
        = 2 \cdot d(i,\IndividualOpt_i).
    \end{align*}

    \item Agent $i$ such that $\IndividualOpt_i \not\in A_i$. 
    Since $|A_i| \leq 2$, then it must be the case that $|A_i| = 1$, and thus $y_i(\IndividualMatching_i)) = y_i(\IndividualOpt_i)) = \favorite_i$. Since $\IndividualOpt_i \not\in A_i$, we also have that $d(i,\IndividualOpt_i) > \alpha \cdot d(i,\favorite_i)$, and thus $d(i,\favorite_i) < \frac{1}{\alpha} \cdot d(i,\IndividualOpt_i)$. Using these, and the triangle inequality, we have
    \begin{align*}
         d(i,y_i(\IndividualMatching_i)) +  d(y_i(\IndividualOpt_i),\IndividualOpt_i) 
         &=  d(i,\favorite_i) + d(\favorite_i,\IndividualOpt_i) \\
         &\leq 2 \cdot d(i,\favorite_i) + d(i,\IndividualOpt_i) \\
         &\leq \frac{2}{\alpha}\cdot d(i,\IndividualOpt_i) + d(i,\IndividualOpt_i) \\
         &= 2\cdot d(i,\IndividualOpt_i). 
    \end{align*}
\end{itemize}
Consequently, we have that $\SC(\matching) \leq 2 \cdot \SC(\OptMatching)$, and the proof is complete. 
\end{proof}

\begin{theorem} \label{thm:distances+approvals:two-locations:lower}
The distortion of any algorithm that uses $\DIST+\APP$ information is at least $2$, even when there are $\NumberFacilities=2$ facilities. 
\end{theorem}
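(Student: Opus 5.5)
The plan is to handle an arbitrary threshold $\alpha \geq 1$ with two small instances, each with two agents and two unit-capacity facilities $x$ at $0$ and $y$ at $1$. The first instance will force distortion (almost) $\alpha$ and the second (almost) $1+2/\alpha$. Since $\max\{\alpha, 1+2/\alpha\}$ is minimized at $\alpha=2$ with value $2$, every choice of $\alpha$ gives distortion at least $2$. The $\DIST$ information is fixed, namely $d(x,y)=1$. In both instances the two agents report identical approval sets, so the algorithm cannot tell them apart. Whatever matching it outputs, we may rename the agents so that agent $1$ is matched to $x$ and agent $2$ to $y$. We then place the agents adversarially, consistently with their reports.

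First instance (distortion close to $\alpha$): both agents report $A_i=\{x,y\}$. Place agent $1$ at $1/2+t$ and agent $2$ at $1/2-t$ for some $t\in[0,1/2)$. Both approval sets are consistent exactly when the ratio of farther to closer facility distance, $(1/2+t)/(1/2-t)=(1+2t)/(1-2t)$, is at most $\alpha$. The algorithm's cost is $2(1/2+t)=1+2t$, while the swapped matching costs $1-2t$. Choosing $t$ with $(1+2t)/(1-2t)=\alpha$ gives distortion at least $\alpha$.

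Second instance (distortion close to $1+2/\alpha$): both agents report $A_i=\{x\}$. Place agent $2$ at $0$, which is co-located with $x$ and so consistent. Place agent $1$ at $p\in(0,1/(1+\alpha))$; there $d(1,y)=1-p>\alpha p=\alpha\, d(1,x)$, so $y$ is correctly not approved. The algorithm sends agent $2$ to $y$ and agent $1$ to $x$, at cost $1+p$. The optimal matching sends agent $2$ to $x$ and agent $1$ to $y$, at cost $1-p$. Letting $p\to 1/(1+\alpha)$, the ratio $(1+p)/(1-p)$ tends to $(\alpha+2)/\alpha = 1+2/\alpha$.

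Combining the two instances, the distortion of any $\DIST+\APP$ algorithm using threshold $\alpha$ is at least $\max\{\alpha,1+2/\alpha\}$ (up to the limit in the second instance). This is at least $\alpha\ge 2$ when $\alpha\geq 2$, and at least $1+2/\alpha> 2$ when $\alpha<2$.

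The main obstacle is the consistency check of the adversarial placements: agent $1$ must fall just inside the approval threshold in the first instance and just outside it in the second. One must also handle the supremum properly, since the strict inequality in the second instance means the bound is approached rather than attained. This suffices for a lower bound on distortion, which is defined as a supremum.
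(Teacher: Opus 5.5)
Your proof is correct and follows essentially the same route as the paper. It uses the same case split at $\alpha=2$ and the identical second instance: both agents approve only $x$, one is co-located with $x$, and the other sits just inside $1/(1+\alpha)$. Its first instance works the same way too: both agents approve both facilities, sit exactly at the approval threshold, and are each matched to their farther facility. The only cosmetic difference is that you place agent $1$ symmetrically at $\alpha/(\alpha+1)$, whereas the paper places it at $\alpha/(\alpha-1)$ to the right of $y$. Both placements give ratio exactly $\alpha$, and yours has the minor advantage of being valid for every $\alpha\geq 1$.
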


\begin{proof}
For any $\alpha \geq 2$, consider an instance with a unit-capacity facility $x$ located at $0$, a unit-capacity facility $y$ located at $1$, and two agents that approve both facilities. 
Without loss of generality, suppose that the algorithm outputs the matching $\matching = (x,y)$, i.e., agent $1$ is matched to $x$ and agent $2$ is matched to $y$. We then place the agents on the line as follows (see \cref{fig:distances+approvals:two-locations:lower:a}):
\begin{itemize}
    \item Agent $1$ is located at $\alpha/(\alpha-1)$. Since $d(1,y) = \alpha/(\alpha-1) - 1 = 1/(\alpha-1)$ and $d(1,x) = \alpha/(\alpha-1)$, this location is consistent with $A_1 = \{x,y\}$.

    \item Agent $2$ is located at $1/(\alpha+1)$. Since $d(2,x) = 1/(\alpha+1)$  and $d(2,y) = 1 - 1/(\alpha+1) = \alpha/(\alpha+1)$, this location is consistent with $A_2 = \{x,y\}$.
\end{itemize}
The social cost of the algorithm is 
$$\SC(\matching) = d(1,x) + d(2,y) = \alpha \cdot \bigg(\frac{1}{\alpha-1} + \frac{1}{\alpha+1}\bigg).$$ 
On the other hand, the optimal matching is $\OptMatching = (y,x)$, i.e., agent $1$ is matched to $y$ and agent $2$ is matched to $x$. 
The optimal social cost is 
$$\SC(\OptMatching) = d(1,y) + d(2,x) = \frac{1}{\alpha-1} + \frac{1}{\alpha+1},$$ 
leading to a distortion of $\alpha \geq 2$.

\begin{figure}[t]
\centering
\begin{subfigure}[t]{0.45\linewidth}
\centering
\begin{tikzpicture}[scale=1]
  \draw[thick] (-1.5,0) -- (3,0);

  \draw[thick,fill=black] (-1.5,0) circle (0.08);
  \node[below] at (-1.5,-0.2) {$0$};
  \node[above] at (-1.5,0.2) {$x$};

  \draw[thick,fill=black] (3,0) circle (0.08);
  \node[below] at (3,-0.2) {$1$};
  \node[above] at (3,0.2) {$y$};

  \draw[thick] (1.5,0.1) -- (1.5,-0.1);
  \node[below] at (1.5,-0.2) {$\frac{\alpha}{\alpha+1}$};
  \node[above] at (1.5,0.25) {$1$};

  \draw[thick] (0,0.1) -- (0,-0.1);
  \node[below] at (0,-0.2) {$\frac{1}{\alpha-1}$};
  \node[above] at (0,0.25) {$2$};
\end{tikzpicture}
\caption{}
\label{fig:distances+approvals:two-locations:lower:a}
\end{subfigure}
\begin{subfigure}[t]{0.45\linewidth}
\centering
\begin{tikzpicture}[scale=1]
  \draw[thick] (-1.5,0) -- (3,0);

  \draw[thick,fill=black] (-1.5,0) circle (0.08);
  \node[below] at (-1.5,-0.2) {$0$};
  \node[above] at (-1.5,0.2) {$x,2$};

  \draw[thick,fill=black] (3,0) circle (0.08);
  \node[below] at (3,-0.2) {$1$};
  \node[above] at (3,0.2) {$y$};

  \draw[thick] (0,0.1) -- (0,-0.1);
  \node[below] at (0,-0.2) {$\frac{1}{\alpha-1}-\varepsilon$};
  \node[above] at (0,0.25) {$1$};
\end{tikzpicture}
\caption{}
\label{fig:distances+approvals:two-locations:lower:b}
\end{subfigure}
\caption{The two instances used in the proof of \cref{thm:distances+approvals:two-locations:lower}.}
\label{fig:distances+approvals:two-locations:lower}
\end{figure}

For any $\alpha < 2$, consider an instance with a unit-capacity facility $x$ located at $0$, a unit-capacity facility $y$ located at $1$, and two agents that approve only $x$.
Without loss of generality, suppose that the algorithm outputs the matching $\matching = (x,y)$, i.e., agent $1$ is matched to $x$ and agent $2$ is matched to $y$. We place the agents on the line as follows (see \cref{fig:distances+approvals:two-locations:lower:b}): 
\begin{itemize}
\item Agent $1$ is located at $1/(\alpha+1)-\varepsilon$, for some infinitesimal $\varepsilon > 0$. Note that $d(1,x) = 1/(\alpha+1)-\varepsilon$ and $d(1,y) = \alpha/(\alpha+1)+\varepsilon$; hence, this location is consistent with $A_1 = \{x\}$.
\item Agent $2$ is located at $0$. Note that $d(2,x) = 0$  and $d(2,y) = 1$; hence, this location is consistent with $A_2 = \{x\}$.
\end{itemize}
The social cost of the algorithm is 
$$\SC(\matching) = d(1,x) + d(2,y) = \frac{1}{\alpha+1}-\varepsilon + 1  = \frac{\alpha+2}{\alpha+1} - \varepsilon.$$ 
On the other hand, the optimal matching $\OptMatching$ is to match agent $1$ to $y$ and agent $2$ to $x$ for a social cost of 
$$\SC(\OptMatching) = d(1,y) + d(2,x) = \frac{\alpha}{\alpha+1}+\varepsilon.$$ 
As $\varepsilon$ tends to $0$, since $\alpha < 2$, the distortion is $1 + 2/\alpha > 2$.
\end{proof}

\section{Two Facilities with Distance, Approval, and Ordinal Information}
\label{sec:distances+approvals+ordinal:two}
We now switch to the richer space of using all three information types ($\DIST$, $\APP$ and $\ORD$). 
We first consider instances with two facilities, for which we establish a tight bound of $2\sqrt{2}-1\approx 1.83$. In fact our result is stronger: The upper bound follows by an algorithm that uses only $\APP+\ORD$ information, whereas the lower bound is tight over all algorithms that use $\DIST+\APP+\ORD$ information. 

We start with the upper bound. 
Let $x$ and $y$ be the two facilities with capacities $c_x$ and $c_y$. Our algorithm partitions the agents according to their ordinal and approval preferences as follows; note that information about the distance between the two facilities is not required. For any facility $z \in \{x,y\}$, let $n_z$ be the number of agents that prefer it over the other facility, and, for $k \in [2]$, let 
\begin{align*}
    \Agents_z^k = \{i \in N : \favorite_i = z \text{ and } |A_i|=k  \}.
\end{align*}
Clearly, we have $n_z = |\Agents_z^1|+|\Agents_z^2|$.
The algorithm starts by matching the agents to the facilities in a specific order which depends on the facility preferred by most agents and the partition of the agents into the aforementioned sets. Let $w$ be the {\em over-demanded} facility out of the two, that is, $w \in \arg\max_{z\in\{x,y\}}\{n_z - c_z\}$; note that, since $c_x + c_y = n$, one of the two facilities must be (weakly) over-demanded. 
The agents are assigned to facility $w$ with priority given to the agents of $\Agents_{w}^1$, followed by the agents of $\Agents_{w}^2$. When the capacity of $w$ is depleted (which, by definition, happens before the agents in $\Agents_{w}^1 \cup \Agents_{w}^2$ are depleted), the remaining agents are assigned to the other facility.
See \cref{alg:DAO-2locations}.

\begin{algorithm}
\caption{{\sc $\alpha$-\AOLAlgTwo}}
\label{alg:DAO-2locations}
\begin{algorithmic}[1]
\Require $\APP$ with threshold $\alpha$, and $\ORD$ information; facilities $x$ and $y$ with capacities $\capacity_x$ and $\capacity_y$.
\Ensure Matching $\matching$.
\State Partition the agents in $\Agents_x^1, \Agents_x^2, \Agents_y^1, \Agents_y^2$.
\State $n_x \gets |\Agents_x^1 \cup \Agents_x^2,|$ and $n_y \gets |\Agents_y^1 \cup \Agents_y^2|$.
\State $w \gets \arg\max_{z\in\{x,y\}}\{n_z - c_z\}$ and $z \gets \{x,y\}\setminus\{w\}$.
\State Define agent ordering $\pi_\Agents \gets \Agents_{w}^1, \Agents_{w}^2, \Agents_{z}^1, \Agents_{z}^2$.
\State Define facility ordering $\pi_\Facilities \gets w \succ z$.
\For{each agent $i \in \pi_\Agents$}
   \State $\IndividualMatching_i \gets $ next non-depleted facility in $\pi_\Facilities$.
\EndFor
\State \Return $\matching = (\IndividualMatching_i)_{i \in N}$.
\end{algorithmic} 
\end{algorithm}

\begin{theorem}\label{thm:distances+approvals+ordinal:two-locations:parameterized-upper}
The distortion of {\sc $\alpha$-\AOLAlgTwo} is at most $\max\left\{ 1 + \frac{2}{\alpha}, \frac{3\alpha-1}{\alpha+1} \right\}$.
\end{theorem}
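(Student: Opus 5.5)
We normalize the line so that $w$ is at $0$ and the other facility $z$ is at $1$. Reflection and scaling let us assume this without loss of generality, since the algorithm never uses $\DIST$. Under this normalization, an agent $i$ at position $p_i$ prefers $w$ exactly when $p_i \le 1/2$. The approval condition then becomes explicit. For $p_i\in[0,1/2]$ we have $|A_i|=1$ iff $p_i<1/(\alpha+1)$. For $p_i<0$ we have $|A_i|=1$ iff $|p_i|<1/(\alpha-1)$, and $|A_i|=2$ iff $p_i\le -1/(\alpha-1)$.

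The first step is to identify an optimal matching via \cref{lem:line:optimal}. It assigns the $c_w$ leftmost agents to $w$ and everyone else to $z$. Since $w$ is over-demanded, we have $n_w\ge c_w$, so these $c_w$ leftmost agents all prefer $w$. The algorithm also assigns only $w$-preferring agents to $w$. Let $S$ be the set of agents the algorithm sends to $w$ and $O$ the corresponding set in the optimal matching. Agents outside $S\triangle O$ incur the same cost in both matchings. We pair $S\setminus O$ bijectively with $O\setminus S$, writing each pair as $(i,j)$: agent $i$ is sent to $w$ by the algorithm but to $z$ by the optimum, while $j$ is the reverse. Because $O$ consists of the leftmost agents, we have $p_j\le p_i\le 1/2$. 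It then suffices to show, for every pair,
\[
d(i,w)+d(j,z)\;\le\;\max\Big\{1+\tfrac{2}{\alpha},\tfrac{3\alpha-1}{\alpha+1}\Big\}\cdot\big(d(i,z)+d(j,w)\big),
\]
and then sum over pairs together with the unchanged agents.

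The second step uses the priority rule to restrict the possible pair types. The algorithm fills $w$ with $\Agents_w^1$ before touching $\Agents_w^2$. Hence if $i\in\Agents_w^2$ is assigned to $w$ while $j$ is not, then $j\in\Agents_w^2$ as well. The possible types are therefore $(i,j)\in\Agents_w^1\times\Agents_w^1$, $\Agents_w^1\times\Agents_w^2$ and $\Agents_w^2\times\Agents_w^2$. In each case the pair ratio is a simple piecewise-linear function of $p_i,p_j$ subject to the approval constraints, and we maximize it directly. If both agents lie left of $0$, the difference between the two sides is zero and the ratio is $1$. For $\Agents_w^1\times\Agents_w^1$ the worst case is $p_j=0$ and $p_i\to 1/(\alpha+1)$, which gives $\frac{1+p_i}{1-p_i}\to 1+2/\alpha$. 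For $\Agents_w^2\times\Agents_w^2$ the worst case is $p_j=-1/(\alpha-1)$ and $p_i=1/2$, which gives $\frac{1/2+\alpha/(\alpha-1)}{1/2+1/(\alpha-1)}=\frac{3\alpha-1}{\alpha+1}$. For the mixed case $\Agents_w^1\times\Agents_w^2$, the agent $j$ must be at $p_j\le -1/(\alpha-1)$ or at $p_j\ge 1/(\alpha+1)$. The second option is impossible because $p_j\le p_i<1/(\alpha+1)$. The worst case is then $p_j=-1/(\alpha-1)$ and $p_i\to 1/(\alpha+1)$, giving the ratio $\frac{\alpha^2+2\alpha-1}{\alpha^2+1}$.

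The main obstacle is making this case analysis airtight. Every position of each agent, in each region ($p<0$ versus $p\in[0,1/2]$), must be consistent with its approval size, and the ratio must be maximized correctly including boundary and monotonicity checks. In particular, we must verify that the mixed bound $\frac{\alpha^2+2\alpha-1}{\alpha^2+1}$ is always dominated by $\max\{1+2/\alpha,\frac{3\alpha-1}{\alpha+1}\}$. I would try to show that it is at most $\frac{3\alpha-1}{\alpha+1}$ by cross-multiplying, which reduces to a polynomial inequality in $\alpha\ge1$. If that inequality fails for small $\alpha$, I would instead show domination by $1+2/\alpha$ on that range.
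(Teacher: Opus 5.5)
Your proposal is correct and follows the same structure as the paper's proof: you pair each swapped agent with a counterpart, use the priority of $N_w^1$ over $N_w^2$ to restrict the pair types, and bound each pair via explicit positions. The main difference is that the paper handles both pair types in which the agent sent to $w$ lies in $N_w^1$ with a single triangle-inequality step, namely $d(i,w)+d(j,z)\le d(i,z)+d(j,w)+2d(i,w)$ together with $d(i,w)<d(i,z)/\alpha$ (in your labelling). This gives $1+2/\alpha$ without coordinates and makes your separate mixed case unnecessary. Your remaining checks do go through. Since $(3\alpha-1)(\alpha^2+1)-(\alpha^2+2\alpha-1)(\alpha+1)=2\alpha(\alpha-1)^2\ge 0$, the first domination you planned already holds. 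In the $N_w^2\times N_w^2$ case you did not mention the subcase with both agents in $[\frac{1}{\alpha+1},\frac12]$, but it yields only $\frac{3\alpha+1}{\alpha+3}\le\frac{3\alpha-1}{\alpha+1}$.
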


\begin{proof}
Without loss of generality, suppose that $x$ is located at $0$, $y$ is located at $1$, and $w=x$ is the over-demanded facility to which our algorithm gives priority. 
To simplify our discussion in this proof, we will refer to the facilities by their locations, $0$ and $1$. 
Let $\matching = (\IndividualMatching_i)_i$ be the matching computed by the algorithm, and denote by $\OptMatching = (\IndividualOpt_i)_i$ an optimal matching that assigns the agents to the facilities from left to right on the line (such an optimal matching exists due to \cref{lem:line:optimal}). 
Since $c_0+c_1=n$, for any agent $i$ that is not optimally assigned to a facility, i.e., $\IndividualMatching_i \neq \IndividualOpt_i$, there must exist an agent $j$ such that $\IndividualMatching_j = \IndividualOpt_i$ and $\IndividualMatching_i = \IndividualOpt_j$. 
For any such pair of agents $(i,j)$, we will show that
\begin{align*}
    d(i,\IndividualMatching_i) + d(j,\IndividualMatching_j) \leq \max\left\{ 1 + \frac{2}{\alpha}, \frac{3\alpha-1}{\alpha+1} \right\} \cdot \bigg( d(i,\IndividualOpt_i) + d(j,\IndividualOpt_j) \bigg) 
\end{align*}
and the statement will then follow by summing over all agents. 

Without loss of generality, suppose that $\IndividualOpt_i = 0$ and $\IndividualOpt_j = 1$, but $\IndividualMatching_i=1$ and $\IndividualMatching_j=0$. Since $\IndividualOpt_i < \IndividualOpt_j$, by \cref{lem:line:optimal}, it must be the case that $i$'s location on the line is (weakly) to the left of $j$'s location. 
Since agents are first repeatedly assigned to facility $0$ until its capacity is depleted, it must be the case that $j$ was ranked before $i$ in $\pi_\Agents$ (i.e., $j$ is matched to a facility before $i$ is matched to a facility). Since $0$ is over-demanded, its capacity is depleted before the agents that prefer it are depleted, and we thus have that $j \in \Agents_0^1 \cup \Agents_0^2$. Since $i$ is (weakly) to the left of $j$, the same must be true for $i$, that is, $i \in \Agents_0^1 \cup \Agents_0^2$.
Consequently, given that agents in $\Agents_0^1$ have priority over the agents in $\Agents_0^2$, 
there are two cases to consider: 
{\bf (case 1)} $j \in \Agents_0^1$ and $i \in \Agents_0^1 \cup \Agents_0^2$, and
{\bf (case 2)} $j \in \Agents_0^2$ and $i \in \Agents_0^2$.

\medskip
\noindent 
{\bf (case 1): $j \in \Agents_0^1$ and $i \in \Agents_0^1 \cup \Agents_0^2$.}
Since $j \in \Agents_0^1$ and $\IndividualMatching_j = 0$, we have that $A_j = \{\IndividualMatching_j\}$, and thus $d(j,\IndividualOpt_j) > \alpha \cdot d(j,\IndividualMatching_j) \Leftrightarrow d(j,\IndividualMatching_j) < \frac{1}{\alpha} d(j,\IndividualOpt_j)$. 
Using this and the triangle inequality, we have
\begin{align*}
    d(i,\IndividualMatching_i) + d(j,\IndividualMatching_j) 
    &\leq d(i,\IndividualOpt_i) + d(\IndividualMatching_i,\IndividualOpt_i) + d(j,\IndividualMatching_j) \\
    &= d(i,\IndividualOpt_i) + d(\IndividualMatching_j,\IndividualOpt_j) + d(j,\IndividualMatching_j) \\
    &\leq d(i,\IndividualOpt_i) + d(j,\IndividualOpt_j) + 2d(j,\IndividualMatching_j) \\
    &\leq d(i,\IndividualOpt_i) + d(j,\IndividualOpt_j) + \frac{2}{\alpha} d(j,\IndividualOpt_j) \\
    &\leq \left( 1 + \frac{2}{\alpha} \right) \cdot \bigg( d(i,\IndividualOpt_i) + d(j,\IndividualOpt_j)  \bigg).
\end{align*}

\medskip
\noindent 
{\bf (case 2): $j \in \Agents_0^2$ and $i \in \Agents_0^2$.}
Since $A_i = A_j = \{0,1\}$ and both agents prefer $0$ over $1$, there are two possible intervals in which they may be located in: $\left(-\infty,-\frac{1}{\alpha-1}\right]$ or $\left[\frac{1}{\alpha+1}, \frac12 \right]$. We consider the following exhaustive cases.
\begin{itemize}
    \item $i, j \in \left(-\infty,-\frac{1}{\alpha-1}\right]$. Then, we clearly have that 
    \begin{align*}
        d(i,\IndividualMatching_i) + d(j,\IndividualMatching_j) 
        &= d(i,1) + d(j,0) \\
        &= d(i,j) + d(j,1) + d(j,0) 
        = d(i,0) + d(j,1) 
        = d(i,\IndividualOpt_i) + d(j,\IndividualOpt_j). 
    \end{align*}

    \item $i \in \left(-\infty,-\frac{1}{\alpha-1}\right], j \in \left[\frac{1}{\alpha+1}, \frac12 \right]$.
    Then,
    \begin{align*}
        &d(i,\IndividualMatching_i) 
        = d(i,1) = d\left(i, -\frac{1}{\alpha-1}\right) + 1 + \frac{1}{\alpha-1} 
        = d\left(i, -\frac{1}{\alpha-1}\right) + \frac{\alpha}{\alpha-1}; \\
        &d(i,\IndividualOpt_i) 
        = d(i,0) = d\left(i, -\frac{1}{\alpha-1}\right) + \frac{1}{\alpha-1}; \\
        &d(j,\IndividualMatching_j) 
        = d(j,0) \leq \frac12; \\
        &d(j,\IndividualOpt_j) 
        = d(j,1) \geq \frac12.
    \end{align*}
    By appropriately applying the inequality $\frac{a+b}{c+b} \leq \frac{a}{c}$ for any $a \geq c$, with $b = d\left(i, -\frac{1}{\alpha-1}\right)$, we have that
    \begin{align*}
        \frac{d(i,\IndividualMatching_i) + d(j,\IndividualMatching_j)}{d(i,\IndividualOpt_i) + d(j,\IndividualOpt_j)} \leq \frac{d\left(i, -\frac{1}{\alpha-1}\right) + \frac{\alpha}{\alpha-1} + \frac12}{d\left(i, -\frac{1}{\alpha-1}\right) + \frac{1}{\alpha-1} + \frac12} 
        \leq \frac{3\alpha-1}{\alpha+1}. 
    \end{align*}

    \item $i, j \in \left[\frac{1}{\alpha+1}, \frac12 \right]$.
    Then, 
    \begin{align*}
        &d(i,\IndividualMatching_i) 
        = d(i,1) \leq 1- \frac{1}{\alpha+1} = \frac{\alpha}{\alpha+1}; \\
        &d(i,\IndividualOpt_i) 
        = d(i,0) \geq \frac{1}{\alpha+1}; \\
        &d(j,\IndividualMatching_j) 
        = d(j,0) \leq \frac12; \\
        &d(j,\IndividualOpt_j) 
        = d(j,1) \geq \frac12.
    \end{align*}
    Hence, 
     \begin{align*}
        \frac{d(i,\IndividualMatching_i) + d(j,\IndividualMatching_j)}{d(i,\IndividualOpt_i) + d(j,\IndividualOpt_j)} 
        \leq \frac{\frac{\alpha}{\alpha+1} + \frac12}{ \frac{1}{\alpha+1} + \frac12}
        = \frac{3\alpha+1}{\alpha+3} 
        \leq \frac{3\alpha-1}{\alpha+1},
     \end{align*}
\end{itemize}
where the last inequality holds for any $\alpha \geq 1$.
The proof is now complete. 
\end{proof}

By optimizing over $\alpha$, \cref{thm:distances+approvals+ordinal:two-locations:parameterized-upper} yields the following. 

\begin{corollary} \label{cor:distances+approvals+ordinal:two-locations:optimized-upper}
    The distortion of {\sc $(1+\sqrt{2})$-\AOLAlgTwo} is at most $2\sqrt{2}-1 \approx 1.83$.
\end{corollary}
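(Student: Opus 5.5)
The corollary follows from \cref{thm:distances+approvals+ordinal:two-locations:parameterized-upper} by choosing the $\alpha \geq 1$ that minimizes
\[
g(\alpha)=\max\left\{ 1 + \frac{2}{\alpha}, \frac{3\alpha-1}{\alpha+1} \right\}.
\]
The plan has three steps: establish the monotonicity of the two terms, locate their crossing point, and evaluate the bound there.

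First, $1+2/\alpha$ is strictly decreasing in $\alpha$. Second, $(3\alpha-1)/(\alpha+1) = 3 - 4/(\alpha+1)$ is strictly increasing. Hence the maximum of the two is minimized exactly where they coincide.

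Setting $1+2/\alpha = (3\alpha-1)/(\alpha+1)$ and clearing denominators gives $(\alpha+2)(\alpha+1) = \alpha(3\alpha-1)$, i.e.\ $\alpha^2+3\alpha+2 = 3\alpha^2-\alpha$. This simplifies to $\alpha^2-2\alpha-1=0$, whose root with $\alpha \geq 1$ is $\alpha = 1+\sqrt{2}$.

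At this value, the first term is
\[
1+\frac{2}{1+\sqrt{2}} = 1+2(\sqrt{2}-1) = 2\sqrt{2}-1 .
\]
As a check, the second term is
\[
\frac{3\alpha-1}{\alpha+1}=\frac{2+3\sqrt{2}}{2+\sqrt{2}} = \frac{(2+3\sqrt{2})(2-\sqrt{2})}{2} = \frac{4\sqrt2-2}{2} = 2\sqrt{2}-1,
\]
so the two terms agree. Plugging $\alpha=1+\sqrt{2}$ into \cref{thm:distances+approvals+ordinal:two-locations:parameterized-upper} therefore bounds the distortion by $2\sqrt2-1\approx 1.83$.

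Nothing here is an obstacle; the only care needed is in the rationalization arithmetic.
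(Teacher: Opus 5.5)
Your proposal is correct and matches the paper's argument, which simply instantiates \cref{thm:distances+approvals+ordinal:two-locations:parameterized-upper} at $\alpha=1+\sqrt{2}$; there both terms of the maximum equal $2\sqrt{2}-1$, and your arithmetic checks out. The monotonicity argument showing that this $\alpha$ minimizes the bound is not needed for the stated upper bound, but it correctly explains the choice of parameter.
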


Next, we show the lower bound. We will first prove a parameterized lower bound that will be useful not only for the case of two facilities, but also for more facilities in \cref{sec:distances+approvals+ordinal:lower}. This bound captures algorithms that use large values of $\alpha$ for the $\APP$ information.  

\begin{lemma}\label{lem:distances+approvals+ordinal:lower:large-alpha}
The distortion of any algorithm that uses $\DIST + \APP + \ORD$ information is at least $\frac{3\alpha-1}{\alpha+1}$, even when there $\NumberFacilities=2$ facilities. 
\end{lemma}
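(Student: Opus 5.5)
The plan is a two-agent, two-facility instance in which both agents report identical information, so the algorithm cannot tell them apart. The adversary then places them on the line according to the matching the algorithm picks. The placement is read off from the tight middle subcase of case 2 in the proof of \cref{thm:distances+approvals+ordinal:two-locations:parameterized-upper}: one agent sits at the left endpoint $-\frac{1}{\alpha-1}$ of the left approval region, and the other sits at (essentially) the midpoint $\frac12$.

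\textbf{Instance.} Take unit-capacity facilities $x$ at $0$ and $y$ at $1$, which fixes the $\DIST$ information. Take two agents, each reporting the ordering $x \succ y$ and the approval set $\{x,y\}$. If $\alpha = 1$ the claimed bound equals $1$ and is trivial, so assume $\alpha > 1$. Since both agents report the same information, we may assume without loss of generality that the algorithm matches agent $1$ to $x$ and agent $2$ to $y$.

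\textbf{Placement.} Put agent $2$ (matched to $y$) at $-\frac{1}{\alpha-1}$. Then $d(2,x)=\frac{1}{\alpha-1}$ and $d(2,y)=\frac{\alpha}{\alpha-1}=\alpha\cdot d(2,x)$. So $x$ is her favorite and $y$ lies exactly on the approval threshold, which is consistent with her report. Put agent $1$ (matched to $x$) at $\frac12-\varepsilon$ for an infinitesimal $\varepsilon>0$, so that she strictly prefers $x$. Her distances to the two facilities are $\frac12-\varepsilon$ and $\frac12+\varepsilon$, whose ratio is below $\alpha$, so she approves both, again consistent with her report.

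\textbf{Costs.} The algorithm's matching costs
\[
\left(\frac12-\varepsilon\right)+\frac{\alpha}{\alpha-1}.
\]
Swapping the two agents (agent $2$ to $x$, agent $1$ to $y$) costs
\[
\frac{1}{\alpha-1}+\frac12+\varepsilon,
\]
and this is an upper bound on the optimum. Letting $\varepsilon\to 0$, the ratio tends to
\[
\frac{\frac{\alpha}{\alpha-1}+\frac12}{\frac{1}{\alpha-1}+\frac12}=\frac{2\alpha+\alpha-1}{2+\alpha-1}=\frac{3\alpha-1}{\alpha+1},
\]
which is the claimed lower bound.

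\textbf{Main obstacle.} The only delicate point is checking that the placement is consistent with the reported information at the boundary cases. These are the tie at the approval threshold for agent $2$, which is fine because the approval definition uses $\le$, and the near-tie in agent $1$'s ordering, which the $\varepsilon$ perturbation resolves. Both are routine.
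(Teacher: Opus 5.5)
Your proposal is correct and follows the paper's proof: you use the same instance (unit-capacity $x$ at $0$ and $y$ at $1$, two agents both reporting $x \succ y$ and $\{x,y\}$), place the agent matched to $y$ at $-\frac{1}{\alpha-1}$ and the agent matched to $x$ at the midpoint, and get the ratio $\frac{3\alpha-1}{\alpha+1}$. The only differences are two small refinements: you place the midpoint agent at $\frac12-\varepsilon$, which avoids the exact ordinal tie the paper accepts by putting her at $\frac12$, and you handle the degenerate case $\alpha=1$ separately, where $-\frac{1}{\alpha-1}$ is undefined.
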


\begin{proof}
Consider an instance with two unit-capacity facilities and two agents. Facility $x$ is located at $0$ and facility $y$ is located at $1$. 
The two agents have the same ordinal and approval preferences over the facilities: $x \succ_i y$ and $A_i =\{x,y\}$ for each agent $i \in [2]$. Given this symmetric information, an algorithm may output any of the two possible matchings. So, without loss of generality, suppose that the matching is $\matching = (x,y)$, i.e., agent $1$ is matched to facility $\IndividualMatching_1 = x$, and agent $2$ is matched to facility $\IndividualMatching_2 = y$. 
The agents might be positioned so that agent $1$ is at $1/2$ and agent $2$ is at $-\frac{1}{\alpha-1}$. This metric space is illustrated in \cref{fig:distances+approvals+ordinal:lower:large-alpha}. 

\begin{figure}[h!]
\centering
\begin{tikzpicture}[scale=0.8]
  \draw[thick] (-1.5,0) -- (3,0);

  \draw[thick,fill=black] (0,0) circle (0.1);
  \node[below] at (0,-0.2) {$0$};
  \node[above] at (0,0.2) {$x$};

  \draw[thick,fill=black] (3,0) circle (0.1);
  \node[below] at (3,-0.2) {$1$};
  \node[above] at (3,0.2) {$y$};

  \draw[thick] (1.5,0.1) -- (1.5,-0.1);
  \node[below] at (1.5,-0.2) {$1/2$};
  \node[above] at (1.5,0.25) {$1$};

  \draw[thick] (-1.5,0.1) -- (-1.5,-0.1);
  \node[below] at (-1.5,-0.2) {$-\frac{1}{\alpha-1}$};
  \node[above] at (-1.5,0.25) {$2$};
\end{tikzpicture}
\caption{The metric space used in the proof of \cref{lem:distances+approvals+ordinal:lower:large-alpha}.}
\label{fig:distances+approvals+ordinal:lower:large-alpha}
\end{figure}

Observe that these agent positions are consistent to their ordinal preferences, and also to their approval preferences since $d(1,y) = 1/2 = d(1,x)$ and $d(2,y) = 1+ \frac{1}{\alpha-1} = \frac{\alpha}{\alpha-1} = \alpha \cdot d(2,x)$. The social cost of $\matching$ is 
\begin{align*}
    \SC(\matching)=d(1,x)+d(2,y)=\frac12+\frac{\alpha}{\alpha-1}=\frac{3\alpha-1}{2(\alpha-1)}.
\end{align*}
The optimal matching is $\OptMatching = (y,x)$, i.e., agent $1$ is matched to facility $y$ and agent $2$ is matched to facility $x$, with a social cost of
\begin{align*}
    \SC(\OptMatching) = d(1,y) + d(2,x) = \frac12 + \frac{1}{\alpha-1} = \frac{\alpha+1}{2(\alpha-1)}.
\end{align*}
Therefore, the distortion is at least 
\begin{align*}
    \frac{\SC(\matching)}{\SC(\OptMatching)} = \frac{3\alpha-1}{\alpha+1},
\end{align*}
as desired. 
\end{proof}

Using \cref{lem:distances+approvals+ordinal:lower:large-alpha} for large values of $\alpha$, and another, appropriately tuned instance for small values of $\alpha$, we obtain the following.

\begin{theorem}\label{thm:distances+approvals+ordinal:two-locations:lower}
The distortion of any algorithm that uses $\DIST + \APP + \ORD$ information is at least $2\sqrt{2} - 1$, even when there are $\NumberFacilities=2$ facilities.
\end{theorem}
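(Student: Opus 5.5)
The plan is to combine \cref{lem:distances+approvals+ordinal:lower:large-alpha}, which gives the bound $\frac{3\alpha-1}{\alpha+1}$ (increasing in $\alpha$), with a second instance giving a bound of $1+\frac{2}{\alpha}$ (decreasing in $\alpha$). These are exactly the two terms that appear in \cref{thm:distances+approvals+ordinal:two-locations:parameterized-upper}. For any fixed $\alpha$ chosen by the algorithm designer, the distortion is then at least $\max\{\frac{3\alpha-1}{\alpha+1}, 1+\frac{2}{\alpha}\}$. Since one term is increasing and the other decreasing, this maximum is minimized where the two terms are equal. Setting $(3\alpha-1)\alpha = (\alpha+2)(\alpha+1)$ gives $\alpha^2-2\alpha-1=0$, so $\alpha = 1+\sqrt{2}$. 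The common value there is $1+\frac{2}{1+\sqrt{2}} = 2\sqrt{2}-1$, which is the claimed bound.

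For the $1+\frac{2}{\alpha}$ bound I would adapt the second instance in the proof of \cref{thm:distances+approvals:two-locations:lower}, now also revealing ordinal information. Take two unit-capacity facilities, $x$ at $0$ and $y$ at $1$, and two agents who both report $x \succ_i y$ and $A_i=\{x\}$. The reported information ($\DIST$, $\APP$ and $\ORD$) is identical for the two agents, so without loss of generality the algorithm outputs $\matching=(x,y)$.

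We then place agent $2$ at $0$ and agent $1$ at $\frac{1}{\alpha+1}-\varepsilon$. This is consistent with the reports:
\begin{itemize}
\item agent $2$ is at $x$ itself, so it prefers $x$ and approves only $x$;
\item agent $1$ satisfies $d(1,y)=\frac{\alpha}{\alpha+1}+\varepsilon > \alpha\, d(1,x)$, and lies left of $\frac12$, so it prefers $x$ and approves only $x$.
\end{itemize}
The algorithm pays $\frac{1}{\alpha+1}-\varepsilon+1$. The swapped matching pays $\frac{\alpha}{\alpha+1}+\varepsilon$. As $\varepsilon\to 0$ the ratio tends to $\frac{\alpha+2}{\alpha}=1+\frac{2}{\alpha}$.

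The only delicate points are checking consistency of both constructions with all three information types simultaneously, and the edge case $\alpha=1$. At $\alpha=1$ the first bound is $1$ but the second is $3$, so the argument still goes through. The rest is the one-variable optimization above.
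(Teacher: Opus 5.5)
Your proposal is correct and follows the paper's proof essentially verbatim. The paper likewise invokes \cref{lem:distances+approvals+ordinal:lower:large-alpha} when $\alpha \geq 1+\sqrt{2}$, and for $\alpha < 1+\sqrt{2}$ it uses exactly your instance (both agents report $x \succ y$ and $A_i=\{x\}$, placed at $0$ and $\frac{1}{1+\alpha}-\varepsilon$) to obtain $1+\frac{2}{\alpha}$. Your minimization of $\max\{\frac{3\alpha-1}{\alpha+1}, 1+\frac{2}{\alpha}\}$ is the same case split phrased differently, and your remark on $\alpha=1$ (where the lemma's construction is undefined but its bound is trivial) makes explicit a point the paper leaves implicit.
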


\begin{proof}
If $\alpha \geq 1+\sqrt{2}$, then, by \cref{lem:distances+approvals+ordinal:lower:large-alpha}, the distortion is at least 
\begin{align*}
    \frac{3\alpha-1}{\alpha+1} \geq \frac{3(1+\sqrt{2})-1}{1+\sqrt{2}+1} = 2\sqrt{2}-1.
\end{align*}
So, it suffices to consider the case $\alpha < 1+\sqrt{2}$. Consider an instance with two unit-capacity facilities $x$ and $y$ located at $0$ and $1$, respectively. The two agents prefer $x$ over $y$, but only approve $x$. In other words, $x \succ_i y$ and $A_i =\{x\}$ for each agent $i \in [2]$. This information is symmetric and there is no way of distinguishing which of the two possible matchings is better. Suppose then without loss of generality that the algorithm outputs the matching $\matching = (x,y)$. In this case, the agents might be positioned so that agent $1$ is at $\frac{1}{1+\alpha}-\varepsilon$, where $\varepsilon$ is an infinitesimal, and agent $2$ is at $0$. This metric space is illustrated in \cref{fig:distances+approvals+ordinal:lower:2locations:small-alpha}.

\begin{figure}[t]
\centering
\begin{tikzpicture}[scale=1]
\draw[thick] (0,0) -- (4,0);
\draw[thick,fill=black] (0,0) circle (0.08);
\node[below] at (0,-0.18) {$0$};
\node[above] at (0,0.18) {$x, 2$};
\draw[thick] (2,0.1) -- (2,-0.1);
\node[below] at (2,-0.18) {$\frac{1}{1+\alpha}-\varepsilon$};
\node[above] at (2,0.18) {$1$};
\draw[thick,fill=black] (4,0) circle (0.08);
\node[below] at (4,-0.18) {$1$};
\node[above] at (4,0.18) {$y$};
\end{tikzpicture}
\caption{The metric space used in the proof of \cref{thm:distances+approvals+ordinal:two-locations:lower} when $\alpha < 1+\sqrt{2}$.}
\label{fig:distances+approvals+ordinal:lower:2locations:small-alpha}
\end{figure}

Observe that these agent positions are consistent to the ordinal preferences of the agents, and also to their approval preferences since $d(1,y) = 1 - \frac{1}{1+\alpha}+\varepsilon = \frac{\alpha}{1+\alpha} + \varepsilon > \alpha \cdot \bigg( \frac{1}{1+\alpha}-\varepsilon \bigg) = \alpha \cdot d(1,x)$ and $d(2,y) = 1 > \alpha \cdot 0 = \alpha \cdot d(2,x)$. The social cost of $\matching$ is 
\begin{align*}
    \SC(\matching)=d(1,x)+d(2,y) = \frac{1}{1+\alpha} - \varepsilon +1 = \frac{2+\alpha}{1+\alpha} - \varepsilon.
\end{align*}
The social cost of the optimal matching $\OptMatching = (y,x)$ is
\begin{align*}
    \SC(\OptMatching) = d(1,y) + d(2,x) = \frac{\alpha}{1+\alpha} + \varepsilon.
\end{align*}
Therefore, as $\varepsilon$ tends to $0$, and since $\alpha < 1+\sqrt{2}$, the distortion is 
\begin{align*}
    \frac{2+\alpha}{\alpha} = 1 + \frac{2}{\alpha} > 1+ \frac{2}{1+\sqrt{2}} = 2\sqrt{2}-1.
\end{align*}
In any case, the distortion is at least $2\sqrt{2}-1$ and the proof is complete. 
\end{proof}

Together, \cref{cor:distances+approvals+ordinal:two-locations:optimized-upper} and \cref{thm:distances+approvals+ordinal:two-locations:lower} imply that, for two facilities, the best possible distortion achievable with $\APP+\ORD$ distortion is $2\sqrt{2}-1$, and this is tight even for algorithms with access to $\APP+\ORD+\DIST$ information.

\section{More Facilities with Distance, Approval, and Ordinal Information}
\label{sec:>=3-facilities:distances+approvals+ordinal}
In this section, we consider the case of three or more facilities, and show a tight bound of $2$ on the distortion of algorithms that use $\DIST+\APP+\ORD$ information. To establish this bound, in \cref{sec:distances+approvals+ordinal:three-and-same-top}, we first design a $2$-distortion algorithm for instances with three facilities such that all agents rank the same facility first. We then also provide an algorithm that achieves a distortion of $2$ for instances that are ``demand-monotone'' in the sense that all prefixes of facilities are either over-demanded or under-demanded (\cref{sec:distances+approvals+ordinal:monotone-upper}). Using these two algorithms as subroutines, we are then able to define our main algorithm for handling instances with any number of facilities (\cref{sec:distances+approvals+ordinal:general-upper}). The tight lower bound of $2$ is presented in \cref{sec:distances+approvals+ordinal:lower}. 

\subsection{Upper Bound for Three Facilities and Identical Top-Preference Agents} 
\label{sec:distances+approvals+ordinal:three-and-same-top}
Here we focus on instances with three facilities $\Facilities = \{x, y, z\}$ such that $y$ is located between $x$ and $z$ (either of $x$ or $z$ can be to the left of $y$), $d(x,y)\geq d(y,z)$, and every agent prefers the middle facility, i.e., $\favorite_i = y$ for all $i \in \Agents$. We will design an algorithm that achieves distortion at most $2$ using $\DIST$, $\ORD$, and $\APP$ with threshold $\alpha=3$. 

This algorithm  (\cref{alg:3-locations-top-mid}) first partitions the set of agents into $\Agents_x$ and $\Agents_z$, depending on whether the second most-preferred facility (after $y$, which is their top choice) is $x$ or $z$, respectively. Then, these two sets are further partitioned into three subsets each, depending on the size of the approval set. These partitions are ordered, and the agents are assigned in that order first to $x$, until its capacity is depleted, then to $y$, and then to $z$.

\begin{algorithm}
\caption{{\sc \AOLAlgThreeSF}}
\label{alg:3-locations-top-mid}
\begin{algorithmic}[1]
\Require $\DIST$, $\APP$ with threshold $\alpha=3$, and $\ORD$ information for instances with facilities $\{x, y, z\}$ such that $y$ is located between the other two, $d(x,y)\geq d(y,z)$, and $\favorite_i=y$ for all $i \in \Agents$.
\Ensure Matching $\matching$.
\State Partition $N$ into $\Agents_{x} \gets \{i \in \Agents: x \succ_i z\}$ and $\Agents_{z} \gets \{i \in \Agents: z \succ_i x\}$.
\State Partition $\Agents_{x}$ into three disjoint sets:
\[\Agents_{x}^3 \gets \{i\in \Agents_{x} : |\App_i|=3\}, \ \ \Agents_{x}^2  \gets \{i\in \Agents_{x} : |\App_i|=2\}, \ \ \Agents_{x}^1   \gets \{i\in \Agents_{x} : |\App_i|=1\}.\]
\State Partition $\Agents_{z}$ into three disjoint sets:
\[\Agents_{z}^3 \gets \{i\in \Agents_{z} : |\App_i|=3\}, \ \
\Agents_{z}^2  \gets \{i\in \Agents_{z} : |\App_i|=2\}, \ \
\Agents_{z}^1   \gets \{i\in \Agents_{z} : |\App_i|=1\}.
\]
\State Define agent ordering $\pi_\Agents \gets  \Agents_{x}^3 \succ \Agents_{x}^2 \succ \Agents_{x}^1 \succ \Agents_{z}^3 \succ \Agents_{z}^1 \succ \Agents_{z}^2$.
\State Define item ordering $\pi_\Facilities \gets x\succ y\succ z$.
\For{each agent $i \in \pi_\Agents$}
   \State $\IndividualMatching_i \gets $ next non-depleted facility in $\pi_\Facilities$.
\EndFor
\State \Return $\matching = (\IndividualMatching_i)_{i \in \Agents}$.
\end{algorithmic} 
\end{algorithm}

\begin{theorem}\label{thm:3Fdistortion}
    The distortion of {\sc \AOLAlgThreeSF} is at most $2$ for all instances with three facilities $\{x, y, z\}$ such that $y$ is located between the other two, $d(x,y)\geq d(y,z)$, and $\favorite_i=y$ for all $i \in \Agents$.
\end{theorem}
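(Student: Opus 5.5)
We will follow the template used for \cref{thm:distances+approvals+ordinal:two-locations:parameterized-upper}. Normalize so that $y$ is at $0$, $x$ at $-1$ and $z$ at $t\in(0,1]$; this is possible because $d(x,y)\geq d(y,z)$ and reflection is allowed. Since every agent has $\favorite_i=y$, every agent lies in $(-1/2,t/2]$. Agents in $\Agents_x$ lie in $(-1/2,(t-1)/2]$ and agents in $\Agents_z$ lie in $[(t-1)/2,t/2]$, so the partition $\Agents_x,\Agents_z$ is consistent with the left-to-right order along the line. The approval information with $\alpha=3$ further refines each agent's location into intervals. For example, $i\in\Agents_x$ approves $x$ if and only if $d(i,x)\le 3d(i,y)$, that is, $i\le -1/4$. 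It approves $z$ if and only if $|i-t|\le 3|i|$. The sets $\Agents_x^3,\Agents_x^2,\Agents_x^1$ therefore form nested sub-intervals moving from far left toward the region near $y$. We will write all of these intervals explicitly in terms of $t$.

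Let $\OptMatching$ be the left-to-right optimal matching given by \cref{lem:line:optimal}. The algorithm fills $x$, then $y$, then $z$ with agents taken in the order $\pi_\Agents$. On $\Agents_x$ this order is left to right. On $\Agents_z$ it is $\Agents_z^3\succ\Agents_z^1\succ\Agents_z^2$, which mixes positions. We will decompose the symmetric difference $\matching$ versus $\OptMatching$ into exchange cycles. Because there are three facilities, each cycle is either a swap of two agents or a 3-cycle visiting $x,y,z$. For each cycle $C$ we will show $\sum_{i\in C} d(i,\IndividualMatching_i)\le 2\sum_{i\in C} d(i,\IndividualOpt_i)$, and summing over cycles gives the theorem. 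To enumerate the cycles, we classify each misassigned agent by the pair $(\IndividualMatching_i,\IndividualOpt_i)$. The priority order then restricts which classes can appear. An agent the algorithm places further left than the optimum must precede in $\pi_\Agents$ an agent placed further right, even though that second agent lies further left on the line. Any such inversion must come from $\Agents_z^1$ preceding $\Agents_z^2$, or from approval sets that do not determine position exactly.

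For each feasible cycle type, we will bound the ratio using the interval constraints, in the same way as in case 2 of the two-facility proof. Terms involving $y$ are easy, since $d(i,y)\le d(i,\cdot)$ for every agent. When an agent in $\Agents^1$ is assigned $y$, its cost is at most $\frac13$ of any alternative, and the triangle inequality then yields the factor $1+2/3$ or better. For a swap between $x$ and $z$ across the two groups, we use $d(x,z)=1+t$ together with the fact that the agents lie in $(-1/2,t/2]$. The ordering $\Agents_z^3\succ\Agents_z^1\succ\Agents_z^2$ is presumably chosen so that the worst $z$-versus-$y$ inversions occur between agents in $\Agents_z^1$ (close to $y$) and agents in $\Agents_z^2$ (which approve $z$). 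The cost of placing an $\Agents_z^1$ agent at $z$ rather than $y$ is then offset by the $\Agents_z^2$ agent being at $y$ instead of $z$. Each case reduces to maximizing a linear-fractional expression over a box of positions, with $t\in(0,1]$ as a parameter. The maximum occurs at interval endpoints, and we will check that it is at most $2$, using $t\le 1$ where needed.

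We expect the main obstacle to be the exhaustive case analysis of 3-cycles, in particular those involving agents of $\Agents_x$ sent to $z$ or agents of $\Agents_z$ sent to $x$. We also need to verify that the placement of $\Agents_z^1$ before $\Agents_z^2$ never creates a ratio above $2$ for an arbitrary ratio $t$. Our first step there is to show that an optimal left-to-right matching never sends an agent from $\Agents_x$ to $z$ while the algorithm sends an agent from $\Agents_z$ to $x$, which would rule out the most harmful cycles. After that, we will bound the remaining cycles pairwise.
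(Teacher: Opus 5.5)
\textbf{There is a genuine gap: you never establish which agents can share an exchange cycle, and without that the per-cycle bounds are false.}

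Your global exchange-cycle strategy can be made to work, but the proposal skips this step, which carries the whole proof. For small $t$, an $x$--$y$ swap between an agent at $-1/2$ (in $\Agents_x^3$) and an agent at $t/2$ (in $\Agents_z^2$) has ratio $(3+t)/(1+t)\to 3$. Your proposed first step (excluding $\Agents_x\to z$ together with $\Agents_z\to x$) does not rule this swap out.

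What does rule it out is a monotonicity pair. $\matching$ is nondecreasing in the order $x,y,z$ along $\pi_\Agents$, while $\OptMatching$ is nondecreasing along the line. So in every 2-cycle the left agent is processed later, and in every 3-cycle one agent is inverted with both others. To turn this into confinement you need the explicit buckets for $\alpha=3$ and a case split at $t=1/2$ (the paper's $r=d(x,y)/d(y,z)=2$), which your plan never identifies.
\begin{itemize}
\item For $t\ge 1/2$: $\Agents_z^3=\varnothing$, and $\Agents_x^3,\Agents_x^2,\Agents_x^1,\Agents_z^1,\Agents_z^2$ occupy consecutive intervals in exactly the order of $\pi_\Agents$. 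Every cycle therefore stays inside a single bucket, which is why the paper can argue bucket by bucket with its swap lemma.
\item For $t<1/2$: $\Agents_x^1=\Agents_x^2=\varnothing$. The only conflict between priority and position is that $\Agents_z^2$ splits into a piece in $(-1/4,-t/2]$, left of $\Agents_z^1$, and a piece in $[t/4,t/2]$, right of it. So every cycle lies in $\Agents_x^3$, in $\Agents_z^3$, or in $\Agents_z^1\cup\Agents_z^2\subseteq[-1/4,t/2]$. This last set is the only place where the paper uses a cycle decomposition.
\end{itemize}

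Only after this confinement do your box computations go through, and they are tight.
\begin{itemize}
\item In a facility-free bucket of length $L$, a swap adds at most $2L$, which you compare against the distance to the far facility.
\item For $t\ge 1/2$, the bucket $\Agents_z^1$ contains $y$, so changes at $x$ and at $z$ must be charged separately.
\item For $t<1/2$, inside $\Agents_z^1\cup\Agents_z^2$, $x$--$z$ swaps and 3-cycles add at most $2d(i,k)$, where $i,k$ are the outermost agents of the cycle. Measured against the unavoidable distances to $x$ and $z$, this gives ratio at most $\frac{5+6t}{3+2t}\le 2$.
\item Swaps involving $y$ follow from the two-facility theorem with $\alpha=3$. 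This is the real reason for the order $\Agents_z^1\succ\Agents_z^2$: restricted to $\{y,z\}$, it coincides with \AOLAlgTwo.
\end{itemize}

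Your guess about where the worst case sits is off, and the scenario you describe is reversed. $\matching$ never has an $\Agents_z^1$ agent at $z$ while an $\Agents_z^2$ agent is at $y$, because $\Agents_z^1$ is processed first. The extremal configuration is two indistinguishable $\Agents_z^2$ agents at $-t/2$ and $t/2$, whose swap has ratio exactly $2$. There is no slack there, so the threshold $d(i,z)\le 3d(i,y)$ must be used exactly, not via a loose ``$1+2/3$ or better'' estimate.
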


\begin{proof}
We scale and shift the instance so that $y$ is located at $0$, $z$ is located at $1$, and $x$ is at $-\ratio$, where $\ratio=\frac{d(x,y)}{d(y,z)}\geq 1$ quantifies the extent to which the distance of $x$ from $y$ is greater than the distance of $z$ from $y$. 
Since all agents rank $y$ first, they all lie in the interval $[-\frac{\ratio}{2},\frac{1}{2}]$. Within this interval:
\begin{itemize}
    \item An agent located to the left of $-\frac{\ratio}{4}$ approves $x$;
    \item An agent located in $[-\frac{\ratio}{2},-\frac{1}{2}]$ or in $[\frac{1}{4},\frac{1}{2}]$ approves $z$;
    \item An agent to the left of $\frac{1-\ratio}{2}$ prefers $x$ over $z$.
\end{itemize}
The order of these intervals depends on whether $\ratio\leq 2$ or $\ratio>2$, and we thus consider each of these cases separately. 

\medskip
\noindent 
\textbf{Case $\ratio \leq 2$.}
The order of the borders of the intervals defining the preferences of the agents over the facilities induces the following buckets, where each bucket consists of agents with the same approval information and the same ordinal comparison between $x$ and $z$:
\begin{align*}
    \Agents_x^3 &\subseteq \left[-\frac{\ratio}{2},-\frac{1}{2}\right], \ 
    \Agents_x^2 \subseteq \left[-\frac{1}{2},-\frac{\ratio}{4}\right], \
    \Agents_x^1 \subseteq \left[-\frac{\ratio}{4},\frac{1-\ratio}{2}\right], \ 
    \Agents_z^1 \subseteq \left[\frac{1-\ratio}{2},\frac{1}{4}\right], \
    \Agents_z^2 \subseteq \left[\frac{1}{4},\frac{1}{2}\right].
\end{align*}
Observe that the bucket $\Agents_z^3$ is empty: Agents that approve both $x$ and $z$ must lie in
$[-\frac{\ratio}{2},-\frac{1}{2}]$, while agents that prefer $z$ to $x$ must lie after
$\frac{1-\ratio}{2}$. Since $\ratio\leq 2$, we have $-\frac{1}{2}\leq \frac{1-\ratio}{2}$, so these two
regions do not overlap. Hence, no agent both approves $x$ and $z$ and prefers $z$ to $x$.
The resulting partition of the agents is shown in \cref{fig:3-locations-k-leq-2}.
The following lemma will be useful to bound the total cost of the agents within the same bucket. 

\begin{lemma}\label{lem:distances+approvals+ordinal:three:bucket-swap-bound}
    Let $\subsetOfAgents \subseteq \Agents$ be a set of agents that all lie in an interval $[a,b]$ that contains no facility. 
    Suppose that two matchings $\matching$ and $\OptMatching$ assign the same number of agents of $\subsetOfAgents$ to each facility. If $n_L$ of these agents are assigned to facilities to the left of the interval, and $n_R$ of them are assigned to facilities to the right of the interval, then
    \begin{align*}
        \sum_{i\in B}d(i,\IndividualMatching_i)
        \leq
        \sum_{i\in B}d(i,\IndividualOpt_i)+2(b-a)\min\{n_L,n_R\}.
    \end{align*}
\end{lemma}
\begin{proof}
    Since the two matchings assign the same number of agents of $\subsetOfAgents$ to each facility, their difference can be decomposed into swaps among agents of $\subsetOfAgents$. A swap between two facilities on the same side of the interval does not change the total cost. A swap between one facility to the left of the interval and one facility to the right of the interval can increase the total cost by at most $2(b-a)$, since both agents lie in an interval of length $b-a$. There are at most $\min\{n_L,n_R\}$ such swaps, proving the claim.
\end{proof}

\begin{figure}[t]
    \centering
    \begin{tikzpicture}[scale=1, thick]
        \def\scaleFactor{5.0} 

        \def\xx{-1.5   * \scaleFactor}
        \def\xa{-0.75  * \scaleFactor}
        \def\xb{-0.5   * \scaleFactor}
        \def\xc{-0.375 * \scaleFactor}
        \def\xd{-0.25  * \scaleFactor}
        \def\xy{0.0    * \scaleFactor}
        \def\xe{0.25   * \scaleFactor}
        \def\xf{0.5    * \scaleFactor}
        \def\xz{1.0    * \scaleFactor}

        \draw ({-1.7 * \scaleFactor}, 0) -- ({1.2 * \scaleFactor}, 0);

        \foreach \x/\label in {
            \xa/$-\frac{\ratio}{2}$,
            \xb/$-\frac{1}{2}$,
            \xc/$-\frac{\ratio}{4}$,
            \xd/$\frac{1-\ratio}{2}$,
            \xe/$\frac{1}{4}$,
            \xf/$\frac{1}{2}$
        } {
            \draw (\x, 0.15) -- (\x, -0.15) node[below=0.2cm] {\label};
        }

        \foreach \x/\label in {
          \xx/$-\ratio$,
          \xy/$0$,
          \xz/$1$%
        } {
            \draw (\x, 0) -- (\x, -0) node[below=0.2cm] {\label};
        }

        \draw[fill=black] (\xx,0) circle (2pt) node[above=2pt] {$x$};
        \draw[fill=black] (\xy,0) circle (2pt) node[above=2pt] {$y$};
        \draw[fill=black] (\xz,0) circle (2pt) node[above=2pt] {$z$};

        \draw[decorate, decoration={brace, amplitude=5pt}, DarkBlue, thick]
            (\xa, 0.65) -- (\xb, 0.65)
            node[midway, above=0.2cm] {$\Agents_{x}^{3}$};
        \draw[decorate, decoration={brace, amplitude=5pt}, DarkBlue, thick]
            (\xb, 0.65) -- (\xc, 0.65)
            node[midway, above=0.2cm] {$\Agents_{x}^{2}$};
        \draw[decorate, decoration={brace, amplitude=5pt}, DarkBlue, thick]
            (\xc, 0.65) -- (\xd, 0.65)
            node[midway, above=0.2cm] {$\Agents_{x}^{1}$};
        \draw[decorate, decoration={brace, amplitude=5pt}, DarkBlue, thick]
            (\xd, 0.65) -- (\xe, 0.65)
            node[midway, above=0.2cm] {$\Agents_{z}^{1}$};
        \draw[decorate, decoration={brace, amplitude=5pt}, DarkBlue, thick]
            (\xe, 0.65) -- (\xf, 0.65)
            node[midway, above=0.2cm] {$\Agents_{z}^{2}$};
\end{tikzpicture}
    \caption{Illustration of the possible intervals where agents may lie in when $\ratio \leq 2$ in~\cref{alg:3-locations-top-mid}.}
    \label{fig:3-locations-k-leq-2}
\end{figure}

Let $\OptMatching$ be an optimal matching which, by~\cref{lem:line:optimal}, assigns agents to facilities from left to right. 
The algorithm follows the same left-to-right order of the buckets. Therefore, for each bucket, $\matching$ and $\OptMatching$ assign the same number of agents from the bucket to each facility; the only difference is which agents inside the bucket are assigned to those facilities. 
Hence, the condition of~\cref{lem:distances+approvals+ordinal:three:bucket-swap-bound} holds for each bucket, and we can thus analyze the cost bucket by bucket. Since the buckets form a partition of the agents, and the total cost is additive over agents, it is sufficient to compare the algorithm and the optimal matching inside each bucket separately. In particular, for each bucket $S$ we will show that 
\begin{align*}
    \sum_{i \in S} d(i,\IndividualMatching_i) \leq 2 \sum_{i \in S} d(i,\IndividualOpt_i), 
\end{align*}
and the distortion upper bound of $2$ will then follow by summing over all buckets. 

\begin{itemize}
\item 
\textbf{Agents in \(\Agents_x^3\cup \Agents_x^2\).}
Let \(n_L\) be the number of agents in \(\Agents_x^3\cup \Agents_x^2\) that are matched to \(x\) by \(\bo\), and \(n_R\) the number of agents in \(\Agents_x^3\cup \Agents_x^2\) that are matched to \(y\) or \(z\) by \(\bo\). Since every agent in \(\Agents_x^3\cup \Agents_x^2\) lies in \([-\frac{\ratio}{2},-\frac{\ratio}{4}]\), we have
\begin{align*}
    \sum_{i\in \Agents_x^3\cup \Agents_x^2} d(i,\IndividualOpt_i)
    \geq
    n_L\frac{\ratio}{2}+n_R\frac{\ratio}{4} \geq \frac{\ratio}{2} \min\{n_L,n_R\}.
\end{align*}
The interval containing \(\Agents_x^3\cup \Agents_x^2\) has length \(\frac{\ratio}{4}\), and thus, by \cref{lem:distances+approvals+ordinal:three:bucket-swap-bound},
    \begin{align*}
        \sum_{i\in \Agents_x^3\cup \Agents_x^2} d(i,\IndividualMatching_i)
        \leq \sum_{i\in \Agents_x^3\cup \Agents_x^2} d(i,\IndividualOpt_i) +\frac{\ratio}{2} \min\{n_L,n_R\} 
        \leq 2\sum_{i\in \Agents_x^3\cup \Agents_x^2} d(i,\IndividualOpt_i).
    \end{align*}

\item 
\textbf{Agents in \(\Agents_x^1\).}
Let \(n_L\) be the number of agents in \(\Agents_x^1\) that are matched to \(x\) by \(\bo\), and \(n_R\) the number of agents in \(\Agents_x^1\) that are matched to \(y\) or \(z\) by \(\bo\). Since every agent in \(\Agents_x^1\) lies in \([-\frac{\ratio}{4},\frac{1-\ratio}{2}]\), we have
\begin{align*}
    \sum_{i\in \Agents_x^1} d(i,\IndividualOpt_i)
    \geq  n_L\frac{3\ratio}{4}+n_R\frac{\ratio-1}{2} 
    \geq \frac{5r-2}{2} \min\{n_L,n_R\} 
    \geq \frac{2-r}{2} \min\{n_L,n_R\},
\end{align*}
where the last inequality is true since $r \geq 1$. 
The interval containing \(\Agents_x^1\) has length \(\frac{\ratio}{4}-\frac{\ratio-1}{2} = \frac{2-r}{4}\), and thus, by~\cref{lem:distances+approvals+ordinal:three:bucket-swap-bound},
    \begin{align*}
        \sum_{i\in \Agents_x^1} d(i,\IndividualMatching_i)
        &\leq \sum_{i\in \Agents_x^1} d(i,\IndividualOpt_i)  +\frac{2-r}{2} \min\{n_L,n_R\}
        \leq 2\sum_{i\in \Agents_x^1} d(i,\IndividualOpt_i).
    \end{align*}

\item 
\textbf{Agents in \(\Agents_z^1\).}
Let \(n_L,n_C,n_R\) be the numbers of agents in \(\Agents_z^1\) that are matched by \(\bo\) to \(x,y,z\), respectively. Since every agent in \(\Agents_z^1\) lies in \([\frac{1-\ratio}{2},\frac{1}{4}]\), we have
\begin{align} \label{eq:three:r-leq-two:opt:z-1}
    \sum_{i\in \Agents_z^1} d(i,\IndividualOpt_i)
    \geq
    n_L\frac{\ratio+1}{2}+n_R\frac{3}{4}.
\end{align}
Here we use the same swapping argument as in~\cref{lem:distances+approvals+ordinal:three:bucket-swap-bound}, but we charge the swaps in two parts. We count them once according to whether they change the agent who is matched to $x$, and once according to whether they change the agent who is matched to $z$. There are at most $\min\{n_L,n_C+n_R\}$ swaps of the first type of cost at most $\ratio-1$. Similarly, there are at most $\min\{n_R,n_L+n_C\}$ swaps of the second type of cost at most $\frac{1}{2}$. Note that if a swap exchanges the agents that are matched to $x$ and $z$, then it is counted in both parts, which is necessary because such a swap may increase the cost by both terms. Hence, 
    \begin{align} \label{eq:three:r-leq-two:alg:z-1}
        \sum_{i\in \Agents_z^1} d(i,\IndividualMatching_i)
        &\leq \sum_{i\in \Agents_z^1} d(i,\IndividualOpt_i) +\min\{n_L,n_C+n_R\}(\ratio-1) +\min\{n_R,n_L+n_C\}\frac{1}{2}.
    \end{align}
Since \(\ratio\leq 2\), we have that $\frac{\ratio+1}{2} \geq \ratio -1$, and thus
\begin{align*}
    n_L\frac{\ratio+1}{2}+n_R\frac{3}{4} \geq n_L(\ratio-1)+n_R\frac{1}{2} \geq \min\{n_L,n_C+n_R\}(\ratio-1) +\min\{n_R,n_L+n_C\}\frac{1}{2}.
\end{align*}
By \eqref{eq:three:r-leq-two:opt:z-1} and \eqref{eq:three:r-leq-two:alg:z-1}, we finally obtain
\begin{align*}
    \sum_{i\in \Agents_z^1} d(i,\IndividualMatching_i) \leq 2 \sum_{i\in \Agents_z^1} d(i,\IndividualOpt_i).
\end{align*}

\item 
\textbf{Agents in \(\Agents_z^2\).}
Let \(n_L\) be the number of agents in \(\Agents_z^2\) that are matched to \(x\) or \(y\) by \(\bo\), and \(n_R\) the number of agents in \(\Agents_z^2\) that are matched to \(z\) by \(\bo\). Since every agent in \(\Agents_z^2\) lies in \([\frac{1}{4},\frac{1}{2}]\), we have
\begin{align*}
    \sum_{i\in \Agents_z^2} d(i,\IndividualOpt_i) \geq n_L\frac{1}{4}+n_R\frac{1}{2} \geq \frac12 \min\{n_L,n_R\}.
\end{align*}
The interval containing \(\Agents_z^2\) has length \(\frac{1}{4}\), and thus, by \cref{lem:distances+approvals+ordinal:three:bucket-swap-bound},
    \begin{align*}
        \sum_{i\in \Agents_z^2} d(i,\IndividualMatching_i)
        &\leq  \sum_{i\in \Agents_z^2} d(i,\IndividualOpt_i) +\frac12 \min\{n_L,n_R\}
        \leq 2\sum_{i\in \Agents_z^2} d(i,\IndividualOpt_i).
    \end{align*}
\end{itemize}

\medskip
\noindent 
\textbf{Case $\ratio> 2$.} 
We now have the following buckets:
\begin{align*}
    \Agents_x^3 &\subseteq \left[-\frac{\ratio}{2},\frac{1-\ratio}{2}\right], \
    \Agents_z^3 \subseteq \left[\frac{1-\ratio}{2},-\frac{\ratio}{4}\right], \
    \Agents_z^2 \subseteq \left[-\frac{\ratio}{4},-\frac{1}{2}\right] \cup \left[\frac{1}{4},\frac{1}{2}\right], \
    \Agents_z^1 \subseteq \left[-\frac{1}{2},\frac{1}{4}\right].
\end{align*}
Observe that here the buckets $\Agents_x^1$ and $\Agents_x^2$ are empty:
Since $\ratio>2$, we have that $\frac{1-\ratio}{2}<-\frac{\ratio}{4}$, which implies that any agent who prefers $x$ to $z$ also approves $x$. 
Moreover, such an agent lies to the left of $\frac{1-\ratio}{2}<-\frac{1}{2}$, and therefore approves $z$ as well. 
Hence, every agent who prefers $x$ to $z$ belongs to $\Agents_x^3$. 
The resulting partition is shown in~\cref{fig:3-bucketsNew}. 
Note that the partition of $\Agents_z^2$ into the two sub-buckets $\Agents_z^{2,L} \subseteq \left[-\frac{\ratio}{4},-\frac{1}{2}\right]$ and $\Agents_z^{2,R} \subseteq \left[\frac{1}{4},\frac{1}{2}\right]$ cannot be inferred from the input.

\begin{figure}[t]
    \centering
    \begin{tikzpicture}[scale=1, thick]
        \def\scaleFactor{3.5}

        \def\xx{-3.0   * \scaleFactor}
        \def\xa{-1.5   * \scaleFactor}
        \def\xb{-1.0   * \scaleFactor}
        \def\xc{-0.75  * \scaleFactor}
        \def\xd{-0.5   * \scaleFactor}
        \def\xy{0.0    * \scaleFactor}
        \def\xe{0.25   * \scaleFactor}
        \def\xf{0.5    * \scaleFactor}
        \def\xz{1.0    * \scaleFactor}

        \draw ({-3.3 * \scaleFactor}, 0) -- ({1.2 * \scaleFactor}, 0);

        \foreach \x/\label in {
            \xa/$-\frac{\ratio}{2}$,
            \xb/$\frac{1-\ratio}{2}$,
            \xc/$-\frac{\ratio}{4}$,
            \xd/$-\frac{1}{2}$,
            \xe/$\frac{1}{4}$,
            \xf/$\frac{1}{2}$
        } {
            \draw (\x, 0.15) -- (\x, -0.15) node[below=0.2cm] {\label};
        }

        \foreach \x/\label in {
          \xx/$-\ratio$,
          \xy/$0$,
          \xz/$1$%
        } {
            \draw (\x, 0) -- (\x, -0) node[below=0.2cm] {\label};
        }

        \draw[fill=black] (\xx,0) circle (2pt) node[above=2pt] {$x$};
        \draw[fill=black] (\xy,0) circle (2pt) node[above=2pt] {$y$};
        \draw[fill=black] (\xz,0) circle (2pt) node[above=2pt] {$z$};

        \draw[decorate, decoration={brace, amplitude=5pt}, DarkBlue, thick]
            (\xa, 0.65) -- (\xb, 0.65)
            node[midway, above=0.2cm] {$N^3_{x}$};
        \draw[decorate, decoration={brace, amplitude=5pt}, DarkBlue, thick]
            (\xb, 0.65) -- (\xc, 0.65)
            node[midway, above=0.2cm] {$N^3_{z}$};
        \draw[decorate, decoration={brace, amplitude=5pt}, DarkBlue, thick]
            (\xc, 0.65) -- (\xd, 0.65)
            node[midway, above=0.2cm] {$N^{2,L}_{z}$};
        \draw[decorate, decoration={brace, amplitude=5pt}, DarkBlue, thick]
            (\xd, 0.65) -- (\xe, 0.65)
            node[midway, above=0.2cm] {$N^1_{z}$};
        \draw[decorate, decoration={brace, amplitude=5pt}, DarkBlue, thick]
            (\xe, 0.65) -- (\xf, 0.65)
            node[midway, above=0.2cm] {$N^{2,R}_{z}$};

\end{tikzpicture}
    \caption{Illustration of the possible intervals where agents may lie in when \(\ratio>2\). 
    For clarity, the disconnected pieces of \(N^2_{z}\) are shown as \(N^{2,L}_{z}\) and \(N^{2,R}_{z}\).}
\label{fig:3-bucketsNew}
\end{figure}

Let $\OptMatching$ be an optimal matching assigns the agents from left to right (\cref{lem:line:optimal}). 
In this case, the algorithm does not necessarily follow the left-to-right order of the buckets. 
It first assigns $\Agents_x^3$ and $\Agents_z^3$, which is consistent with their true order on the line, but the remaining buckets $\Agents_z^2$ and $\Agents_z^1$ cannot be treated by the same ordering argument. 
Thus, \cref{lem:distances+approvals+ordinal:three:bucket-swap-bound} does not apply directly for them. 
We therefore need to do the analysis a bit differently. 
We will first bound the cost for the agents in $\Agents_x^3\cup\Agents_z^3$ as in the previous case, and then give a separate bound for the agents in $\Agents_z^2\cup\Agents_z^1$.

\begin{itemize}
\item 
\textbf{Agents in $\Agents_x^3\cup\Agents_z^3$.}
Observe that these agents are contained in the interval $\left[-\frac{\ratio}{2},-\frac{\ratio}{4}\right]$, which is exactly the same
interval that contains the agents of $\Agents_x^3\cup\Agents_x^2$ in the previous case where $\ratio\leq 2$. Since the argument for that case did not dependent on the assumption $\ratio\leq 2$, it can be used verbatim here a well to show that 
\begin{align*}
    \sum_{i\in \Agents_x^3\cup\Agents_z^3} d(i,\IndividualMatching_i) \leq 2\sum_{i\in \Agents_x^3\cup\Agents_z^3} d(i,\IndividualOpt_i).
\end{align*}

\item 
\textbf{Agents in $\Agents_z^2\cup\Agents_z^1$.}
As discussed above, before considering the agents in $\Agents_z^2\cup\Agents_z^1$, both $\matching$ and $\OptMatching$ have assigned the same number of agents to each of $x,y,z$. Hence, the same must be true for $\Agents_z^2\cup\Agents_z^1$.

We construct a directed multigraph $G$ whose vertices are the facilities $x,y,z$. For every agent $i\in \Agents_z^2\cup\Agents_z^1$ with $\IndividualMatching_i\neq\IndividualOpt_i$, we add an edge from $\IndividualMatching_i$ to $\IndividualOpt_i$. Since $\matching$ and $\OptMatching$ assign the same number of agents of $\Agents_z^2\cup\Agents_z^1$ to each facility, every vertex has the same in-degree and out-degree. Hence, $G$ can be decomposed into directed cycles. Since there are only three facilities, each cycle has length $2$ or $3$. We bound the cost on each cycle separately.

\medskip
\noindent
\emph{2-cycles.}
Consider a $2$-cycle formed by agents $i$ and $j$, so that
$\IndividualMatching_i=\IndividualOpt_j$ and $\IndividualMatching_j=\IndividualOpt_i$. First suppose that the two facilities in the cycle are $x$ and $z$. Without loss of generality, let $\IndividualMatching_i=x$ and $\IndividualMatching_j=z$. Since every agent in $\Agents_z^2\cup\Agents_z^1$ lies in $\left[-\frac{\ratio}{4},\frac{1}{2}\right]$, we have
\begin{align*}
    d(i,\IndividualOpt_i)+d(j,\IndividualOpt_j)
    &\geq \frac{3\ratio}{4}+\frac{1}{2} = \frac{3\ratio+2}{4},
\end{align*}
and
\begin{align*}
    d(i,\IndividualMatching_i)+d(j,\IndividualMatching_j) 
    &\leq \left(\ratio+\frac{1}{2}\right)+\left(1+\frac{\ratio}{4}\right) 
    = \frac{5\ratio+6}{4} . 
\end{align*}
Therefore, since $\ratio > 2$, 
\begin{align*}
    \frac{d(i,\IndividualMatching_i)+d(j,\IndividualMatching_j)}{d(i,\IndividualOpt_i)+d(j,\IndividualOpt_j)}
    &\leq \frac{5\ratio+6}{3\ratio+2}
    <2,
\end{align*}
It remains to consider the case where one of the two agents is assigned to $y$. 
Suppose that $\IndividualMatching_i=y$ and $\IndividualMatching_j=z$. 
Consider the two-facility sub-instance consisting only of agents $i$ and $j$, and one unit of capacity for each of $y$ and $z$. 
Since both agents prefer $y$ over $z$, $y$ is the over-demanded facility out of the two. In addition, \cref{alg:3-locations-top-mid} gives priority to the agents of $N_z^1$ over the agents of $N_z^2$. Consequently, for the two-facility sub-instance, \cref{alg:3-locations-top-mid} behaves exactly the same as \cref{alg:DAO-2locations}, and thus leads to the same matching for these two agents. Therefore, by \cref{thm:distances+approvals+ordinal:two-locations:parameterized-upper} with $\alpha=3$, we have that
\begin{align*}
    d(i,\IndividualMatching_i)+d(j,\IndividualMatching_j)
    \leq 2\bigl(d(i,\IndividualOpt_i)+d(j,\IndividualOpt_j)\bigr).
\end{align*}
The case $\IndividualMatching_j=x$ is similar. In the corresponding two-facility sub-instance for $x$ and $y$, 
since these agents approve only $y$ (observe that $i,j \in \Agents_z^2\cup\Agents_z^1$ implies that the agents can approve only $y$ and $z$), 
\cref{alg:DAO-2locations} treats the two agents equally. Therefore, any matching chosen by \cref{alg:3-locations-top-mid} satisfies the same bound that \cref{alg:DAO-2locations} guarantees.

\medskip
\noindent
\emph{3-cycles.}
Now consider a $3$-cycle, and let $i,j,k$ be the involved three agents in their left-to-right order. Since $\OptMatching$ is left-to-right and the cycle uses all three facilities, the optimal matching assigns these agents to $x,y,z$ in this order. There are two possible orientations:
\begin{align*}
    (\IndividualMatching_i=\IndividualOpt_j, \ \IndividualMatching_j=\IndividualOpt_k, \ \IndividualMatching_k=\IndividualOpt_i) \text{ \ and \ }
    (\IndividualMatching_i=\IndividualOpt_k, \ \IndividualMatching_k=\IndividualOpt_j, \ \IndividualMatching_j=\IndividualOpt_i).
\end{align*}
For the first orientation, by the triangle inequality,
\begin{align*}
    d(i,\IndividualMatching_i)+d(j,\IndividualMatching_j)+d(k,\IndividualMatching_k)
    &=
    d(i,\IndividualOpt_j)+d(j,\IndividualOpt_k)+d(k,\IndividualOpt_i)\\
    &\leq
    d(i,\IndividualOpt_j)+d(j,\IndividualOpt_i)+d(k,\IndividualOpt_k)+2d(j,k)\\
    &\leq
    d(i,\IndividualOpt_i)+d(j,\IndividualOpt_j)+d(k,\IndividualOpt_k)
    +2d(i,j)+2d(j,k).
\end{align*}
The other orientation is symmetric and gives the same bound. Since $j$ lies between $i$ and $k$,
\begin{align*}
    d(i,j)+d(j,k)=d(i,k).
\end{align*}
Therefore, in either orientation,
\begin{align*}
    d(i,\IndividualMatching_i)+d(j,\IndividualMatching_j)+d(k,\IndividualMatching_k)
    &\leq
    d(i,\IndividualOpt_i)+d(j,\IndividualOpt_j)+d(k,\IndividualOpt_k)+2d(i,k).
\end{align*}
Since all agents in $\Agents_z^2\cup\Agents_z^1$ lie in $\left[-\frac{\ratio}{4},\frac{1}{2}\right]$,
\begin{align*}
    d(i,k)\leq \frac{\ratio+2}{4}.
\end{align*}
Moreover, since any of these agents is at distance at least $\frac{3\ratio}{4}$ from $x$ and at distance at least $\frac{1}{2}$ from $z$, 
any matching of them to $x,y,z$ has cost at least
\begin{align*}
    \frac{3\ratio}{4}+\frac{1}{2} = \frac{3\ratio+2}{4}.
\end{align*} 
Hence, since $\ratio > 2$, 
\begin{align*}
    \frac{
    d(i,\IndividualMatching_i)+d(j,\IndividualMatching_j)+d(k,\IndividualMatching_k)}{d(i,\IndividualOpt_i)+d(j,\IndividualOpt_j)+d(k,\IndividualOpt_k)}
    &\leq 1+\frac{2d(i,k)}{d(i,\IndividualOpt_i)+d(j,\IndividualOpt_j)+d(k,\IndividualOpt_k)}\\
    &\leq 1+\frac{2\ratio+4}{3\ratio+2}
    <2,
\end{align*}
Finally, summing over all cycles, we obtain
\begin{align*}
    \sum_{i\in \Agents_z^2\cup\Agents_z^1} d(i,\IndividualMatching_i) \leq 2\sum_{i\in \Agents_z^2\cup\Agents_z^1} d(i,\IndividualOpt_i).
\end{align*}
\end{itemize}
The proof is now complete. 
\end{proof}

\subsection{Upper Bound for Demand-Monotone Instances} 
\label{sec:distances+approvals+ordinal:monotone-upper}
We now introduce a subroutine for instances with an arbitrary number of facilities such that the preferences of the agents are {\em demand-monotone}: either every prefix of the facilities is over-demanded, or every prefix is under-demanded. Note that both this algorithm and the algorithm for general instances (presented in the next subsection), use the $\DIST$ information only to extract the true ordering of the facilities on the line. Before presenting the algorithm, we provide some useful exact definitions about the instances we consider here.

\newcommand{\GG}{G}
\begin{definition}
    Let $F=\{x_1, \dots, x_\NumberFacilities\}$ be a set of facilities such that $\loc_{x_1} < \loc_{x_2} < \cdots < \loc_{x_\NumberFacilities}$. For any subset of facilities $\GG \subseteq \Facilities$, let $\capOf{\GG} = \sum_{x \in \GG} \capacity_x$ be the {\bf supply} of $\GG$ (the total capacities of these facilities). Given a set of agents $\Agents$, let $\agentsOf{\GG} = \{i \in \Agents : \favorite_i \in \GG\}$ denote the agents whose top-ranked facility in $\Facilities$ is also in $\GG$; we refer to this set as the {\bf demand} for $\GG$. Also, let $\Facilities_t = \{x_1, \dots, x_t\}$ denote the first (according to their locations) $t$ facilities, with $\Facilities_0=\varnothing$.
\end{definition}

\begin{definition}\label{def:polarized}
Let $F=\{x_1, \dots, x_\NumberFacilities\}$ be a set of facilities such that $\loc_{x_1} < \loc_{x_2} < \cdots < \loc_{x_\NumberFacilities}$, and $\Agents$ be a set of agents. We call the set of facilities $\Facilities$ \textbf{over-demanded} (by $\Agents$) if for every $t\in \{1, 2, \dots, \NumberFacilities\}$, $|\agentsOf{\Facilities_t}| \geq \capOf{\Facilities_t}$, that is, for every prefix of these facilities, the demand weakly exceeds the supply. 
Similarly, we call $F$ \textbf{under-demanded} (by $\Agents$) if $|\agentsOf{\Facilities_t}| \leq \capOf{\Facilities_t}$ for every $t\in \{1, 2, \dots, \NumberFacilities\}$. 
Finally, if $F$ is either over-demanded or under-demanded, we refer to it (and the corresponding instance) as \textbf{demand-monotone}.
\end{definition}

Our $\alpha$-\AOLAlgPolarized\  algorithm (\cref{alg:polarized-match}) takes as input a demand-monotone instance. If it is an under-demanded instance, the algorithm just reverses the order of the facilities, transforming it into an over-demanded one. 
The algorithm identifies the leftmost facility $x_t\in F$ with available capacity and, if it is the only such facility, it assigns all agents to it and returns that matching. Otherwise, it determines which agents to assign to $x_t$ in order to deplete $x_t$'s capacity, and then recursively calls itself to deal with the remaining over-demanded instance.

To determine how to deplete the capacity of $x_t$, the algorithm starts by considering $\agentsOf{F_{t-1}}$, the set of agents whose favorite facility is to the left of $x_t$. If any such agents exist, it assigns a subset of them to $x_t$ (prioritizing agents whose top facility is further to the left) until either the capacity of $x_t$ is depleted or all of the agents in $\agentsOf{F_{t-1}}$ are assigned to it. If the capacity of $x_t$ remains positive, the algorithm then decides which subset of the agents in $\agentsOf{\{x_t\}}$ to assign to $x_t$. Note that, since the instance is over-demanded, $|\agentsOf{F_t}|\geq c_{x_t}$, so we know that there will be enough agents in $\agentsOf{\{x_t\}}$ to deplete the capacity of $x_t$. To determine which subset of these agents to assign to $x_t$, we simulate the $\alpha$-\AOLAlgTwo\  algorithm with the same $\alpha$ as in the input of $\alpha$-\AOLAlgPolarized. Specifically, $\alpha$-\AOLAlgTwo \  is given as input facilities $x_t$ and $x_{t+1}$ with the capacity of $x_t$ being its remaining capacity and the capacity of $x_{t+1}$ being $|\agentsOf{F_t}|- c_{x_t}$, i.e., the number of agents from $\agentsOf{\{x_t\}}$ that will need to be matched to a facility to the right of $x_t$. Note that the true capacity of $x_{t+1}$ may be greater or smaller than that, but the purpose of simulating $\alpha$-\AOLAlgTwo\  is just to determine which set of agents from $\agentsOf{\{x_t\}}$ should be assigned to $x_t$. The algorithm assigns to $x_t$ the same agents as $\alpha$-\AOLAlgTwo, but it does not yet assign any agents to $x_{t+1}$. 

Once the capacity of $x_t$ is depleted, $\alpha$-\AOLAlgPolarized\ is called recursively with updated capacities $c''$ (where $x_t$ has capacity $0$ while all other facilities have the same capacity as before) and the remaining set of agents $N''$ (that ones not assigned to $x_t$) as input. The assignment $\matching''$ returned by this recursive call is combined with the assignment $\matching'$ for $x_t$ described above, and their combination, $\matching$, is the final output of the algorithm.

\begin{algorithm}
\caption{$\alpha$-\AOLAlgPolarized}
\label{alg:polarized-match}
\begin{algorithmic}[1]
\Require A demand-monotone instance with facilities $F$ ordered so that they satisfy $\ell_{x_1}<\dots<\ell_{x_f}$, capacities $(c_x)_{x\in F}$, and agents $\Agents$ that provide $\ORD$ and $\APP$ information for some threshold $\alpha$.
\Ensure Matching $\matching$.
\If{the instance is under-demanded}
    \State Reverse the order of the facilities so that it becomes over-demanded.
\EndIf
\State $x_t \gets$ leftmost facility in $\Facilities$ such that $c_{x_t}>0$.
\If{$x_t$ is the only facility in $\Facilities$ with positive capacity}
    \State \Return $\matching$ with $\IndividualMatching_i \gets x_t$ for all $i\in N$.
\EndIf
\State $\pi_\Agents \gets$ any ordering of $\Agents$ such that $i$ precedes $j$ when $\ell_{\favorite_i} < \ell_{\favorite_j}$.\label{line:ordering}
\If{$|\agentsOf{F_{t-1}}|\geq c_{x_t}$ \label{line:depleted}}
    \State $B \gets $ the first $c_{x_t}$ agents in $\agentsOf{F_{t-1}}$ according to $\pi_\Agents$.
    \State $\IndividualMatching'_i \gets x_t$ for all $i\in B$, and $\Agents'' \gets \Agents \setminus B$.
\Else
    \State $c'_{x_t} \gets c_{x_t}-|\agentsOf{F_{t-1}}|$ and $c'_{x_{t+1}}\gets |\agentsOf{F_{t}}|-c_{x_t}$. 
    \State Simulate $\alpha$-\AOLAlgTwo$((x_t, x_{t+1}), (c'_{x_t}, c'_{x_{t+1}}), \agentsOf{\{x_t\}})$. \label{line:two-agent-call}
    \State $B_t \gets$ agents in $\agentsOf{\{x_t\}}$ that $\alpha$-\AOLAlgTwo\  would assign to $x_t$.
    \State $\IndividualMatching'_i \gets x_t$ for all $i\in B_t\cup  \agentsOf{F_{t-1}}$, and $\Agents'' \gets \Agents \setminus (B_t \cup \agentsOf{F_{t-1}})$.
\EndIf
\State $c''_{x_t} \gets 0$ and $c''_x \gets c_x$ for all $x \in \Facilities\setminus \{x_t\}$.
\State $\matching''\gets \alpha$-\AOLAlgPolarized$(F, c'', N'')$.
\State Combine $\matching'$ with $\matching''$ into $\matching$.
\State \Return $\matching$
\end{algorithmic}
\end{algorithm}

\begin{theorem} \label{thm:polarized-match}
    The distortion of {\normalfont $\alpha$-\AOLAlgPolarized} is at most $\max\left\{ 1 + \frac{2}{\alpha}, \frac{3\alpha-1}{\alpha+1} \right\}$ for any demand-monotone instance. Specifically, for $\alpha=3$, this yields distortion $2$ for any demand-monotone instance.
\end{theorem}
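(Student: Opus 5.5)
By the symmetry of the line, reversing the facility order turns an under-demanded instance into an over-demanded one. I would therefore assume throughout that $F$ is over-demanded.

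\textbf{Structure of the output.} I would first show that the algorithm's output has the same ``left-to-right shape'' as an optimal matching $\OptMatching$ given by \cref{lem:line:optimal}. I would prove by induction on the recursion that, when $x_t$ is processed, the remaining instance is still over-demanded with respect to the remaining facilities $x_t,\dots,x_f$. The reason is that every agent removed so far has its top facility in $F_{t-1}$ or equals some of $\agentsOf{\{x_t\}}$, and exactly $\capOf{F_t}$ agents are removed in total. Because of over-demand, every agent with $\favorite_i = x_s$ for $s<t$ lies to the left of every facility $x_{s'}$ with $s'>s$. I would then argue that in $\OptMatching$ each facility $x_t$ receives, in this order, the leftmost remaining agents: first the leftover agents from $\agentsOf{F_{t-1}}$, which are to the left of everything else remaining, and then a prefix (in line order) of $\agentsOf{\{x_t\}}$. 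Here over-demand plays the role of the over-demanded facility $w$ in the proof of \cref{thm:distances+approvals+ordinal:two-locations:parameterized-upper}.

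\textbf{Pairing the differences.} Let $P_t$ be the set of agents that $\OptMatching$ assigns to $x_t$. Since $\pi_\Agents$ orders agents by top facility, $\matching$ and $\OptMatching$ assign the same number of agents to $x_t$ from each group $\agentsOf{\{x_s\}}$, except possibly within one group. Where they differ, I would decompose the difference into exchanges $(i,j)$ with $\IndividualMatching_i = \IndividualOpt_j$ and $\IndividualMatching_j = \IndividualOpt_i$. I expect each such exchange to involve two agents $i,j$ with the same top facility $x_s$, where $\OptMatching$ sends the left one to some $x_{t'}$ with $t'\le s$ and the right one to some $x_{t''}$ with $t''\ge s$, and $\matching$ swaps them.

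There are two cases.
\begin{itemize}
\item If $t'<s$, both agents lie to the right of $x_{t'}$, on the same side. I would then argue that the cost is unchanged, or that the agents are ordered consistently with their positions, since ties within a group for an earlier facility can be chosen to match the optimum. The remaining inconsistency is charged as in the first subcase of case 2 of \cref{thm:distances+approvals+ordinal:two-locations:parameterized-upper}.
\item If $t'=s$, this is exactly the choice made by the simulated \AOLAlgTwo\ between $x_s$ and $x_{s+1}$.
\end{itemize}
In the second case I would reuse the pairwise bound from the proof of \cref{thm:distances+approvals+ordinal:two-locations:parameterized-upper}. The agent $j$ that $\matching$ keeps at $x_s$ has $\favorite_j = x_s$, and the approval-based priority gives the same two cases as there. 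Case 1 uses only $d(j,\IndividualMatching_j) < d(j,\IndividualOpt_j)/\alpha$ together with the triangle inequality, so it works for an arbitrary $\IndividualOpt_j$. Case 2 needs the true destination of the right agent to be at least as far as $x_{s+1}$. In that situation, replacing $x_{s+1}$ by a farther facility $x_{t''}$ only adds the same distance to both $d(i,\IndividualMatching_i)$ and $d(j,\IndividualOpt_j)$, and adding the same quantity to numerator and denominator can only decrease the ratio. Summing over exchanges then gives the bound $\max\{1+2/\alpha,(3\alpha-1)/(\alpha+1)\}$, which equals $2$ at $\alpha=3$.

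\textbf{Main obstacle.} The hard step is the exchange decomposition. I must show that every discrepancy between $\matching$ and $\OptMatching$ can be paired so that each pair falls into the two-facility situation, with the agent pushed rightward by $\matching$ heading to a facility no closer than $x_{s+1}$. The fake capacity $c'_{x_{t+1}}$ in the simulation is designed to make this work. If a direct pairing fails, I would instead use the cycle decomposition from the $3$-cycle argument in \cref{thm:3Fdistortion} and reduce longer cycles to $2$-exchanges via the triangle inequality.
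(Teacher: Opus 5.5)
Your top-level plan (reduce to the over-demanded case, show that $\matching$ and $\OptMatching$ spread each group $\agentsOf{\{x_s\}}$ identically over the facilities, then reuse the two-facility analysis) matches the paper's. However, the step you flag as the main obstacle is a genuine gap, and the way you set it up does not close it.

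First, your case split is off. In an over-demanded instance every prefix $F_t$ is filled entirely by agents of $\agentsOf{F_t}$, so neither matching ever sends an agent of $\agentsOf{\{x_s\}}$ to the left of $x_s$. Your case $t'<s$ is therefore vacuous. The case you actually need is missing: both facilities of a discrepancy lie strictly right of $x_s$, hence to the right of both agents.

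Second, and more seriously, the discrepancies need not decompose into $2$-exchanges at all. The recursion hands the leftover agents of a group to $x_{s+1},x_{s+2},\dots$ in arbitrary $\pi_\Agents$-order. Take three unit-capacity facilities and agents $a,b,c$ (left to right) that all rank $x_1$ first. You can get $\OptMatching=(x_1,x_2,x_3)$ and $\matching=(x_2,x_3,x_1)$, which is a $3$-cycle. Your fallback does not rescue this. The additive $2d(i,k)$ in the $3$-cycle bound of \cref{thm:3Fdistortion} is small there only because of that instance's geometry (all agents in $[-r/4,1/2]$ with $r>2$). In general it can exceed $(\rho(\alpha)-1)$ times the optimal cost of the cycle, where $\rho(\alpha)=\max\{1+2/\alpha,(3\alpha-1)/(\alpha+1)\}$; this happens, e.g., when $a$ is far to the left of $x_1$. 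Chaining triangle inequalities along longer cycles loses similar additive terms.

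The missing observation makes pairing unnecessary, and it is exactly how the paper argues. Every agent of $\agentsOf{\{x_s\}}$ lies strictly left of $x_{s+1}$, and each such agent not placed at $x_s$ is placed weakly right of $x_{s+1}$. So $d(i,\IndividualMatching_i)=d(i,x_{s+1})+d(x_{s+1},\IndividualMatching_i)$, and likewise under $\OptMatching$. Both matchings send the same number of agents of the group to every facility; this holds for every group and every facility, not ``except possibly within one group''. Hence the two sums of the $d(x_{s+1},\cdot)$ terms coincide.

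So, up to a common additive term, the group's cost under either matching is its cost in the simulated two-facility instance with the fake capacity $c'_{x_{s+1}}$. \cref{thm:distances+approvals+ordinal:two-locations:parameterized-upper} then applies as a black box. You compare the simulated choice $B_s$ with the feasible alternative $O_s$, the set of group agents that $\OptMatching$ sends to $x_s$, and add the common term back using $\rho(\alpha)\ge 1$. When no agent of the group is placed at $x_s$ (because $|\agentsOf{F_{s-1}}|\ge c_{x_s}$), the same identity shows that the group's cost is exactly optimal.

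In your language, this identity lets you re-pair $\OptMatching$ at no cost so that it agrees with $\matching$ on every agent that both send to the right. What remains are $2$-exchanges between an agent of $O_s\setminus B_s$ and an agent of $B_s\setminus O_s$. The former is weakly to the left, since $O_s$ is a leftmost prefix of the group. Your extended pairwise argument handles exactly these.
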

\begin{proof}
    Since the algorithm transforms under-demanded instances into over-demanded ones, it suffices to prove the distortion upper bound for over-demanded instances. The $\alpha$-\AOLAlgPolarized\  algorithm iterates over the facilities from left to right, and assigns agents to them using the ordering $\pi_\Agents$, which also aims to prioritize agents that are further to the left. Specifically, for all $t\in \{1, 2, \dots, \NumberFacilities\}$, the algorithm does not assign any agent from $\agentsOf{\{x_t\}}$ before all agents in $\agentsOf{\Facilities_{t-1}}$ (if any) have been assigned. 

    Let $\OptMatching$ be an optimal matching that greedily matches the leftmost agent to the leftmost facility (its existence is guaranteed due to \cref{lem:line:optimal}). Since every agent in $\agentsOf{\{x_{t-1}\}}$ is to the left of any agent in $\agentsOf{\{x_t\}}$, like $\alpha$-\AOLAlgPolarized, $\OptMatching$ assigns agents in the order $\agentsOf{\{x_1\}}, \agentsOf{\{x_2\}}, \dots, \agentsOf{\{x_
    \NumberFacilities\}}$, but it also ensures that the agents \emph{within} each set $\agentsOf{\{x_t\}}$ are assigned from left to right. The fact that we cannot fully recover the order within these sets using $\DIST+\APP+\ORD$ is precisely the source of the distortion.

    A direct implication of the above is that both $\OptMatching$ and $\matching$ distribute each group $\agentsOf{\{x_t\}}$ over the same set of facilities. Specifically, for each group $\agentsOf{\{x_t\}}$, both $\OptMatching$ and $\matching$ assign exactly the same number of agents from that group to each facility in $\Facilities$, i.e.,
    \begin{equation}\label{eq:same_distribution}
        \forall t\in \{1, 2, \dots, \NumberFacilities\} ~~\text{and}~~ x\in F,  \qquad \left|\{i\in \agentsOf{\{x_t\}}:~\IndividualMatching_i = x\}\right| = \left|\{i\in \agentsOf{\{x_t\}}:~\IndividualOpt_i = x\}\right|.
    \end{equation}
    Therefore, in order to show that the total cost of a set $\agentsOf{\{x_t\}}$ in $\matching$ is not much higher than the total cost of $\agentsOf{\{x_t\}}$ in the optimal matching $\OptMatching$, it suffices to show that there is no alternative assignment \emph{using the exact same capacities as $\matching$} that performs much better. 
    In the rest of the proof, we will show that for each $t\in \{1, 2, \dots, \NumberFacilities\}$, we have
    \begin{equation}\label{eq:local_dist}
        \sum_{i\in \agentsOf{\{x_t\}}} d(i, \IndividualMatching_i) \leq \max\left\{ 1 + \frac{2}{\alpha}, \frac{3\alpha-1}{\alpha+1} \right\} \sum_{i\in \agentsOf{\{x_t\}}} d(i, \IndividualOpt_i).
    \end{equation}
    The statement will then follow by summing over all $t$. 

    Each call to \AOLAlgPolarized\  depletes the capacity of the leftmost non-depleted facility, $x_t$, and it is then called recursively in order to deplete the capacities of subsequent facilities. For any such $t\in \{1, 2, \dots, \NumberFacilities\}$ we consider two cases, depending on the relative size of $|\agentsOf{\Facilities_{t-1}}|$ and $c_{x_t}$:

    \medskip
    \noindent 
    {\bf (case 1) $|\agentsOf{\Facilities_{t-1}}|\geq c_{x_t}$.}
     Then, $\matching$ assigns all of the agents in $\agentsOf{\{x_t\}}$ to facilities that are weakly to the right of $x_{t+1}$. To verify this, consider the case where $c_{x_t}>0$, i.e., facility $x_t$ initially has positive capacity. Then, $x_t$ will be considered by the \AOLAlgPolarized\  algorithm in one of its calls. During that call, since $|\agentsOf{\Facilities_{t-1}}|\geq c_{x_t}$, the condition of~\cref{line:depleted} will be satisfied, so the capacity of $x_t$ will be fully depleted by agents from $\agentsOf{\Facilities_{t-1}}$. Therefore, no agents in $\agentsOf{\{x_t\}}$ will be assigned to $x_t$, and they will instead be assigned to some facility weakly to the right of $x_{t+1}$. Since these agents prefer $x_t$ over $x_{t+1}$, they are all strictly to the left of $x_{t+1}$, so it is easy to see that their total cost is not affected by who is matched to each facility. In fact, their total cost is the same as their optimal total cost in $\OptMatching$: 
    \begin{align*}
    \sum_{i\in \agentsOf{\{x_t\}}} d(i,\IndividualMatching_i) &= \sum_{i\in \agentsOf{\{x_t\}}} d(i, x_{t+1}) + \sum_{i\in \agentsOf{\{x_t\}}} d(x_{t+1},\IndividualMatching_i) \\
    &= \sum_{i\in \agentsOf{\{x_t\}}} d(i, x_{t+1}) + \sum_{i\in \agentsOf{\{x_t\}}} d(x_{t+1},\IndividualOpt_i) \\
    &= \sum_{i\in \agentsOf{\{x_t\}}} d(i,\IndividualOpt_i),
    \end{align*}
    where the first equality uses the fact that each $i\in \agentsOf{\{x_t\}}$ is to the left of $x_{t+1}$, while the facility $\IndividualMatching_i$ for each agent $i$ is weakly to the right of $x_{t+1}$, so $d(i,\IndividualMatching_i)=d(i,x_{t+1})+d(x_{t+1},\IndividualMatching_i)$. The second equation uses the fact that $\matching$ and $\OptMatching$ assign the agents to the same facilities (\cref{eq:same_distribution}), so the total distance of these assignments from $x_{t+1}$ is the same. Note that we have so far assumed that $c_{x_t}>0$, but the same argument applies for the case $c_{x_t}=0$, since the algorithm will not even consider $x_t$ and the agents in $\agentsOf{\{x_t\}}$ will, again, all be matched to facilities weakly to the right of $x_{t+1}$.

    \medskip
    \noindent 
    {\bf (case 2) $|\agentsOf{\Facilities_{t-1}}|< c_{x_t}$.}
    Then, $\matching$ assigns some of the agents in $\agentsOf{\{x_t\}}$ to $x_t$ and the rest of them (if any) to facilities weakly to the right of $x_{t+1}$. To verify this, note that when \AOLAlgPolarized\  considers $x_t$, since $|\agentsOf{\Facilities_{t-1}}|< c_{x_t}$, the condition of~\cref{line:depleted} will be violated, so $c_{x_t}-|\agentsOf{\Facilities_{t-1}}|$ agents from $\agentsOf{\{x_t\}}$ will be assigned to $x_t$ after simulating $\alpha$-\AOLAlgTwo\  in \cref{line:two-agent-call}. Any agents from $\agentsOf{\{x_t\}}$ that are not assigned to $x_t$ are assigned to facilities weakly to the right of $x_{t+1}$. However, the input provided to $\alpha$-\AOLAlgTwo\  assumes that any agents from $\agentsOf{\{x_t\}}$ that are not assigned to $x_t$ can all be assigned to $x_{t+1}$. Specifically, the capacity of $x_{t+1}$ is set to $c'_{x_{t+1}}\gets |\agentsOf{F_{t}}|-c_{x_t}$, which may be more than its true capacity. Nevertheless, the rest of the proof shows that the guidance provided by $\alpha$-\AOLAlgTwo\  allows us to achieve the desired distortion.
        
    Let $B_t$ and $B_{t+1}$ be the sets of agents from $\agentsOf{\{x_t\}}$ that the simulation from \cref{line:two-agent-call} would assign to facilities $x_t$ and $x_{t+1}$, respectively. Note that $B_t$ and $B_{t+1}$ form a partition of $\agentsOf{\{x_t\}}$. Also, let $O_t:=\{i\in \agentsOf{\{x_t\}}~|~ o_i = x_t\}$ be the set of agents that $\OptMatching$ assigns to $x_t$, and $O_{t+1}:= \agentsOf{\{x_t\}}\setminus O_t$ be all the other agents in $\agentsOf{\{x_t\}}$ that $\OptMatching$ assigns to facilities weakly to the right of $x_{t+1}$. Note that since $\matching$ and $\OptMatching$ assign the same number of agents from $\agentsOf{\{x_t\}}$ to each facility, then $|O_t|=|B_t|$ (and thus also $|O_{t+1}|=|B_{t+1}|$). If we let $\rho(\alpha):= \max\left\{ 1 + \frac{2}{\alpha}, \frac{3\alpha-1}{\alpha+1} \right\}$, then by \cref{thm:distances+approvals+ordinal:two-locations:parameterized-upper}, we get
    \begin{equation}\label{eq:approx_guarantee}
        \sum_{i\in B_t} d(i, x_t) + \sum_{i\in B_{t+1}} d(i, x_{t+1}) \leq \rho(\alpha)\cdot \left( \sum_{i\in O_t} d(i, x_t) + \sum_{i\in O_{t+1}} d(i, x_{t+1})\right).
    \end{equation}
    Since $\matching$ does assign $B_t$ to $x_t$, but assigns $B_{t+1}$ to facilities weakly (and maybe strictly) to the right of $x_{t+1}$ we can conclude that the total cost of $\matching$ is
    \begin{align*}
    \sum_{i\in \agentsOf{\{x_t\}}} d(i,\IndividualMatching_i) &= \sum_{i\in B_t} d(i, x_t) +
    \sum_{i\in B_{t+1}}d(i,\IndividualMatching_i)\\
    &=\sum_{i\in B_t} d(i, x_t) + \sum_{i\in B_{t+1}} d(i, x_{t+1}) + \sum_{i\in B_{t+1}} d(x_{t+1},\IndividualMatching_i) \\
    &\leq \rho(\alpha)\cdot \left( \sum_{i\in O_t} d(i, x_t) + \sum_{i\in O_{t+1}} d(i, x_{t+1})\right) + \sum_{i\in B_{t+1}} d(x_{t+1},\IndividualMatching_i) \\
     &=\rho(\alpha)\cdot \left( \sum_{i\in O_t} d(i, x_t) + \sum_{i\in O_{t+1}} d(i, x_{t+1})\right) + \sum_{i\in O_{t+1}} d(x_{t+1},\IndividualOpt_i) \\
    &\leq  \rho(\alpha)\cdot \sum_{i\in \agentsOf{\{x_t\}}} d(i,\IndividualOpt_i),
    \end{align*}
    where the second equation uses the fact that the agents in $B_{t+1}$ are to the left of $x_{t+1}$ but are matched to facilities $\IndividualMatching_i$ to the right of $x_{t+1}$, the inequality is due to \cref{eq:approx_guarantee}, and the subsequent equation uses the fact that, from \cref{eq:same_distribution}, both $\matching$ and $\OptMatching$ assign the agents to the same facilities weakly to the right of $x_{t+1}$, so $\sum_{i\in B_{t+1}} d(x_{t+1},\IndividualMatching_i)= \sum_{i\in O_{t+1}} d(x_{t+1},\IndividualOpt_i)$.
\end{proof}

\subsection{Upper Bound for General Instances}
\label{sec:distances+approvals+ordinal:general-upper}
We are now ready to present our main algorithm which works for arbitrary instances using $\DIST+\ORD+\APP$ information. This algorithm, \GenAlgorithm\ (\cref{alg:general-locations}), starts by identifying the longest prefix $F_t$ of $F$ that is demand-monotone with respect to the preferences of the agents in $N$. If $F_t$ is such that $|\agentsOf{\Facilities_t}| = \capOf{\Facilities_t}$, i.e., the total demand and total supply happen to be equal ($F_t$ could be either over-demanded or under-demanded), then the $\alpha$-\AOLAlgPolarized\  subroutine is called to determine an assignment $\matching'$ of the agents in $\agentsOf{F_t}$ to the facilities in $F_t$. If, on the other hand, this is not the case, then either $F_t$ is over-demanded and $|\agentsOf{\Facilities_t}| > \capOf{\Facilities_t}$, or it is under-demanded and $|\agentsOf{\Facilities_t}| < \capOf{\Facilities_t}$.

For the over-demanded case with $|\agentsOf{\Facilities_t}| > \capOf{\Facilities_t}$, \GenAlgorithm\ identifies the amount of excess demand $e$ in $F_t$. Note that, while $\Facilities_t$ is over-demanded, $\Facilities_{t+1}$ is not over-demanded (otherwise \GenAlgorithm\ would have chosen $\Facilities_{t+1}$ instead of $\Facilities_t$ in~\cref{step:maximality}). This implies that $|\agentsOf{\Facilities_{t+1}}| < \capOf{\Facilities_{t+1}}$, so the capacity of $x_{t+1}$ is more than $e$. Using the agents in $\agentsOf{\Facilities_t}$ and the full capacities of the facilities in $F_t$ along with just $e$ units of capacity from $x_{t+1}$, the algorithm creates a demand-monotone instance and calls $\alpha$-\AOLAlgPolarized\  to compute a matching $\matching'$ for it.

For the under-demanded case with $|\agentsOf{\Facilities_t}| < \capOf{\Facilities_t}$, \GenAlgorithm\ identifies the amount of excess supply $e$ in $F_t$. It first creates a demand-monotone instance by reducing the capacity of $x_t$ by $e$ and calls $\alpha$-\AOLAlgPolarized\  to determine a matching for this instance. Then, the algorithm needs to deal with the main complication that there remains unassigned capacity of $e$ at $x_t$ with no remaining demand in $\agentsOf{F_t}$. On the other hand, note that the demand in $\agentsOf{\{x_{t+1}\}}$ is more than enough to deplete both the remaining supply at $x_t$ as well as $x_{t+1}$. This is due to the fact that although $\Facilities_t$ is under-demanded, $\Facilities_{t+1}$ is not (otherwise \GenAlgorithm\ would have chosen $\Facilities_{t+1}$ instead of $\Facilities_t$ in~\cref{step:maximality}), which implies that $|\agentsOf{\Facilities_{t+1}}| > \capOf{\Facilities_{t+1}}$. In order to determine how to split the agents of $\agentsOf{\{x_{t+1}\}}$ between $x_t$, $ x_{t+1}$, and subsequent facilities, the algorithm simulates $\AOLAlgThreeSF$ with the true remaining capacities for $x_t$ and $x_{t+1}$, and a capacity for $x_{t+1}$ such that the total capacity equals the size $|\agentsOf{\{x_{t+1}\}}|$. This determines which agents will be assigned to $x_t$ and $x_{t+1}$, and that portion of the assignment is finalized; no assignment is made to $x_{t+2}$ yet. 

Finally, once the partial assignment $\matching'$ is determined as described above, \GenAlgorithm\ is called recursively on the remaining unassigned agents and capacities, leading to an assignment $\matching''$ that is combined with $\matching'$, and returned as $\matching$.

\begin{algorithm}[t]
\caption{\GenAlgorithm}
\label{alg:general-locations}
\begin{algorithmic}[1]
\Require Set of facilities $F$ ordered so that their locations satisfy $\ell_{x_1}<\dots<\ell_{x_f}$, capacities $(c_x)_{x\in F}$, and agents $\Agents$ that provide $\ORD$ and $\APP$ information with threshold $\alpha=3$.
\Ensure Matching $\matching$.
\State Find the largest $t$ such that the prefix $\Facilities_t = \{x_1,\dots,x_t\}$ is demand-monotone. \label{step:maximality}
\If{$|\agentsOf{\Facilities_t}| = \capOf{\Facilities_t}$} 
        \State $c'_x \gets c_x$ for each $x\in \Facilities_t$ and $c'_x \gets 0$ otherwise.
        \State $N'\gets \agentsOf{\Facilities_t}$.
        \State $\matching' \gets 3$-\AOLAlgPolarized$(F, c', N')$.
\ElsIf{$|\agentsOf{\Facilities_t}| > \capOf{\Facilities_t}$}
        \State $e\gets  |\agentsOf{\Facilities_t}| - \capOf{\Facilities_t}$ be the excess demand
        \State $c'_x \gets {c_x}$ for each $x\in \Facilities_t$, $c'_{x_{t+1}} \gets e$ for facility $x_{t+1}$, and $c'_x \gets 0$ otherwise.
        \State $N'\gets \agentsOf{\Facilities_t}$.
        \State $\matching' \gets 3$-\AOLAlgPolarized$(F, c', N')$.
\ElsIf{$|\agentsOf{\Facilities_t}| < \capOf{\Facilities_t}$}
        \State $e\gets  \capOf{\Facilities_t} - |\agentsOf{\Facilities_t}|$ be the excess supply (capacity)
        \State $N_1\gets \agentsOf{\Facilities_t}$.
        \State $c'_x \gets c_x$ for each $x\in \Facilities_{s-1}$, $c'_{x_t} \gets c_{x_t}-e$ for facility $x_t$, and $c_x \gets 0$ otherwise.
        \State $\matching' \gets 3$-\AOLAlgPolarized$(F, c', N')$.
        \State $N_2 \gets \agentsOf{\{x_{t+1}\}}$.
        \State $\tilde{c}_{x_t}\gets e$, $\tilde{c}_{x_{t+1}}\gets c_{x_{t+1}}$, $\tilde{c}_{x_{s+2}} \gets |\agentsOf{\{x_{t+1}\}}|- c_{x_{t+1}}-e$, and $\tilde{c}_x\gets 0$ otherwise.
        \State Simulate $\AOLAlgThreeSF((x_t, x_{t+1}, x_{t+2}), \tilde{c}, N_2)$.
        \State $B_t, B_{t+1} \gets$ agents in $N_2$ that \AOLAlgThreeSF\ would match to $x_t$ and $x_{t+1}$, resp. \label{line:3Fcall}
        \State $\IndividualMatching'_i \gets x_t$ for $i\in B_t$ and $\IndividualMatching'_i \gets x_{t+1}$ for $i\in B_{t+1}$.
        \State $c'_{x_t} \gets c'_{x_t} + \tilde{c}_{x_t}$ and  $c'_{x_{t+1}} \gets c'_{x_{t+1}} + \tilde{c}_{x_{t+1}}$.
        \State $N'\gets N_1\cup B_t \cup B_{t+1}$.
\EndIf
\State $N''\gets N\setminus N'$ and $c''_x \gets c_x - c'_x$ for each $x\in \Facilities$.
\State $\matching'' \gets \text{\GenAlgorithm}(F, c'', N'')$.
\State Combine $\matching'$ with $\matching''$ into $\matching$.
\State \Return $\matching$
\end{algorithmic}
\end{algorithm}

\begin{theorem}\label{thm:distances+approvals+ordinal:general-upper}
    The distortion of {\normalfont \GenAlgorithm} is at most $2$ for any instance.
\end{theorem}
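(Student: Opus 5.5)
The plan is to show that the matching $\matching$ returned by \GenAlgorithm\ agrees, group by group, with a left-to-right optimal matching $\OptMatching$ (given by \cref{lem:line:optimal}) on how many agents go to each facility. The cost of each group is then charged to one of the two subroutine guarantees: \cref{thm:polarized-match} with $\alpha=3$, or \cref{thm:3Fdistortion}.

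\textbf{Step 1: same distribution as $\OptMatching$.} I would argue by induction on the recursion depth. Each call of \GenAlgorithm\ finalizes a block of agents that is a union of consecutive groups $\agentsOf{\{x_s\}}$, together with a prefix of the remaining capacity in left-to-right order. In the over-demanded branch this block is $\agentsOf{\Facilities_t}$ with the capacities of $\Facilities_t$ plus $e$ units of $x_{t+1}$. In the under-demanded branch it is $\agentsOf{\Facilities_t}\cup B_t\cup B_{t+1}$. Because all agents of $\agentsOf{\{x_s\}}$ lie strictly between the midpoints adjacent to $x_s$, groups are ordered left to right. Hence $\OptMatching$ also fills facilities in order of these groups, and for every group and facility the two matchings use the same number of slots, as in \cref{eq:same_distribution}. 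One detail needs checking: in the under-demanded branch the $e$ spare units of $x_t$ must be taken by agents of $\agentsOf{\{x_{t+1}\}}$ in $\OptMatching$ as well. This holds because $|\agentsOf{\Facilities_{t+1}}|>\capOf{\Facilities_{t+1}}$, which is exactly what the maximality of $t$ in \cref{step:maximality} gives. The same maximality fact makes the capacities passed to the subroutines nonnegative and makes each sub-instance demand-monotone, respectively all-top-$x_{t+1}$.

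\textbf{Step 2: per-block bounds.} Within a block handled by \AOLAlgPolarized, I would reuse the decomposition from the proof of \cref{thm:polarized-match}. Agents sent beyond the block's last facility are charged through the identity $d(i,\IndividualMatching_i)=d(i,x_{t+1})+d(x_{t+1},\IndividualMatching_i)$, and by Step 1 the second term equals its counterpart in $\OptMatching$. This gives \cref{eq:local_dist} with $\rho(3)=2$.

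For the group $\agentsOf{\{x_{t+1}\}}$ in the under-demanded branch, all agents rank $x_{t+1}$ first and lie between $x_t$ and $x_{t+2}$. I would apply \cref{thm:3Fdistortion} to the three-facility instance with capacities $\tilde c$, after relabeling so that the farther of the two neighbors plays the role of $x$ in \AOLAlgThreeSF. Agents that the simulation assigns to $x_{t+2}$ actually go weakly to its right; they are handled by the same telescoping through $x_{t+2}$, using equal distributions. Summing over all groups then yields $\SC(\matching)\le 2\,\SC(\OptMatching)$.

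\textbf{Main obstacle.} The hardest part is Step 1 at the interfaces between recursive calls. A group $\agentsOf{\{x_{t+1}\}}$ is only partly consumed by $B_t\cup B_{t+1}$, and the leftover agents re-enter the next call. There they form a group whose top facility $x_{t+1}$ is already depleted. I need to check that the next call's prefix computation treats them correctly, as demand pushed to $x_{t+2}$, which is consistent with $\OptMatching$. I also need their costs to be charged only once, so the partial group's cost splits additively across the two analyses. I would handle this by treating depleted facilities as absorbing their residual demand into the next facility, and verifying that the telescoping identity through $x_{t+2}$ still applies to those leftover agents.
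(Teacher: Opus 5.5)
Your plan is the paper's proof: you establish \cref{eq:same_distribution} for a left-to-right optimum, charge the groups handled by \AOLAlgPolarized\ to \cref{thm:polarized-match} with $\alpha=3$, and charge the split group $\agentsOf{\{x_{t+1}\}}$ to \cref{thm:3Fdistortion}, telescoping through $x_{t+2}$. Your two additions are sound, and the paper glosses over both. First, the relabeling, so that the farther neighbour plays the role of $x$, is needed for \cref{thm:3Fdistortion} to apply. Second, the leftover set $B_{t+2}$ causes no trouble. In the next call the prefix $\Facilities_{t+1}$ has supply $0$ and positive demand, so that call's maximal monotone prefix is over-demanded and pushes $B_{t+2}$ to $x_{t+2}$ and beyond. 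The whole group is still charged once, via its final assignment.

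However, one assertion in your Step 1 is false, and the paper's proof silently shares it. Maximality of $t$ does \emph{not} make $c'_{x_t}=c_{x_t}-e$ nonnegative, so the $e$ spare units need not all sit at $x_t$. Maximality only gives $|\agentsOf{\Facilities_{t+1}}|>\capOf{\Facilities_{t+1}}$, which is a lower bound on $|\agentsOf{\{x_{t+1}\}}|$. By contrast, $c_{x_t}-e\ge 0$ is equivalent to $|\agentsOf{\Facilities_t}|\ge \capOf{\Facilities_{t-1}}$, and under-demand only provides $|\agentsOf{\Facilities_{t-1}}|\le\capOf{\Facilities_{t-1}}$.

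Concretely, take facilities $x_1,\dots,x_4$ (left to right) with capacities $5,5,1,11$. Let one agent rank $x_1$ first, one rank $x_2$ first, and twenty rank $x_3$ first. Then $t=2$ and the under-demanded branch runs with $e=8>c_{x_2}$, giving $c'_{x_2}=-3$. In the left-to-right optimum, $\agentsOf{\{x_3\}}$ fills $3$ slots of $x_1$ and $5$ of $x_2$.

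The fix mirrors, on the left, the telescoping you already use on the right:
\begin{itemize}
\item Remove the $e$ spare units from the right end of $\Facilities_t$, possibly across several facilities. The reduced prefix supplies $\min\{\capOf{\Facilities_{t'}},|\agentsOf{\Facilities_t}|\}$ still dominate the demands $|\agentsOf{\Facilities_{t'}}|$, so the \AOLAlgPolarized\ call stays valid.
\item Keep $\tilde c_{x_t}=e$ in the simulation of \AOLAlgThreeSF, and send $B_t$ to those freed slots.
\item Let $O_t$ be the agents of $\agentsOf{\{x_{t+1}\}}$ that $\OptMatching$ places weakly left of $x_t$.
\end{itemize}
Since all of $\agentsOf{\{x_{t+1}\}}$ lies to the right of $x_t$, we have $d(i,\IndividualMatching_i)=d(i,x_t)+d(x_t,\IndividualMatching_i)$ for $i\in B_t$. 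The extra terms match those of $\OptMatching$ by \cref{eq:same_distribution}. With this repair your argument, and the paper's, goes through.
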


\begin{proof}
    We first observe that, just like the \AOLAlgPolarized\ algorithm, \GenAlgorithm\ also iterates over facilities from left to right and assigns agents to these facilities using the ordering $\pi_\Agents$ from \cref{line:ordering} of \AOLAlgPolarized. This is because many of the assignments in \GenAlgorithm\ are actually determined by calling \AOLAlgPolarized\ (which explicitly uses $\pi_\Agents$) on demand-monotone prefixes of the general instance. In fact, the only exception is after \cref{line:3Fcall}, where \GenAlgorithm\ simulates the \AOLAlgThreeSF\  algorithm. However, even this simulation focuses on assigning the agents of $\agentsOf{\{x_{t+1}\}}$ only after all of the agents of $\agentsOf{\Facilities_t}$ have been assigned already, and before any agents from $\agentsOf{\{x_{t'}\}}$ with $t'>t$ have been assigned. As a result, just like we argued in the proof of \cref{thm:polarized-match} for \AOLAlgPolarized, the assignment $\matching$ returned by \GenAlgorithm\ allocates each set $\agentsOf{\{x_t\}}$ to the same set of facilities as the optimal matching $\OptMatching$ from~\cref{lem:line:optimal}, i.e., it satisfies \cref{eq:same_distribution}.
    
    Given this observation, note that, whenever \GenAlgorithm\ calls \AOLAlgPolarized, it provides as input the set of agents $\agentsOf{\Facilities_t}$ along with exactly the same set of facility capacities that this set of agents occupy in $\OptMatching$. The \AOLAlgPolarized\ algorithm then determines a matching of these agents to these facilities, and the (globally) optimal matching is available for the algorithm to choose (since the optimal facilities and corresponding capacities are provided as input). As a result, for every set $\agentsOf{\{x_t\}}$ whose matching is determined by calling \AOLAlgPolarized\  with $\alpha=3$, \cref{thm:polarized-match} implies that
    $$\sum_{i\in \agentsOf{\{x_t}\}} d(i, \IndividualMatching_i) \leq 2 \cdot \sum_{i\in \agentsOf{\{x_t}\}} d(i, \IndividualOpt_i),$$
    since $\max\left\{ 1 + \frac{2}{\alpha}, \frac{3\alpha-1}{\alpha+1} \right\}=2$ for $\alpha=3$.

    What remains to show is that distortion of $2$ is also guaranteed for sets $\agentsOf{\{x_t\}}$ whose matching is determined by simulating the \AOLAlgThreeSF\ algorithm in \cref{line:3Fcall}. 
    Let $B_t$, $B_{t+1}$, and $B_{t+2}$ be the agents from $\agentsOf{\{x_{t+1}\}}$ that the simulation of the \AOLAlgThreeSF\ algorithm in \cref{line:3Fcall} would assign to facilities $x_t$, $x_{t+1}$, and  $x_{t+2}$, respectively. Note that $B_t$, $B_{t+1}$, and $B_{t+2}$ is a partition of $\agentsOf{\{x_{t+1}\}}$. Let $O_t:=\{i\in \agentsOf{\{x_{t+1}\}}~|~ o_i = x_t\}$ and $O_{t+1}:=\{i\in \agentsOf{\{x_{t+1}\}}~|~ o_i = x_{t+1}\}$ be the sets of agents from $\agentsOf{\{x_{t+1}\}}$ that $\OptMatching$ assigns to $x_t$ and $x_{t+1}$, respectively. Also, let $O_{t+2}=\agentsOf{\{x_{t+1}\}}\setminus (O_t\cup O_{t+1})$ be all the other agents in $\agentsOf{\{x_{t+1}\}}$ that $\OptMatching$ assigns to facilities weakly to the right of $x_{t+2}$. Note that since $\matching$ and $\OptMatching$ assign the same number of agents from $\agentsOf{\{x_{t+1}\}}$ to each facility (\cref{eq:same_distribution}), then $|O_t|=|B_t|$, $|O_{t+1}|=|B_{t+1}|$ (and thus also $|O_{t+2}|=|B_{t+2}|$). By \cref{thm:3Fdistortion}, we get that the cost of the assignment recommended by the simulation of \AOLAlgThreeSF\ is no more than twice the cost of any other assignment, i.e., 
    \begin{align}
        &\sum_{i\in B_t} d(i, x_t) + \sum_{i\in B_{t+1}} d(i, x_{t+1}) +\sum_{i\in B_{t+2}} d(i, x_{t+2}) \nonumber \\
        &\leq 2\cdot \left( \sum_{i\in O_t} d(i, x_t) + \sum_{i\in O_{t+1}} d(i, x_{t+1})+\sum_{i\in O_{t+2}} d(i, x_{t+2})\right). \label{eq:approx_guarantee2}
    \end{align}
    Since $\matching$ does assign $B_t$ to $x_t$ and $B_{t+1}$ to $x_{t+1}$, but assigns $B_{t+2}$ to facilities weakly (and maybe strictly) to the right of $x_{t+2}$ we can conclude that the total cost of $\matching$ is
    \begin{align*}
    \sum_{i\in \agentsOf{\{x_{t+1}\}}} d(i,\IndividualMatching_i) &= \sum_{i\in B_t} d(i, x_t) + \sum_{i\in B_{t+1}} d(i, x_{t+1}) +\sum_{i\in B_{t+2}} d(i, \IndividualMatching_i)\\
    &=\sum_{i\in B_t} d(i, x_t) + \sum_{i\in B_{t+1}} d(i, x_{t+1})+ \sum_{i\in B_{t+2}} d(i, x_{t+2}) + \sum_{i\in B_{t+2}} d(x_{t+2},\IndividualMatching_i) \\
    &\leq 2\cdot \left( \sum_{i\in O_t} d(i, x_t) + \sum_{i\in O_{t+1}} d(i, x_{t+1})+\sum_{i\in O_{t+2}} d(i, x_{t+2})\right)+ \sum_{i\in B_{t+2}} d(x_{t+2},\IndividualMatching_i)\\
    &= 2\cdot \left( \sum_{i\in O_t} d(i, x_t) + \sum_{i\in O_{t+1}} d(i, x_{t+1})+\sum_{i\in O_{t+2}} d(i, x_{t+2})\right)+ \sum_{i\in O_{t+2}} d(x_{t+2},\IndividualOpt_i)\\
    &= 2\cdot \sum_{i\in \agentsOf{\{x_{t+1}\}}} d(i,\IndividualOpt_i),
    \end{align*}
    where the second equation uses the fact that the agents in $B_{t+2}$ are to the left of $x_{t+2}$ but are matched to facilities $\IndividualMatching_i$ weakly to the right of $x_{t+2}$, the inequality is due to \cref{eq:approx_guarantee2}, and the subsequent equation uses the fact that, from \cref{eq:same_distribution}, both $\matching$ and $\OptMatching$ assign  agents to the same facilities weakly to the right of $x_{t+1}$, so $\sum_{i\in B_{t+2}} d(x_{t+2},\IndividualMatching_i)= \sum_{i\in O_{t+2}} d(x_{t+2},\IndividualOpt_i)$.   
\end{proof}

\subsection{Tight Lower Bound}\label{sec:distances+approvals+ordinal:lower}
We now present a lower bound of $2$ for all algorithms that use all three types of information, showing that our algorithm is best possible within this class.  

\begin{theorem}\label{thm:distances+approvals+ordinal:three-locations:lower}
The distortion of any algorithm that uses $\DIST + \APP + \ORD$ information is at least $2$, even when there are $f=3$ facilities. 
\end{theorem}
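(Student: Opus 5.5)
The plan is a case analysis on the threshold $\alpha$ used by the algorithm, as in the proof of \cref{thm:distances+approvals+ordinal:two-locations:lower}. For large $\alpha$ there is nothing new to do: \cref{lem:distances+approvals+ordinal:lower:large-alpha} gives a lower bound of $\frac{3\alpha-1}{\alpha+1}$, which is at least $2$ exactly when $\alpha \geq 3$.

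\textbf{Small $\alpha$ ($\alpha \le 2$).} Here I would reuse the second instance from the proof of \cref{thm:distances+approvals+ordinal:two-locations:lower}. There are two agents approving only $x$ at $0$, located at $0$ and at $\frac{1}{1+\alpha}-\varepsilon$, with $y$ at $1$; this gives distortion $1+\frac{2}{\alpha} \geq 2$. To have $f=3$, I would add a third facility very far away, together with one agent co-located with it. That agent's information is unique, so the algorithm can be assumed to match it to the far facility (otherwise the cost is already unbounded). The far facility also does not change the preferences or approval sets of the original two agents.

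\textbf{Middle range ($2<\alpha<3$).} This is the genuinely three-facility part and the main obstacle. Neither two-facility gadget reaches $2$ here, so the third facility must be used to make agents in far-apart regions report identical $\ORD+\APP$ information. I would mirror the hard case of \cref{thm:3Fdistortion} for $\ratio>2$, which is exactly where the $\AOLAlgThreeSF$ analysis could not tell the sub-buckets $\Agents_z^{2,L}$ and $\Agents_z^{2,R}$ apart. Concretely, put $y$ at $0$, $z$ at $1$, and $x$ at $-\ratio$ with $\ratio$ a free parameter. All agents report $y \succ z \succ x$. I would then choose approval sets that can arise both on the left of $y$ (at distance roughly between $\frac{1}{\alpha-1}$ and the relevant approval boundary for $x$) and on the right of $y$ (near $\frac{1}{\alpha+1}$ from $y$).

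Unlike the two-agent gadgets, I expect to need several groups of agents with identical reports and capacities on all three facilities, including $x$. The intent is that the algorithm must split an indistinguishable group among $x$, $y$, $z$, while an adversary places the group's members on whichever side makes the split costly. Each group should be charged with a cross-assignment cost that shows up in the swap bounds of \cref{lem:distances+approvals+ordinal:three:bucket-swap-bound} and in the 3-cycle argument. The reason for using $x$ is that sending a left-side agent to $z$ and a right-side agent to $x$ loses roughly twice the width of the occupied interval. With $\ratio$ tuned as a function of $\alpha$, I would aim to make this loss equal to the optimal cost.

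The key computation is to pick $\ratio(\alpha)$ and the group sizes so that the worst-case ratio is at least $2$ for every $\alpha \in (2,3)$. This may require splitting the interval further, for example into $\alpha \le \frac{5}{2}$ and $\alpha > \frac{5}{2}$, with a different instance on each piece. I would first try capacities $c_x=c_z=1$ and $c_y=n-2$, with $n\to\infty$, and check consistency of every placement with the reported approval sets. Combining the three ranges then gives a lower bound of $2$ for every $\alpha \geq 1$.
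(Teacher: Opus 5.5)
Your handling of $\alpha\ge 3$ (via \cref{lem:distances+approvals+ordinal:lower:large-alpha}) and of $\alpha\le 2$ (the padded two-facility gadget) is sound. The gap is the range $2<\alpha<3$, and that is where the content of the theorem lies. For $\alpha=1+\sqrt2$, \cref{cor:distances+approvals+ordinal:two-locations:optimized-upper} shows that no two-facility instance whatsoever can force distortion $2$. In this range you give only a plan: no instance, no consistency check, no cost computation. The one concrete choice you propose also points the wrong way. With $c_x=c_z=1$, $c_y=n-2$ and $n\to\infty$, at most two agents can be cross-assigned between the outer facilities, so the extra cost is $O(1)$. Meanwhile the $\Theta(n)$ agents that both matchings send to $y$ add the same amount to both costs. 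If they report $A_i=\{y,z\}$ like the agents being confused, each adds at least $\frac{1}{\alpha+1}$, since $d(i,z)\le \alpha\, d(i,y)$ and $d(i,y)+d(i,z)\ge 1$. So the ratio tends to $1$. The loss from indistinguishability has to be suffered by a constant fraction of the agents.

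The missing construction uses the opposite capacities. In your coordinates, take $\ratio=\frac{\alpha+1}{\alpha-1}+\varepsilon$, $c_x=c_z=\frac{n-1}{2}$ and $c_y=1$, and let every agent report $y\succ z\succ x$ and $A_i=\{y,z\}$. Since all reports coincide, you may place agents after the algorithm commits. Put the agents it sends to $z$ at $-\frac{1}{\alpha-1}$, where $z$ is exactly approved and $x$ just not. Put all other agents at $\frac12$ (or just left of it), the far end of the feasible right-hand interval $[\frac{1}{\alpha+1},\frac12]$ rather than near $\frac{1}{\alpha+1}$. There $x$ is not approved, since $\ratio+\frac12>\frac{\alpha}{2}$. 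For each pair of a $z$-agent and an $x$-agent, the algorithm pays $\frac{\alpha}{\alpha-1}+\ratio+\frac12\to\frac{5\alpha+1}{2(\alpha-1)}$. The left-to-right optimum pays about $\frac{\alpha}{\alpha-1}+\frac12=\frac{3\alpha-1}{2(\alpha-1)}$. As $n\to\infty$ the distortion therefore tends to $\frac{5\alpha+1}{3\alpha-1}$, which exceeds $2$ exactly when $\alpha<3$. This is the paper's instance after an affine change of coordinates. It handles every $\alpha\in(1,3)$ at once, so your small-$\alpha$ case is needed only at the degenerate value $\alpha=1$, which the paper's unscaled version also covers.
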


\begin{proof}
If $\alpha \geq 3$, then, by \cref{lem:distances+approvals+ordinal:lower:large-alpha}, the distortion is at least 
\begin{align*}
    \frac{3\alpha-1}{\alpha+1} \geq 2.
\end{align*}
So, it suffices to consider the case $\alpha < 3$.
Let $\varepsilon > 0$ be an infinitesimal and consider the following instance: 
\begin{itemize}
    \item There is a facility $L$ with capacity $c_L = (n-1)/2$ at $-\alpha-\varepsilon$;
    \item There is a unit-capacity facility $x$ at $1$;
    \item There is a facility $R$ with capacity $c_R = (n-1)/2$ at $\alpha$.
\end{itemize}
All $n$ agents have the same ordinal preference such that $x$ is their most-preferred item, followed by $R$, and then $L$; that is, $x \succ_i R \succ_i L$ for any $i$. In addition, all agents approve $x$ and $R$, that is, $A_i=\{x, R\}$ for any $i$. 
Consider any matching $\matching$ that might be computed by an arbitrary algorithm. We place the agents on the line as follows (see \cref{fig:distances+approvals+ordinal:three-locations:lower}).

\begin{figure}[t]
    \centering
    \begin{tikzpicture}[scale=1]
        \draw[thick] (-4,0) -- (4,0);

        \draw[thick,fill=black] (-4,0) circle (0.08);
        \node[below] at (-4, -0.1) {$-(\alpha+\varepsilon)$};
        \node[above] at (-4, 0.1) {$L$};

        \draw[thick] (0,0.1) -- (0, -0.1) node[below] {$0$};
        \node[above] at (0, 0.1) {$N_L$};

        \draw[thick,fill=black] (1.6,0) circle (0.08);
        \node[below] at (1.6, -0.1) {$1$};
        \node[above] at (1.6, 0.1) {$x$};

        \draw[thick] (2.8,0.1) -- (2.8, -0.1) node[below] {$\frac{\alpha+1}{2}$};
        \node[above] at (2.8, 0.1) {$N_R$};

        \draw[thick,fill=black] (4,0) circle (0.08);
        \node[below] at (4, -0.1) {$\alpha$};
        \node[above] at (4, 0.1) {$R$};
    \end{tikzpicture}
\caption{The metric space considered in the proof of \cref{thm:distances+approvals+ordinal:three-locations:lower}.}
    \label{fig:distances+approvals+ordinal:three-locations:lower}
\end{figure}

\begin{itemize}
    \item The agents that are matched to $R$ are located at $0$; let $N_L$ be the set of these agents. 
    Observe that for any $i\in N_L$, we have that
    \begin{align*}
    & d(i,L) = \alpha+\varepsilon; \\
    & d(i,x) = 1; \\
    & d(i,R) = \alpha.
    \end{align*}
    Hence, this position of $i$ is consistent to the agent's ordinal and approval preferences.
    
    \item The agents that are matched to $x$ and $L$ are located at $\frac{\alpha+1}{2}$; let $N_R$ be the set of these agents. 
    For any $i \in N_R$, we have that
    \begin{align*}
    & d(i,L) = \frac{3\alpha+1}{2}+\varepsilon; \\
    & d(i,x)= \frac{\alpha-1}{2}; \\
    & d(i,R)= \frac{\alpha-1}{2}.
    \end{align*}
    Hence, this position of $i$ is consistent to the agent's ordinal and approval preferences. 
\end{itemize}
The social cost of the computed matching is 
\begin{align*}
    \SC(\matching) 
    &= \frac{n-1}{2} \cdot \alpha + \frac{\alpha-1}{2} + \frac{n-1}{2} \cdot \left( \frac{3\alpha+1}{2}+ \varepsilon \right) \\
    &\geq \frac{n-1}{2} \cdot \frac{5\alpha+1}{2}.
\end{align*}
On the other hand, the optimal matching $\OptMatching$ is to match the $(n-1)/2$ the agents in $N_L$ to $L$, one of the agents in $N_R$ to $x$, and the remaining $(n-1)/2$ agents in $N_R$ to $R$. The optimal social cost is 
\begin{align*}
    \SC(\OptMatching) 
    &= \frac{n-1}{2} \cdot (\alpha+\varepsilon) + \left( \frac{n-1}{2}+1 \right) \cdot \frac{\alpha-1}{2} \\
    &\leq \frac{n+1}{2} \cdot \frac{3\alpha-1}{2}. 
\end{align*}
As $n$ tends to infinity, and since $\alpha <3$, the distortion becomes 
\begin{align*}
    \frac{5\alpha+1}{3\alpha-1} > 2,
\end{align*}
and the proof is complete.
\end{proof}

\section{A Note on the Necessity of $\DIST$ Information}
\label{sec:approvals+ordinal:>=4facilities:lower:3}
As we saw in the previous sections, for instances with an arbitrary number of facilities, it is possible to improve upon the best possible bound of $3$ that can be achieved by purely ordinal algorithms using different or additional information. In \cref{sec:distances+approvals}, we showed that a distortion of $1+\sqrt{2}$ can be achieved with $\DIST+\APP$ information, and combining these two types of information is necessary as, on their own, they do not lead to any meaningful distortion bound. In \cref{sec:>=3-facilities:distances+approvals+ordinal}, we showed that an even further improved distortion bound of $2$ can be achieved using $\DIST+\APP+\ORD$ information. Naturally, one might wonder what distortion bounds are possible using $\APP+\ORD$ information, a combination that we have not discussed much, with the exception of instances with two facilities in \cref{sec:distances+approvals+ordinal:two}. Unfortunately, as we show here, when there are at least four facilities, it is impossible to achieve a distortion better than $3$ using only $\APP+\ORD$ information. This implies that the additional $\APP$ information is not useful, and $\DIST$ information is also necessary to obtain an improvement; in fact, as additional information, we only need the ordering of the facilities (rather than their exact pairwise distances), and this ordering cannot be computed using only $\APP+\ORD$. 

\begin{theorem}\label{thm:approvals+ordinal:four:lower}
    The distortion of any matching algorithm that uses $\ORD+\APP$ information is at least $3$, even when there are $\NumberFacilities=4$ facilities.
\end{theorem}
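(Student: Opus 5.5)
The plan is to exploit the one thing $\ORD+\APP$ cannot reveal, namely the left-to-right order of the facilities. We add auxiliary "decoy" facilities placed essentially on top of the agents, so that each agent's approval set shrinks to $\{\favorite_i\}$ no matter which $\alpha$ is used. This makes the approval information carry nothing beyond the top choice, and we can then run the classical two-agent lower bound of $3$ for ordinal algorithms on the line.

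\textbf{Construction.} Take $n=2$ agents and four facilities: $x$ and $y$ with unit capacity, and two decoys $z,w$ with capacity $0$. The decoys only shape the preference profile; if zero capacities are disallowed, I would replace this by a large mass of dummy agents co-located with each decoy, so that every decoy is saturated in any matching. The target metric is as follows.
\begin{itemize}
\item $x$ is at $0$, $y$ is at $2$, and agent $a$ sits at (about) $1$, the midpoint. Then $d(a,x)=1$ and $d(a,y)=2$ after a tiny perturbation toward $x$, so $a$ prefers $x$ to $y$.
\item Agent $b$ sits at $\eta$, essentially at $x$.
\item Decoy $z$ is at $a$'s location up to $\delta$, and decoy $w$ is at $-\eta'$, just left of $x$ and close to $b$.
\end{itemize}
The infinitesimals satisfy $\delta,\eta,\eta'\to 0$, tuned so that each agent's top facility is strictly its own decoy at distance $\delta$ or $\eta+\eta'$. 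Then $\alpha$ times that distance is still below the distance to the next facility, so $A_a=\{z\}$ and $A_b=\{w\}$ for every fixed $\alpha$. The rankings become
\[
a:\ z\succ x\succ w\succ y, \qquad b:\ w\succ x\succ z\succ y .
\]
The key point is that these two reports are mirror images under the relabeling $z\leftrightarrow w$.

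\textbf{Argument.} The algorithm sees two agents whose information is identical up to renaming the two zero-capacity decoys, and it must send one of them to $y$. Call that agent $b$ and the other $a$. Because the instance is symmetric under $z\leftrightarrow w$, I can choose the embedding above with these roles, possibly after swapping the decoy positions; the reported profile is then still consistent. The algorithm's cost is $d(a,x)+d(b,y)\to 1+2=3$. The optimum sends $a$ to $y$ and $b$ to $x$ for a cost of about $1+0=1$, so the distortion tends to $3$.

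\textbf{Main obstacle.} The delicate step is the consistency check: the four-facility rankings, including the exact position of the far decoy and of $y$ in each ranking, must be realizable in both relabeled embeddings while the approval sets stay singletons uniformly in $\alpha$. Equally important, the construction must use four facilities essentially, since with three the order is partially recoverable, which matches the positive results under $\DIST$. I would first verify the two rankings explicitly with concrete infinitesimals and confirm that swapping $z$ and $w$ yields literally the same multiset of reports. If ties such as $d(a,w)\approx d(a,x)$ cause trouble, I would nudge $w$ slightly further left and recheck that it stays ahead of $y$ for agent $a$. If the zero-capacity issue arises, I would verify that the dummy-agent fix does not change which agent receives $y$.
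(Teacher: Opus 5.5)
Your idea is right, and once repaired it is exactly the paper's instance. Rename your $z,x,w,y$ as $x,y,z,w$. Your two reports then become the paper's $x\succ y\succ z\succ w$ and $z\succ y\succ x\succ w$, with singleton approval sets. Up to reflection, your two embeddings (one type sitting on $x$, the other at the midpoint) are the paper's two metrics, in which the common second choice sits next to one group or the other.

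As written, however, the consistency step you flagged fails. You place $b$ at $\eta>0$, $x$ at $0$ and $w$ at $-\eta'<0$, so $x$ lies between $b$ and $w$. Hence $d(b,x)=\eta<\eta+\eta'=d(b,w)$, and $b$'s top choice is $x$, not $w$ as you claim. Then $b$'s report no longer mirrors $a$'s, and the $z\leftrightarrow w$ relabeling argument has nothing to act on. The fix is to put $b$ on top of $w$: say $b,w$ at $-\eta$ and $a,z$ at $1-\tau$. Co-location gives $A_a=\{z\}$ and $A_b=\{w\}$ exactly, for every $\alpha$. The rankings are then as you claim, in both the original and the swapped embedding, provided $\eta<2\tau$. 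Also, $d(a,y)\approx 1$, not $2$; your final computation correctly uses $1$.

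The second issue is the capacity-$0$ decoys. It is unclear whether the model ($c_x\in\mathbb{N}$) allows them, and a facility that can never be used makes the ``four facilities'' claim hollow. So the fallback should be your actual proof, but the check you plan for it is the wrong one. A dummy co-located with $z$ submits exactly the same report as $a$, and likewise for $w$ and $b$. So the algorithm cannot tell $a$ from the dummies, ``which agent receives $y$'' is not meaningful, and the algorithm may also send agents to the far decoy. What you need is that every output costs about $3$ in one of the two embeddings.

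One dummy per decoy suffices, giving four unit-capacity facilities and four agents. Split into cases on the type of the agent matched to $x$. Choose the embedding in which that type sits at the midpoint and the other type sits on $x$. That agent pays about $1$, and the remaining three agents cost at least about $2$ in any assignment to $y,z,w$. The optimum is about $1$. This is precisely the paper's case split on who receives the common second choice, and with it your argument is complete.
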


\begin{proof}
    Consider an instance with four unit-capacity facilities $\{x, y, z, w\}$, and four agents. Two of them have ranking $x\succ y \succ z \succ w$, while the other two have $z \succ y \succ x \succ w$. Every agent reports only their top-ranked facility in their approval set, irrespective of the choice of $\alpha$.

    For any algorithm that assigns an agent from the second pair to facility $y$, consider a metric where the facilities $x, y, z$, and $w$ are located at $0, \varepsilon, 1$, and $2+\varepsilon$, respectively, for some arbitrarily small $\varepsilon>0$. If the first pair of agents is located at $0$ and the second pair is located at $1$, the $\ORD+\APP$ information they would provide is consistent with what we described above. The optimal assignment would match the first two agents to $x$ and $y$ and the other two agents to $z$ and $w$, leading to a social cost of $1+2\varepsilon$. However, any algorithm that matches an agent from the first pair to $y$ is forced to suffer a social cost of at least $3$ (a cost of $1-\varepsilon$ to connect to $y$, and a total cost of $2+\varepsilon$ to connect to $z$ and $w$). This leads to a distortion of at least $3/(1+2\varepsilon)$.

    Similarly, for any algorithm that assigns an agent from the first pair to facility $y$, consider a metric where the facilities $x, y, z$, and $w$ are located at $0, 1-\varepsilon, 1$, and $-1$, respectively. If the agents are at the same locations as before, the $\ORD+\APP$ information they would provide would again be consistent with what was described above. This time, the optimal assignment would match the first two agents to $x$ and $z$, and the other two agents to $y$ and $w$, leading to a social cost of $1+2\varepsilon$. However, any algorithm that matches an agent from the first pair to $y$ is forced to suffer a social cost of $3$ (a cost of $1-\varepsilon$ to connect to $y$, and a total cost of $2+\varepsilon$ to connect to $x$ and $w$). This once again leads to a distortion of at least $3/(1+2\varepsilon)$. Since $\varepsilon$ can be arbitrarily small, this concludes the proof.
\end{proof}

Interestingly, the above lower bound requires instances with four facilities, and leaves open the possibility that, using only $\APP+\ORD$ information, a distortion bound smaller than $3$ might be achievable for instances with three facilities. Showing this is, however, a quite challenging task and a question that our work leaves open. 

\section{Conclusion and Future Directions}
We studied a fundamental metric facility assignment problem and explored how the performance of deterministic algorithms depends on the partial information available regarding the underlying metric. We established tight distortion bounds for algorithms using different combinations of ordinal preferences, approval preferences, and inter-facility distances. With few exceptions, our results apply to the line metric. A natural open question is therefore to understand the power of these additional information types beyond the line, and in particular whether they can be used to improve the best known bounds for deterministic or randomized metric matching in general metrics. Another promising direction is adaptive information elicitation, such as allowing the algorithm to choose the approval parameter $\alpha$ after observing the ordinal preferences and/or the inter-facility distances. Finally, our framework could be extended to other social objectives. While we focused on the standard social cost, it would be interesting to study the same informational models for the $k$-centrum objective and other fairness-inspired objectives, as has been done for ordinal algorithms~\citep{Filos-RatsikasG25,hastings2025fairmetricdistortionmatching}.

\bibliographystyle{plainnat}
\bibliography{references}

\end{document}